\crefname{rule}{Rule}{Rules}
\tikzstyle{arrow}+=[thick,rounded corners=0.5em]
\tikzstyle{every picture}+=[remember picture,baseline]
\renewcommand\subsubsection{\@startsection{subsubsection}{3}{\z@}%
                       {-18\p@ \@plus -4\p@ \@minus -4\p@}%
                       {0.5em \@plus 0.22em \@minus 0.1em}%
                       {\normalfont\normalsize\bfseries\boldmath}}
\begin{document}

\title{Infinitary and Cyclic Proof Systems for Transitive Closure Logic}
\author{Liron Cohen\inst{1} \and Reuben N.~S.~Rowe\inst{2}}
\institute{
  Dept.~of Computer Science, Cornell University, NY, USA, \email{lironcohen@cornell.edu}
    \and
  School of Computing, University of Kent, Canterbury, UK, \email{r.n.s.rowe@kent.ac.uk}
}

\maketitle

\begin{abstract}
Transitive closure logic is a known extension of first-order logic obtained by introducing a transitive closure operator. While other extensions of first-order logic with inductive definitions are a priori parametrized by a set of inductive definitions, the addition of the transitive closure operator uniformly captures all finitary inductive definitions.
In this paper we present an \emph{infinitary} proof system for transitive closure logic which is an \emph{infinite descent}-style counterpart to the existing (explicit induction) proof system for the logic. We show that, as for similar systems for first-order logic with inductive definitions, our infinitary system is complete for the standard semantics and subsumes the explicit system. Moreover, the uniformity of the transitive closure operator allows semantically meaningful complete restrictions to be defined using simple syntactic criteria. Consequently, the restriction to regular infinitary (i.e.~\emph{cyclic}) proofs provides the basis for an effective system for automating inductive reasoning.
\end{abstract}

\section{Introduction}

A core technique in mathematical reasoning is that of \emph{induction}. This is especially true in computer science, where it plays a central role in reasoning about recursive data and computations. Formal systems for mathematical reasoning usually capture the notion of inductive reasoning via one or more inference rules that express the general induction schemes, or principles, that hold for the elements being reasoned over. 

Increasingly, we are concerned with not only being able to formalise as much mathematical reasoning as possible, but also with doing so in an effective way. In other words, we seek to be able to automate such reasoning. \emph{Transitive closure} ({\TC}) logic has been identified as a potential candidate for a minimal, `most general' system for inductive reasoning, which is also very suitable for automation \cite{AvronTC03,Cohen2014AL,cohen2015middle}. {\TC} adds to first-order logic a single  operator for forming binary relations: specifically, the transitive closures of arbitrary formulas (more precisely, the transitive closure of the binary relation induced by a formula with respect to two distinct variables). 
In this work, for simplicity, we use a reflexive form of the operator; however the two forms are equivalent in the presence of equality. 
This modest addition affords enormous expressive power: namely it provides a uniform way of capturing inductive principles. If an induction scheme is expressed by a formula $\varphi$, then the elements of the inductive collection it defines are those `reachable' from the base elements $x$ via the iteration of the induction scheme. That is, those $y$'s for which $(x, y)$ is in the transitive closure of $\varphi$. Thus, bespoke induction principles do not need to be added to, or embedded within, the logic; instead, all induction schemes are available within a single, unified language. In this respect, the transitive closure operator resembles the W-type \cite{martin1984intuitionistic}, which also provides a single type constructor from which one can uniformly define a variety of inductive types.

{\TC} logic is intermediate between first- and second-order logic. Furthermore, since the {\TC} operator is a particular instance of a least fixed point operator, {\TC} logic is also subsumed by fixed-point logics such as the $\mu$-calculus \cite{kashima2008general}. %
\lcnote{also cite Kozen. Results on the propositional mu-calculus.}%
However, despite its minimality {\TC} logic retains enough expressivity to capture inductive reasoning, as well as to subsume arithmetics (see \cref{arithmetics}). Moreover, from a proof theoretical perspective the conciseness of the logic makes it of particular interest. The use of only one constructor of course comes with a price: namely, formalizations (mostly of non-linear induction schemes) may be somewhat complex. However, they generally do not require as complex an encoding as in arithmetics, since the {\TC} operator can be applied on any formula and thus (depending on the underlying signature) more naturally encode induction on sets more complex than the natural numbers. %
\lcnote{do you think the par is helpful or is it too vague? }%
\rrnote{I think this is fine - I have tweaked it a little. }%

Since its expressiveness entails that {\TC} logic subsumes arithmetics, by G\"{o}del's result, any effective proof system for it must necessarily be incomplete for the standard semantics. Notwithstanding, a natural, effective proof system which is sound for {\TC} logic was shown to be complete with respect to a generalized form of Henkin semantics \cite{Cohen2017Henkin}.
In this paper, following similar developments in other formalizations for fixed point logics and inductive reasoning (see e.g.~\cite{Brotherston07,BrotherstonBC08,brotherston2010sequent,Santocanale2002,Sprenger2003}), we present an \emph{infinitary} proof theory for {\TC} logic which, as far as we know, is the first system that is (cut-free) complete with respect to the standard semantics.
More specifically, our system employs infinite-height, rather than infinite-width proofs (see \cref{sec:ProofSystems:Infinitary}).
The soundness of such infinitary proof theories is underpinned by the principle of \emph{infinite descent}:
proofs are permitted to be infinite, non-well-founded trees, but subject to the restriction that every infinite path in the proof admits some infinite descent.
The descent is witnessed by tracing terms or formulas for which we can give a correspondence with elements of a well-founded set.
In particular, we can trace terms that denote elements of an inductively defined (well-founded) set.
For this reason, such theories are considered systems of implicit induction, as opposed to those which employ explicit rules for applying induction principles.
While a full infinitary proof theory is clearly not effective, in the aforementioned sense, such a system can be obtained by restricting consideration to only the \emph{regular} infinite proofs. These are precisely those proofs that can be finitely represented as (possibly cyclic) graphs.

These infinitary proof theories generally subsume systems of explicit induction in expressive power, but also offer a number of advantages. 
Most notably, they can ameliorate the primary challenge for inductive reasoning: finding an induction \emph{invariant}. 
In explicit induction systems, this must be provided \emph{a priori}, and is often much stronger than the goal one is ultimately interested in proving.
However, in implicit systems the inductive arguments and hypotheses may be encoded in the cycles of a proof, so cyclic proof systems seem better for automation.
The cyclic approach has also been used to provide an optimal cut-free complete proof system for Kleene algebra \cite{DasP17}, providing further evidence of its utility for automation.

In the setting of {\TC} logic, we observe some further benefits over more traditional formal systems of inductive definitions and their infinitary proof theories (cf.~{\lkid} \cite{brotherston2010sequent,Martin-Lof71}). {\TC} (with a pairing function) has all first-order definable finitary inductive definitions immediately `available' within the language of the logic: as with inductive hypotheses, one does not need to `know' in advance which induction schemes will be required. Moreover, the use of a single transitive closure operator provides a uniform treatment of all induction schemes. That is, instead of having a proof system parameterized by a set of inductive predicates and rules for them (as is the case in {\lkid}), {\TC} offers a single proof system with a single rule scheme for induction. This has immediate advantages for developing the metatheory: the proofs of completeness for standard semantics and adequacy (i.e.~subsumption of explicit induction) for the infinitary system presented in this paper are simpler and more straightforward. 
Moreover, it permits a cyclic subsystem, which also subsumes explicit induction, to be defined via a simple syntactic criterion that we call \emph{normality}. The smaller search space of possible proofs further enhances the potential for automation.
{\TC} logic seems more expressive in other ways, too. For instance, the transitive closure operator may be applied to arbitrarily complex formulas, 
not only to collections of atomic formulas (cf.~Horn clauses), as in e.g. \cite{Brotherston07,brotherston2010sequent}.

We show that the explicit and cyclic {\TC} systems are equivalent under arithmetic, as is the case for {\lkid} \cite{Berardi2017b,Simpson2017}. However, there are cases in which the cyclic system for {\lkid} is strictly more expressive than the explicit induction system \cite{Berardi2017}. To obtain a similar result for {\TC}, the fact that all induction schemes are available poses a serious challenge. For one,  the counter-example used in \cite{Berardi2017}  does not serve to show this result holds for {\TC}. If this strong inequivalence indeed holds also for {\TC}, it must be witnessed by a more subtle and complex counter-example. Conversely, it may be that the explicit and cyclic systems do coincide for {\TC}. In either case, this points towards fundamental aspects that require further investigation.

The rest of the paper is organised as follows. In \cref{sec:TCLogicDef} we reprise the definition of transitive closure logic and both its standard and Henkin-style semantics. 
\Cref{sec:ProofSystems} presents the existing explicit induction proof system for {\TC} logic, and also our new infinitary proof system. 
We prove the latter sound and complete for the standard semantics, and also derive cut-admissibility. 
In \cref{sec:ProofSystemComparisons} we compare the expressive power of the infinitary system (and its cyclic subsystem) with the explicit system. 
\Cref{sec:FutureWork} concludes and examines the remaining open questions for our system as well as future work. 

This technical report comprises an extended version, with proofs, of the results presented in \cite{CohenRowe18}. We would like to thank an anonymous reviewer for bringing to our attention a technical problem with the proof of completeness for the infinitary system in a previous version of this work.

\section{Transitive Closure Logic and its Semantics}
\label{sec:TCLogicDef}

In this section we review the language of transitive closure logic, and two possible semantics for it: a standard one, and a Henkin-style one. 
For simplicity of presentation we assume (as is standard practice) a designated equality symbol in the language.
We denote by $v[x_1 := a_n, \ldots, x_n := a_n]$ the variant of the assignment $v$ which assigns $a_i$ to $x_i$ for each $i$, and by $\varphi \subst{\replace{t_{1}}{x_{1}}, \ldots,\replace{t_{n}}{x_{n}}} $ the result of simultaneously substituting each $t_{i}$ for the free occurrences of $x_{i}$ in $\varphi$.

\begin{definition}[The language \rtclogic]
\label[definition]{def:TC}
  Let $\sigma$ be a first-order signature with equality, whose terms are ranged over by $s$ and $t$ and predicates by $P$, and let $x$, $y$, $z$, etc.~range over a countable set of variables. The language {\rtclogic} consists of the formulas defined by the grammar:
  \begin{align*}
    \varphi, \psi \Coloneqq {}
         & s=t
      \mid P(t_1, \ldots, t_n) 
      \mid \neg \varphi 
      \mid \varphi \wedge \varphi 
      \mid \varphi \vee \varphi 
      \mid \varphi \rightarrow \varphi 
      \mid \forall x . \varphi 
      \mid \exists x . \varphi
      \mid 
        \\
        & (\rtcformula{x}{y}{\varphi})(s,t)
  \end{align*}
  As usual, $\forall x$ and $\exists x$ bind free occurrences of the variable $x$ and we identify formulas up to renaming of bound variables, so that capturing of free variables during substitution does not occur. Note that in the formula $(\rtcformula{x}{y}{\varphi})(s,t)$ free occurrences of $x$ and $y$ in $\varphi$ are also bound (but not those in $s$ and $t$).
\end{definition}

\begin{definition}[Standard Semantics]
  \label[definition]{def:Semantics:Standard}
  Let $M = \langle D, I \rangle$ be a first-order structure (i.e.~$D$ is a non-empty domain and $I$ an interpretation function), and $v$ an assignment in $M$ which we extend to terms in the obvious way. 
  The satisfaction relation $\models$ between model-valuation pairs $\langle M, v \rangle$ and formulas is defined inductively on the structure of formulas by:
  \begin{itemize}
    \item $M, v \models s=t$ if $v(s)=v(t)$;
    \item $M, v \models P(t_1, \ldots, t_n)$ if $(v(t_1), \ldots, v(t_n)) \in I(P)$; 
    \item $M, v \models \neg \varphi$ if $M, v \not\models \varphi$;
    \item $M, v \models \varphi_1 \wedge \varphi_2$ if both $M, v \models \varphi_1$ and $M, v \models \varphi_2$;
    \item $M, v \models \varphi_1 \vee \varphi_2$ if either $M, v \models \varphi_1$ or $M, v \models \varphi_2$;
    \item $M, v \models \varphi_1 \rightarrow \varphi_2$ if $M, v \models \varphi_1$ implies $M, v \models \varphi_2$;
    \item $M, v \models \exists x . \varphi$ if $M, v[x := a] \models \varphi$ for some $a \in D$;
    \item $M, v \models \forall x . \varphi$ if $M, v[x := a] \models \varphi$ for all $a \in D$;
    \item
    $M, v \models (\rtcformula{x}{y}{\varphi})(s,t)$ if $v(s) = v(t)$, or there exist $a_{0}, \ldots, a_{n} \in D$ ($n > 0$) s.t. $v(s) = a_{0}$, $v(t) = a_{n}$, and $M, v[x := a_{i}, y := a_{i+1}] \models \varphi$ for $0 \leq i < n$.
  \end{itemize}
  We say that a formula $\varphi$ is valid with respect to the standard semantics when $M, v \models \varphi$ holds for all models $M$ and valuations $v$.
\end{definition}

We next recall the concepts of frames and Henkin structures (see, e.g., \cite{Henkin50Completeness}). A frame is a first-order structure together with some subset of the powerset of its domain (called its set of admissible subsets).

\begin{definition}[Frames] 
  A frame $M$ is a triple $\langle D, I, \mathcal{D} \rangle$, where $\langle D, I \rangle$ is a first-order structure, and $\mathcal{D} \subseteq \wp(D)$. 
\end{definition}
Note that if $\mathcal{D} = \wp(D)$, the frame is identified with a standard first-order structure.

\begin{definition}[Frame Semantics]
  {\rtclogic} formulas are interpreted in frames as in \cref{def:Semantics:Standard} above, except for:
  \begin{itemize}
    \item
    $M, v \models (\rtcformula{x}{y}{\varphi})(s,t)$ if for every $A \in \mathcal{D}$, if $v(s) \in A$ and for every $a, b \in D$: $a \in A$ and $M, v[x := a, y := b] \models \varphi$ implies $b \in A$, then $v(t) \in A$.
  \end{itemize}
\end{definition}

We now consider Henkin structures, which are frames whose set of admissible subsets is closed under parametric definability.

\begin{definition}[Henkin structures] 
  A Henkin structure $M = \langle D, I, \mathcal{D} \rangle$ is a frame such that $\{ a \in D \,\mid\, M, v[x := a] \models \varphi \} \in \mathcal{D}$ for every $\varphi$, and $v$ in $M$.
\end{definition}
We refer to the semantics induced by quantifying over the (larger) class of Henkin structures as the Henkin semantics.

It is worth noting that the inclusion of equality in the basic language is merely for notational convenience. This is because the {\rtcOperator} operator allows us, under both the standard and Henkin semantics, to actually \emph{define} equality $s = t $ on terms as $(\rtcformula{x}{y}{\bot})(s,t)$.

\section{Proof Systems for \rtclogic}
\label{sec:ProofSystems}

In this section, we define two proof systems for {\rtclogic}. The first is a finitary proof system with an explicit induction rule for {\rtcOperator} formulas. The second is an infinitary proof system, in which {\rtcOperator} formulas are simply unfolded, and inductive arguments are represented via infinite descent-style constructions. We show the soundness and completeness of these proof systems, and also compare their provability relations.

\begin{figure}[t]
  \begin{gather*}
    \text{(Axiom):} \;
    \begin{prooftree}
      \phantom{\emptyset}
      \justifies
      \sequent{\varphi}{\varphi}
    \end{prooftree}
      \qquad
    \text{(WL):} \;
    \begin{prooftree}
      \sequent{\Gamma}{\Delta}
      \justifies
      \sequent{\Gamma, \varphi}{\Delta}
    \end{prooftree}    
      \qquad \;
    \text{(WR):} \;
    \begin{prooftree}
      \sequent{\Gamma}{\Delta}
      \justifies
      \sequent{\Gamma}{\Delta, \varphi}
    \end{prooftree}
      \\[1em]
    \text{($\vee$L):} \;
    \begin{prooftree}
      \sequent{\Gamma}{\varphi, \Delta}
      \quad
      \sequent{\Gamma, \psi}{\Delta}
      \justifies
      \sequent{\Gamma, \varphi \vee \psi}{\Delta}
    \end{prooftree}
       \quad
           \text{($\wedge$L):} \;
    \begin{prooftree}
      \sequent{\Gamma, \varphi, \psi}{\Delta}
      \justifies
      \sequent{\Gamma, \varphi \wedge \psi}{\Delta}
    \end{prooftree}
      \quad
    \text{($\rightarrow$L):} \;
    \begin{prooftree}
      \sequent{\Gamma}{\varphi, \Delta}
      \quad
      \sequent{\Gamma, \psi}{\Delta}
      \justifies
      \sequent{\Gamma, \varphi \rightarrow \psi}{\Delta}
    \end{prooftree}
      \\[1em]
    \text{($\vee$R):} \;
    \begin{prooftree}
      \sequent{\Gamma}{\varphi, \psi, \Delta}
      \justifies
      \sequent{\Gamma}{\varphi \vee \psi, \Delta}
    \end{prooftree}
     \quad
         \text{($\wedge$R):} \;
    \begin{prooftree}
      \sequent{\Gamma}{\varphi, \Delta}
      \quad
      \sequent{\Gamma}{\psi, \Delta}
      \justifies
      \sequent{\Gamma}{\varphi \wedge \psi, \Delta}
    \end{prooftree}
      \quad
    \text{($\rightarrow$R):} \;
    \begin{prooftree}
      \sequent{\Gamma, \varphi}{\psi, \Delta}
      \justifies
      \sequent{\Gamma}{\varphi \rightarrow \psi, \Delta}
    \end{prooftree}
      \\[1em]
    \text{($\exists$L):} \;
    \begin{prooftree}
      \sequent{\Gamma, \varphi}{\Delta}
      \justifies
      \sequent{\Gamma, \exists x . \varphi}{\Delta}
      \using x \not\in \fn{fv}(\Gamma, \Delta)
    \end{prooftree}
      \quad     
    \text{($\forall$L):} \;
    \begin{gathered}
    \infer
      {\sequent{\Gamma, \forall x . \varphi}{\Delta}}
      {\sequent{\Gamma, \varphi \subst{\replace{t}{x}}}{\Delta}}
    \end{gathered}
      \quad
    \text{($\neg$L):} \;
    \begin{prooftree}
      \sequent{\Gamma}{\varphi, \Delta}
      \justifies
      \sequent{\Gamma, \neg \varphi}{\Delta}
    \end{prooftree}
      \\[0.5em]
    \text{($\exists$R):} \;
    \begin{gathered}
    \infer
      {\sequent{\Gamma}{\exists x . \varphi, \Delta}}
      {\sequent{\Gamma}{\varphi\subst{\replace{t}{x}}, \Delta}}
    \end{gathered}
      \quad
    \text{($\forall$R):} \;
    \begin{prooftree}
      \sequent{\Gamma}{\varphi, \Delta}
      \justifies
      \sequent{\Gamma}{\forall x . \varphi, \Delta}
      \using x \not\in \fn{fv}(\Gamma, \Delta)
    \end{prooftree}
      \quad
    \text{($\neg$R):} \;
    \begin{prooftree}
      \sequent{\Gamma, \varphi}{\Delta}
      \justifies
      \sequent{\Gamma}{\neg \varphi, \Delta}
    \end{prooftree}
      \\[0.5em]
    \text{(${=}\text{L}_1$):} \;
    \begin{gathered}
    \infer
      {\sequent{\Gamma, s = t}{\varphi \subst{\replace{t}{x}}, \Delta}}
      {\sequent{\Gamma}{\varphi \subst{\replace{s}{x}}, \Delta}}
   \end{gathered}
      \quad
    \text{(${=}\text{L}_2$):} \;
    \begin{gathered}
    \infer
      {\sequent{\Gamma, s = t}{\varphi \subst{\replace{s}{x}}, \Delta}}
      {\sequent{\Gamma}{\varphi \subst{\replace{t}{x}}, \Delta}}
   \end{gathered}
     \quad
    \text{($=$R):} \;
    \begin{prooftree}
      \phantom{\sequent{}{t=t}}
      \justifies
      \sequent{}{t=t}
    \end{prooftree}
      \\[0.5em]
    \text{(Cut):} \;
    \begin{prooftree}
      \sequent{\Gamma}{\varphi, \Delta}
      \quad
      \sequent{\Sigma, \varphi}{\Pi}
      \justifies
      \sequent{\Gamma, \Sigma}{\Delta, \Pi}
    \end{prooftree}
      \quad
    \text{(Subst):} \;
    \begin{gathered}
    \infer
      {\sequent
        {\Gamma \subst{\replace{t_1}{x_1}, \ldots, \replace{t_n}{x_n}}}
        {\Delta \subst{\replace{t_1}{x_1}, \ldots, \replace{t_n}{x_n}}}}
      {\sequent{\Gamma}{\Delta}{\mathrlap{\parbox[b][1.6em][c]{0pt}{\phantom{\subst{\replace{t_1}{x_1}, \ldots, \replace{t_n}{x_n}}}}}}}
    \end{gathered}
  \end{gather*}
  \caption{Proof rules for the sequent calculus $\mathcal{LK}_=$ with substitution.}
  \label{fig:LKProofRules}
\end{figure}

Our systems for $\rtclogic$ are extensions of $\mathcal{LK}_=$, the  sequent calculus for classical first-order logic with equality  \cite{Gentzen1935,takeuti2013proof} whose proof rules we show in \cref{fig:LKProofRules}.\footnote{Here we  take $\mathcal{LK}_=$ to include the substitution rule, which was not a part of the original systems.}
Sequents are expressions of the form $\sequent{\Gamma}{\Delta}$, for finite sets of formulas $\Gamma$ and $\Delta$. We write $\Gamma, \Delta$ and $\Gamma, \varphi$ as a shorthand for $\Gamma \cup \Delta$ and $\Gamma \cup \{ \varphi \}$ respectively, and $\fn{fv}(\Gamma)$ for the set of free variables of the formulas in the set $\Gamma$. A sequent $\sequent{\Gamma}{\Delta}$ is valid if and only if the formula $\bigwedge_{\varphi \in \Gamma} \varphi \rightarrow \bigvee_{\psi \in \Delta} \psi$ is.

\subsection{The Finitary Proof System}
\label{sec:ProofSystems:Finitary}

We briefly summarise the finitary proof system for {\rtclogic}. For more details see \cite{Cohen2014AL,cohen2015middle}.
We write $\varphi(x_1, \ldots, x_n)$ to emphasise that the formula $\varphi$ may contain $x_1, \ldots, x_n$ as free variables.

\begin{definition}
  The proof system {\rtc} for {\rtclogic} is defined by adding to $\LK_=$ 
  the following inference rules:
  \begin{gather}
    \label[rule]{eq:refRTC}
    \begin{gathered}
    \infer{\sequent{\Gamma}{\Delta, (\rtcformula{x}{y}{\varphi})(s, s)}}{}
    \end{gathered}
      \\[0.5em]
    \label[rule]{eq:transRTC}
    \begin{gathered}
    \infer
    {\sequent{\Gamma}{\Delta, (\rtcformula{x}{y}{\varphi})(s, t)}}
    { \sequent{\Gamma}{\Delta, (\rtcformula{x}{y}{\varphi})(s, r)}
        &
      \sequent{\Gamma}{\Delta, {\varphi \subst{\scalebox{0.8}{\replace{r}{x}, \replace{t}{y}}}}}
    }
    \end{gathered}
      \\[0.5em]
    \label[rule]{eq:RTC-min}
    \begin{gathered}
    \infer
    {\sequent
      {\Gamma, {\psi \subst{\replace{s}{x}}}, (\rtcformula{x}{y}{\varphi})(s, t)}
      {\Delta, {\psi \subst{\replace{t}{x}}}}}
    {\sequent
      {\Gamma, \psi(x), \varphi(x,y)}
      {\Delta, {\psi \subst{\replace{y}{x}}}}}
    \end{gathered}
  \end{gather}
  where, for \cref{eq:RTC-min}, $x \not\in \fn{fv}(\Gamma, \Delta)$ and $y \not\in \fn{fv}(\Gamma, \Delta, \psi)$
\end{definition}

\Cref{eq:RTC-min} is a generalized induction principle. It states that if an extension of formula $\psi$ is closed under the relation induced by $\varphi$, then it is also closed under the reflexive transitive closure of that relation. In the case of arithmetic this rule captures the induction rule of Peano's Arithmetics PA~\cite{cohen2015middle}.

\subsection{Infinitary Proof Systems}
\label{sec:ProofSystems:Infinitary}

\begin{definition}
  The infinitary proof system {\ortc} for {\rtclogic} is defined like {\rtc}, but replacing \cref{eq:RTC-min} by:
  \begin{equation}
    \label[rule]{eq:RTC-cyc}
    \begin{gathered}
    \infer
      {\sequent
        {\Gamma, (\rtcformula{x}{y}{\varphi})(s, t)}
        {\Delta}}
      { \sequent{\Gamma, s=t}{\Delta}
          &
        \sequent
          {\Gamma, (\rtcformula{x}{y}{\varphi})(s, z), {\varphi \subst{\scalebox{0.8}{\replace{z}{x}, \replace{t}{y}}}}}
          {\Delta}}
    \end{gathered}
  \end{equation}
  where $z$ is fresh, i.e.~does not occur free in $\Gamma$, $\Delta$, or $(\rtcformula{x}{y}{\varphi})(s, t)$. 
The formula $(\rtcformula{x}{y}{\varphi})(s, z)$ in the right-hand premise  is called the \emph{immediate ancestor} (cf.~\cite[{\textsection}1.2.3]{buss1998handbook}) of the principal formula, $(\rtcformula{x}{y}{\varphi})(s, t)$, in the conclusion.
\end{definition}

There is an asymmetry between \cref{eq:transRTC}, in which the intermediary is an arbitrary term $r$, and \cref{eq:RTC-cyc}, where we use a variable $z$. This is necessary to obtain the soundness of the cyclic proof system. It is used to show that when there is a counter-model for the conclusion of a rule, then there is also a counter-model for one of its premises that is, in a sense that we make precise below, `smaller'. In the case that $s \neq t$, using a fresh $z$ allows us to pick from \emph{all} possible counter-models of the conclusion, from which we may then construct the required counter-model for the right-hand premise. If we allowed an arbitrary term $r$ instead, this might restrict the counter-models we can choose from, only leaving ones `larger' than the one we had for the conclusion. See \cref{lem:DescendingCountermodels} below for more details.

Proofs in this system are possibly infinite derivation trees. However, not all infinite derivations are proofs: only those that admit an infinite descent argument. Thus we use the terminology `pre-proof' for derivations.

\begin{definition}[Pre-proofs]
  An {\ortc} \emph{pre-proof} is a possibly infinite (i.e. non-well-founded) derivation tree formed using the inference rules.
  A \emph{path} in a pre-proof is a possibly infinite sequence of sequents $s_0, s_1, \ldots (, s_n)$ such that $s_0$ is the root sequent of the proof, and $s_{i+1}$ is a premise of $s_i$ for each $i < n$.
\end{definition}

The following definitions tell us how to track {\rtcOperator} formulas through a pre-proof, and allow us to formalize inductive arguments via infinite descent.

\begin{definition}[Trace Pairs]
Let $\tau$ and $\tau'$ be {\rtcOperator} formulas occurring in the left-hand side of the conclusion $s$ and a premise $s'$, respectively, of (an instance of) an inference rule. $(\tau, \tau')$ is said to be a \emph{trace pair} for $(s, s')$ if the rule is:
  \begin{itemize}
    \item
    the (Subst) rule, and $\tau = \tau' \theta$ where $\theta$ is the substitution associated with the rule instance;
    \item
    Rule \eqref{eq:RTC-cyc}, and either:
    \begin{enumerate}[label={\alph*)}]
      \item
      $\tau$ is the principal formula of the rule instance and $\tau'$ is the immediate ancestor of $\tau$, in which case we say that the trace pair is \emph{progressing};
      \item
      otherwise, $\tau = \tau'$.
    \end{enumerate}
    \item
    any other rule, and $\tau = \tau'$.
  \end{itemize}
\end{definition}

\begin{definition}[Traces]
  A \emph{trace} is a (possibly infinite) sequence of {\rtcOperator} formulas. We say that a trace $\tau_{1}, \tau_{2}, \ldots (, \tau_{n})$ \emph{follows} a path $s_{1}, s_{2}, \ldots (, s_{m})$ in a pre-proof $\mathcal{P}$ if, for some $k \geq 0$, each consecutive pair of formulas $(\tau_{i}, \tau_{i+1})$ is a trace pair for $(s_{i+k}, s_{i+k+1})$.
  If $(\tau_{i}, \tau_{i+1})$ is a progressing pair then we say that the trace \emph{progresses} at $i$, and we say that the trace is \emph{infinitely progressing} if it progresses at infinitely many points.
\end{definition}

Proofs, then, are pre-proofs which satisfy a global trace condition.

\begin{definition}[Infinite Proofs]
  A {\ortc} \emph{proof} is a pre-proof in which every infinite path is followed by some infinitely progressing trace.
\end{definition}

Clearly, we cannot reason effectively about such infinite proofs in general. In order to do so we need to restrict our attention to those proof trees which are finitely representable. These are the \emph{regular} infinite proof trees, which contain only finitely many \emph{distinct} subtrees. They can be specified as systems of recursive equations or, alternatively, as cyclic \emph{graphs} \cite{Courcelle83}. Note that a given regular infinite proof may have many different graph representations. One possible way of formalizing such proof graphs is as standard proof trees containing open nodes (called buds), to each of which is assigned a syntactically equal internal node of the proof (called a companion). Due to space limitation, we elide a formal definition of cyclic proof graphs (see, e.g., Sect.~7 in \cite{brotherston2010sequent}) and rely on the reader's basic intuitions.

\begin{definition}[Cyclic Proofs]
  The cyclic proof system {\cortc} for {\rtclogic} is the subsystem of {\ortc} comprising of all and only the finite and \emph{regular} infinite proofs (i.e.~those proofs that can be represented as finite, possibly cyclic, graphs). 
\end{definition}

Note that it is decidable whether a cyclic pre-proof satisfies the global trace condition, using a construction involving an inclusion between B\"{u}chi automata (see, e.g., \cite{Brotherston07,Simpson2017}). However since this requires complementing B\"{u}chi automata (a \textsf{PSPACE} procedure), our system cannot be considered a proof system in the Cook-Reckhow sense \cite{CookReckhow79}. Notwithstanding, checking the trace condition for cyclic proofs found in practice is not prohibitive \cite{RoweBrotherston17,TellezBrotherston17}.

\subsection{Soundness and Completeness}
\label{sec:comp}

The rich expressiveness of {\TC} logic entails that the effective system {\rtc} which is sound w.r.t. the standard semantics, cannot be complete (much like the case for {\lkid}). It is however both sound and complete w.r.t. Henkin semantics. 

\begin{theorem}[Soundness and Completeness of {\rtc} \cite{Cohen2017Henkin}] 
  \label{rtc_comp}
  {\rtc} is sound for standard semantics, and also sound and complete for Henkin semantics. 
  \lcnote{the proof in \cite{Cohen2017Henkin} is for closed formulas. can probably derive for open. I think working with closed sequent is the standard. (is LKID proof in \cite{brotherston2010sequent} for open sequents?yes!)}
\end{theorem}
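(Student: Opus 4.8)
The statement is cited from \cite{Cohen2017Henkin}, so the plan follows that development; here is how I would carry out the argument.

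For soundness I would first reduce the standard case to the Henkin case. Every standard structure is identified with the Henkin structure in which $\mathcal{D} = \wp(D)$, and on such structures the frame clause for $(\rtcformula{x}{y}{\varphi})(s,t)$ agrees with the standard clause, since the intersection of all $\varphi$-closed subsets containing $v(s)$ is exactly the set of points reachable from $v(s)$ by a finite $\varphi$-path. Hence it suffices to prove soundness with respect to the Henkin semantics, which yields standard soundness as a corollary. The rules of $\LK_=$ are sound by the classical argument, so only the three {\rtcOperator} rules need checking. \Cref{eq:refRTC} holds because $v(s)$ lies in every admissible set containing it (the trivial path), and \cref{eq:transRTC} holds because a $\varphi$-closed admissible set containing $v(s)$ contains $v(r)$ (left premise) and then, via the one extra $\varphi$-step witnessed by the right premise, also $v(t)$. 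The key case is \cref{eq:RTC-min}: given its premise, the set $A = \{ a \mid M, v[x := a] \models \psi \}$ is closed under the relation induced by $\varphi$; in a Henkin structure this set is admissible, i.e.~$A \in \mathcal{D}$, so the Henkin clause together with $\psi\subst{\replace{s}{x}}$ (which gives $v(s) \in A$) forces $v(t) \in A$, i.e.~$\psi\subst{\replace{t}{x}}$.

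For completeness with respect to the Henkin semantics I would run a standard Henkin-style canonical model construction. Assuming $\sequent{\Gamma}{\Delta}$ is not {\rtc}-derivable, I would extend it to a maximal consistent theory with the witness property for the quantifiers (introducing fresh constants as Henkin witnesses). I would then build the canonical structure whose domain consists of equivalence classes of terms under provable equality, interpreting predicates by membership in the theory and taking $\mathcal{D}$ to be the collection of parametrically definable subsets, so that the resulting frame is by construction a Henkin structure. The heart of the argument is a truth lemma, proved by induction on formula structure, stating that a formula belongs to the theory iff it is satisfied in the canonical structure under the identity valuation; the propositional, equality, and quantifier cases are routine, and the failure of the original sequent in this model is then immediate.

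The main obstacle, and the only genuinely new case, is the {\rtcOperator} clause of the truth lemma, where one must match the canonical interpretation of $(\rtcformula{x}{y}{\varphi})(s,t)$ with the Henkin clause. One inclusion — a theory-member {\rtcOperator} formula is satisfied — I expect to follow from \cref{eq:refRTC,eq:transRTC}, which let one trace a finite $\varphi$-path and verify membership in every admissible closed superset. The converse is where \cref{eq:RTC-min} is indispensable: to show that satisfaction in the canonical model entails membership in the theory, one must exhibit a \emph{definable} $\varphi$-closed set acting as a most-general invariant and appeal to the minimality encoded by the induction rule. Making this invariant expressible, and verifying that the induction rule indeed forces the required membership, is the delicate point; it is precisely here that closure of $\mathcal{D}$ under parametric definability (the defining property of Henkin structures) is used, which explains why completeness holds for the Henkin but not the standard semantics.
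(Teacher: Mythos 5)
The paper itself gives no proof of this theorem: it is imported wholesale from \cite{Cohen2017Henkin}, so your proposal can only be judged against the development it cites. Your soundness half is correct and matches that development: standard structures are exactly the full-powerset frames, on which the frame clause for the {\rtcOperator} operator coincides with the standard clause, so Henkin soundness suffices; and your treatment of \cref{eq:refRTC,eq:transRTC,eq:RTC-min} --- in particular that the premise of \cref{eq:RTC-min} makes $A = \{ a \mid M, v[x := a] \models \psi \}$ a $\varphi$-closed set, which is admissible in a Henkin structure by parametric definability --- is exactly the right argument. The completeness scaffolding (maximal consistent theory with witnesses, term model quotiented by provable equality, admissible sets given by definable sets, truth lemma by induction) is also the expected route.

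The genuine gap is in the {\rtcOperator} case of the truth lemma: you have attached the rules to the wrong directions, and one of your two arguments fails outright under Henkin semantics. For ``membership in the theory implies satisfaction'' you propose to ``trace a finite $\varphi$-path'' using \cref{eq:refRTC,eq:transRTC}. But in the canonical model of a maximal consistent theory there need not exist any finite $\varphi$-path witnessing a theory-member {\rtcOperator} formula: by compactness of the finitary system, $T$ may consistently contain $(\rtcformula{x}{y}{\suc{x} = y})(0, c)$ together with all inequations asserting that $c$ differs from every numeral, and then no finite path from $[0]$ to $[c]$ exists. This non-standard phenomenon is precisely what makes Henkin completeness compatible with standard incompleteness, so an argument that manufactures finite paths cannot be right. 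This direction is in fact the one where \cref{eq:RTC-min} is indispensable: an arbitrary admissible $\varphi$-closed set $A$ containing $[s]$ is defined by some formula $\psi$; its semantic closure converts (using the induction hypothesis for $\varphi$, plus maximality and the witness property) into derivability of the premise of \cref{eq:RTC-min}, and the rule then yields $\psi \subst{\replace{t}{x}} \in T$, i.e.~$[t] \in A$. Conversely, ``satisfaction implies membership'' is the direction handled by \cref{eq:refRTC,eq:transRTC}: take the most-general invariant $A = \{ [u] \mid (\rtcformula{x}{y}{\varphi})(s, u) \in T \}$, admissible because it is definable; \cref{eq:refRTC} gives $[s] \in A$, \cref{eq:transRTC} gives $\varphi$-closure, and the Henkin clause then forces $[t] \in A$, i.e.~$(\rtcformula{x}{y}{\varphi})(s, t) \in T$ --- no appeal to \cref{eq:RTC-min} or to any minimality is needed here. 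With the two directions corrected in this way (and with the admissible sets of the canonical frame defined via theory membership rather than satisfaction, so that the argument is not circular in the defining formula $\psi$), your outline goes through.
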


Note that the system {\rtc} as presented here does not admit cut elimination. The culprit is the induction rule \eqref{eq:RTC-min}, which does not permute with cut. We may obtain admissibility of cut by using the following alternative formulation of the induction rule:
    \begin{equation*}
    \infer
    {\sequent
      {\Gamma,  (\rtcformula{x}{y}{\varphi})(s, t)} {\Delta}}
    {\sequent
      {\Gamma}{{\psi \subst{\replace{s}{x}}}}
      \quad
       \sequent
      {\Gamma, \psi(x), \varphi(x,y)}{{\psi \subst{\replace{y}{x}}}}
     \quad 
     \sequent
      {\Gamma,{\psi \subst{\replace{t}{x}}}}{\Delta}}
    \end{equation*}
where $x \not\in \fn{fv}(\Gamma, \Delta)$, $y \not\in \fn{fv}(\Gamma, \Delta, \psi)$.
Like the induction rule for {\lkid}, this formulation incorporates a cut with the induction formula $\psi$.
For the system with this rule, a simple adaptation of the completeness proof in \cite{Cohen2017Henkin}, in the spirit of the corresponding proof for {\lkid} in \cite{brotherston2010sequent}, suffices to obtain cut-free completeness. However, the tradeoff is that the resulting cut-free system no longer has the sub-formula property. In contrast, cut-free proofs in {\rtc} do satisfy the sub-formula property, for a generalized notion of a subformula that incorporates substitution instances (as in $\LK_=$).

We remark that the soundness proof of {\lkid} is rather complex since it must handle different types of mutual dependencies between the inductive predicates. 
For {\rtc} the proof is much simpler due to the uniformity of the rules for the {\rtcOperator} operator.

The infinitary system {\ortc}, in contrast to the finitary system {\rtc}, is both sound and complete w.r.t. the standard semantics. To prove soundness, we make use of the following notion of \emph{measure} for {\rtcOperator} formulas.

\begin{definition}[Degree of {\rtcOperator} Formulas]
  For $\phi \equiv (\rtcformula{x}{y}{\varphi})(s,t)$, we define $\degreeOf{\phi}{M}{v} = 0$ if $v(s) = v(t)$, and $\degreeOf{\phi}{M}{v} = n$ if $v(s) \neq v(t)$ and $a_0, \ldots, a_n$ is a minimal-length sequence of elements in the semantic domain $D$ such that $v(s) = a_{0}$, $v(t) = a_{n}$, and $M, v[x := a_{i}, y := a_{i+1}] \models \varphi$ for $0 \leq i < n$. We call $\degreeOf{\phi}{M}{v}$ the \emph{degree} of $\phi$ with respect to the model $M$ and valuation $v$.
\end{definition}

Soundness then follows from the following fundamental lemma.

\begin{restatable}[Descending Counter-models]{lemma}{descendingcountermodels}
  \label[lemma]{lem:DescendingCountermodels}
  If there exists a standard model $M$ and valuation $v$ that invalidates the conclusion $s$ of (an instance of) an inference rule, then
  \begin{enumerate*}[label={\arabic*)}]
    \item
    there exists a standard model $M'$ and valuation $v'$ that invalidates some premise $s'$ of the rule; and
    \item
    if $(\tau, \tau')$ is a trace pair for $(s, s')$ then $\degreeOf{\tau'}{M'}{v'} \leq \degreeOf{\tau}{M}{v}$. Moreover, if $(\tau, \tau')$ is a progressing trace pair then $\degreeOf{\tau'}{M'}{v'} < \degreeOf{\tau}{M}{v}$.
  \end{enumerate*}
\end{restatable}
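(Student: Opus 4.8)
The plan is to prove the statement by a case analysis on the inference rule of which $s$ is the conclusion, in each case exhibiting the required counter-model $(M', v')$ for a premise and verifying the two degree conditions. Throughout I use the characterisation that $(M,v)$ invalidates a sequent $\sequent{\Gamma}{\Delta}$ exactly when $M, v \models \gamma$ for every $\gamma \in \Gamma$ and $M, v \not\models \delta$ for every $\delta \in \Delta$. For the axioms and for \cref{eq:refRTC} the conclusion is valid (in the latter case because $(\rtcformula{x}{y}{\varphi})(s,s)$ holds in every model--valuation pair), so the hypothesis is vacuous and there is nothing to prove.

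For the propositional rules, the negation rules, the equality rules, and \cref{eq:transRTC}, I would keep $M' = M$ and $v' = v$ and argue that invalidity of the conclusion forces invalidity of at least one premise: for the branching rules one inspects the truth value of the decomposed / cut / transitivity formula under $(M,v)$ to select the premise (e.g. for \cref{eq:transRTC}, if both premises held then concatenating the witnessing $\rtcOperator$-path for $(s,r)$ with the $\varphi$-edge from $r$ to $t$ would witness $(\rtcformula{x}{y}{\varphi})(s,t)$, contradicting invalidity of the conclusion). Since the valuation is unchanged and every trace pair for these rules satisfies $\tau = \tau'$ with both formulas occurring on the left (where these rules act trivially, or not at all for the right-acting rules), the degrees are literally equal, giving condition 2) with equality. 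The rules $(\exists\text{L})$ and $(\forall\text{R})$ require moving to $v' = v[x := a]$ for a suitable witness / eigenvalue $a$; by the side-condition $x \notin \fn{fv}(\Gamma,\Delta)$, the traced left-hand $\rtcOperator$-formulas do not mention $x$, so their degrees are unaffected. The $(\text{Subst})$ case uses the substitution lemma: taking $v' = v[x_1 := v(t_1), \ldots, x_n := v(t_n)]$ invalidates the premise, and the correspondence between $\rtcOperator$-paths under $(M,v)$ for $\tau = \tau'\theta$ and under $(M,v')$ for $\tau'$ shows $\degreeOf{\tau'}{M}{v'} = \degreeOf{\tau}{M}{v}$.

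The crux is \cref{eq:RTC-cyc}, the only rule producing a progressing trace pair and hence the only place a strict decrease must be established. Assume $(M,v)$ invalidates the conclusion, so $M, v \models (\rtcformula{x}{y}{\varphi})(s,t)$; write $n = \degreeOf{(\rtcformula{x}{y}{\varphi})(s,t)}{M}{v}$. If $n = 0$, i.e. $v(s) = v(t)$, then $M, v \models s = t$ and $(M,v)$ itself invalidates the left premise; the principal formula is consumed (it has no immediate ancestor there) so there is no progressing pair to account for, and the remaining left-hand traces are unchanged. If $n > 0$, fix a minimal witnessing sequence $a_0, \ldots, a_n$ (so $v(s) = a_0$, $v(t) = a_n$, and $M, v[x := a_i, y := a_{i+1}] \models \varphi$ for each $i < n$) and set $v' = v[z := a_{n-1}]$, exploiting the freshness of $z$. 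I would then check that $(M, v')$ invalidates the right premise: $\Gamma$ remains true and $\Delta$ false since $z$ does not occur in them; the sub-sequence $a_0, \ldots, a_{n-1}$ witnesses $M, v' \models (\rtcformula{x}{y}{\varphi})(s, z)$; and $M, v'[x := a_{n-1}, y := a_n] \models \varphi$ yields $M, v' \models \varphi\subst{\replace{z}{x}, \replace{t}{y}}$. Finally, the progressing pair $\bigl((\rtcformula{x}{y}{\varphi})(s,t),\, (\rtcformula{x}{y}{\varphi})(s,z)\bigr)$ decreases strictly, since $a_0, \ldots, a_{n-1}$ is a witnessing sequence of length $n-1$ for the ancestor, whence $\degreeOf{(\rtcformula{x}{y}{\varphi})(s,z)}{M}{v'} \le n - 1 < n$; the non-progressing left-hand traces in $\Gamma$ are again unaffected because $z$ is fresh.

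I expect the main obstacle to be precisely this last case, and the subtle point is the role of the freshness of $z$ (as flagged in the discussion preceding the lemma): it is what lets me choose $v'(z) = a_{n-1}$ for the specific minimal-length witness of the conclusion's counter-model, rather than being forced to match a fixed term $r$ as in \cref{eq:transRTC}; were an arbitrary term used in the rule, the induced counter-model for the premise need not have strictly smaller degree. A secondary technical point requiring care is the clean bookkeeping of degree preservation under the valuation changes (the substitution lemma for $(\text{Subst})$, and the fresh/unused-variable arguments for the quantifier and cyclic rules), which must be carried out at the level of the $\rtcOperator$-path witnesses rather than merely at the level of satisfaction.
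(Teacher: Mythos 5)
Your proposal is correct and follows essentially the same approach as the paper's proof: identical treatment of \cref{eq:refRTC} and \cref{eq:transRTC} (keeping $M' = M$, $v' = v$), and for the crucial \cref{eq:RTC-cyc} case the same case-split on $v(s) = v(t)$, the same choice $v' = v[z := a_{n-1}]$ from a minimal-length witnessing sequence, and the same freshness argument for $z$ to preserve $\Gamma$, $\Delta$, and the degrees of the non-progressing traces while forcing the strict decrease. The paper defers the $\LK_=$ and (Subst) cases to the literature, whereas you spell them out (correctly), but this is not a difference in method.
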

\begin{proof}
  The cases for the standard $\LK_{=}$ and substitution rules are straightforward adaptations of those found in e.g.~\cite{brotherston2010sequent}. 
  \begin{itemize}[wide,itemsep={0.75em}]
    \item
    The case for Rule \eqref{eq:refRTC} follows trivially since it follows immediately from \cref{def:Semantics:Standard} that $M, v \models (\rtcformula{x}{y}{\varphi})(s, s)$ for all $M$ and $v$. 
    \item
    For Rule \eqref{eq:transRTC}, since $M, v \not\models (\rtcformula{x}{y}{\varphi})(s, t)$ it follows that either $M, v \not\models (\rtcformula{x}{y}{\varphi})(s, r)$ or $M, v \not\models \varphi \subst{\scalebox{0.865}{\replace{r}{x}, \replace{t}{y}}}$. To see this, suppose for contradiction that both $M, v \models (\rtcformula{x}{y}{\varphi})(s, r)$ or $M, v \models \varphi \subst{\scalebox{0.865}{\replace{r}{x}, \replace{t}{y}}}$; but then it would follow by \cref{def:Semantics:Standard} that $M, v \models (\rtcformula{x}{y}{\varphi})(s, t)$. We thus take $M' = M$ and $v' = v$, and either the left- or right-hand premise according to whether $M, v \not\models (\rtcformula{x}{y}{\varphi})(s, r)$ or $M, v \not\models \varphi \subst{\scalebox{0.865}{\replace{r}{x}, \replace{t}{y}}}$. 
    \item
    For Rule \eqref{eq:RTC-cyc}, since $M, v \models (\rtcformula{x}{y}{\varphi})(s, t)$ there are two cases to consider:
    \begin{enumerate}[nosep,label={(\roman*)}]
      \item
      If $v(s) = v(t)$ then we take the left-hand premise with model $M' = M$ and valuation $v' = v$, and so the degree of any {\rtcOperator} formula in $\Gamma$ with respect to $M'$ and $v'$ remains the same.
      \item      
      If on the other hand there are $a_0, \ldots, a_n \in D$ ($n > 0$) such that $v(s) = a_0$ and $v(t) = a_n$ with $M, v[x := a_{i}, y := a_{i+1}] \models \varphi$ for $0 \leq i < n$, we then take the right-hand premise, the model $M' = M$ and valuation $v' = v[z := a_{n-1}]$. Note that, without loss of generality, we may assume a sequence $a_0, \ldots, a_n$ of minimal length, and thus surmise $\degreeOf{(\rtcformula{x}{y}{\varphi})(s, t)}{M}{v} = n$. Since $z$ is fresh, it follows that $M', v' \models \varphi \subst{\scalebox{0.865}{\replace{z}{x}, \replace{t}{y}}}$ and $M', v'[x := a_{i}, y := a_{i+1}] \models \varphi$ for $0 \leq i < n - 1$. If $n = 1$ then $v'(s) = v'(z) = a_0$ and so $M, v' \models (\rtcformula{x}{y}{\varphi})(s, z)$; otherwise this is witnessed by the sequence $a_{0}, \ldots, a_{n-1}$. Thus we also have that $\degreeOf{(\rtcformula{x}{y}{\varphi})(s, z)}{M'}{v'} = n - 1$. To conclude, note it also follows from $z$ fresh that $M', v' \models \psi$ for all $\psi \in \Gamma$ and $M', v' \not\models \phi$ for all $\phi \in \Delta$; and furthermore that the degree of any {\rtcOperator} formula in $\Gamma$ remains unchanged with respect to $M'$ and $v'$.
      \qed
    \end{enumerate}
  \end{itemize}
\end{proof}

As is standard for infinite descent inference systems \cite{Brotherston07,BrotherstonBC08,brotherston2010sequent,DasP17,RoweBrotherston17,TellezBrotherston17}, the above result entails the local soundness of the inference rules (in our case, for standard first-order models). The presence of infinitely progressing traces for each infinite path in a {\ortc} proof ensures soundness via a standard infinite descent-style construction.

\begin{restatable}[Soundness of {\ortc}]{theorem}{infinitarysoundness}
  \label{sound}
  If there is a {\ortc} proof of $\sequent{\Gamma}{\Delta}$, then $\sequent{\Gamma}{\Delta}$ is valid (w.r.t.~the standard semantics)
\end{restatable}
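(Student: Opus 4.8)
The plan is to argue by contradiction via an infinite descent on the degrees of traced $\rtcOperator$ formulas, delegating all the local semantic work to \cref{lem:DescendingCountermodels}. Suppose there is an $\ortc$ proof $\mathcal{P}$ of $\sequent{\Gamma}{\Delta}$ but that this sequent is invalid; then there is a standard model $M_0$ and valuation $v_0$ invalidating the root $s_0 = \sequent{\Gamma}{\Delta}$.

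First I would build an infinite path through $\mathcal{P}$ together with a sequence of counter-models. Given that $(M_i, v_i)$ invalidates the node $s_i$, part~1 of \cref{lem:DescendingCountermodels} supplies a premise $s_{i+1}$ of $s_i$ together with a model $M_{i+1}$ and valuation $v_{i+1}$ invalidating it. The crucial observation is that this process never terminates: the leaves of $\mathcal{P}$ are axioms (instances of (Axiom), ($=$R), or the reflexivity rule \cref{eq:refRTC}), which are valid and hence cannot be invalidated by any model-valuation pair. Thus every $s_i$ has a premise, and iterating yields an infinite path $\pi = s_0, s_1, s_2, \ldots$ equipped with a counter-model $(M_i, v_i)$ for each $s_i$.

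Next, since $\mathcal{P}$ is a genuine $\ortc$ proof and not merely a pre-proof, the infinite path $\pi$ satisfies the global trace condition: there is an infinitely progressing trace $\tau_1, \tau_2, \ldots$ following $\pi$, say from position $k$, so that each $(\tau_i, \tau_{i+1})$ is a trace pair for $(s_{i+k}, s_{i+k+1})$. Consider the sequence of natural numbers $d_i = \degreeOf{\tau_i}{M_{i+k}}{v_{i+k}}$. By part~2 of \cref{lem:DescendingCountermodels}, $d_{i+1} \leq d_i$ for every $i$, and $d_{i+1} < d_i$ whenever the trace progresses at $i$. Since the trace progresses infinitely often, $(d_i)_{i \geq 1}$ is a non-increasing sequence of naturals that decreases strictly at infinitely many points, which is impossible by well-foundedness of $\mathbb{N}$. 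This contradiction establishes that $\sequent{\Gamma}{\Delta}$ is valid.

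The technical heart of the soundness argument lies entirely in \cref{lem:DescendingCountermodels}, so the theorem itself is a routine infinite-descent packaging. The only points requiring care are the bookkeeping that aligns the trace indices with the counter-model sequence — ensuring that at a progressing pair $(\tau_i, \tau_{i+1})$ the strict inequality is measured against the counter-models $(M_{i+k}, v_{i+k})$ and $(M_{i+k+1}, v_{i+k+1})$ actually produced at those nodes — together with the verification that the descent construction cannot stall at a leaf. I expect no genuine obstacle beyond this indexing discipline.
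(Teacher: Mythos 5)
Your proposal is correct and follows essentially the same argument as the paper: assume invalidity, iterate \cref{lem:DescendingCountermodels} to extract an infinite path of counter-models (noting, as you do, that axiom leaves cannot be invalidated), then use the global trace condition and the degree measure to produce a non-increasing sequence of naturals that strictly decreases infinitely often, contradicting well-foundedness. The only difference is cosmetic: you spell out the non-termination-at-leaves and index-alignment bookkeeping that the paper leaves implicit.
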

\begin{proof}
  Suppose, for contradiction, that $\sequent{\Gamma}{\Delta}$ is not valid. Then by \cref{lem:DescendingCountermodels} there exists an infinite path $\seq{s_{i}}{i > 0}$ in the proof and an infinite sequence of model-valuation pairs $\seq{\pair{M_{i}}{v_{i}}}{i > 0}$ such that $\pair{M_i}{v_i}$ invalidates $s_i$ for each $i > 0$. Since the proof is a valid {\ortc} proof, this infinite path is followed by an infinitely progressing trace $\seq{\tau_{i}}{i > 0}$ for which we can take the degree of each formula with respect to its corresponding counter-model to obtain an infinite sequence of natural numbers $\seq{\degreeOf{\tau_{i}}{M_{k+i}}{v_{k+i}}}{i > 0}$ (for some $k \geq 0$). By \cref{lem:DescendingCountermodels} this sequence is decreasing and, moreover, since the trace is infinitely progressing the sequence strictly decreases infinitely often. From the fact that the natural numbers are a well-founded set we derive a contradiction, and thus conclude that $\sequent{\Gamma}{\Delta}$ is indeed valid.
  \qed
\end{proof}

The soundness of the cyclic system is an immediate corollary, since each {\cortc} proof is also a {\ortc} proof.

\begin{corollary}[Soundness of {\cortc}]
  \label[corollary]{cor:cortc-implies-ortc}
  If there is a {\cortc} proof of $\sequent{\Gamma}{\Delta}$, then $\sequent{\Gamma}{\Delta}$ is valid (w.r.t.~the standard semantics) 
\end{corollary}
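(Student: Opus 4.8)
The plan is to exploit the fact that, by definition, the cyclic system {\cortc} is nothing more than a syntactic restriction of the infinitary system {\ortc}: a {\cortc} proof is precisely a {\ortc} proof that happens to be finite or \emph{regular} (i.e.~representable as a finite, possibly cyclic, graph). Crucially, the regularity requirement is imposed \emph{on top of} the global trace condition, not in place of it. Every {\cortc} proof is therefore still a genuine {\ortc} proof, with every infinite path followed by some infinitely progressing trace. The whole force of the corollary thus rests on this definitional inclusion.

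Given this, the argument is immediate. First I would observe that any {\cortc} proof of $\sequent{\Gamma}{\Delta}$ is, in particular, a {\ortc} proof of the same sequent. I would then apply the Soundness of {\ortc} (\cref{sound}), which tells us that the mere existence of a {\ortc} proof of $\sequent{\Gamma}{\Delta}$ already guarantees the validity of $\sequent{\Gamma}{\Delta}$ with respect to the standard semantics. Composing these two facts directly yields the corollary.

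The one point worth being explicit about — and it is exactly what makes the corollary \emph{immediate} rather than requiring any re-running of the descent argument — is that \cref{sound} was established for \emph{arbitrary} (not necessarily regular) {\ortc} proofs. Because regularity is a strictly stronger condition than mere membership in {\ortc}, no additional semantic reasoning is needed: the infinite-descent and well-foundedness machinery underpinning \cref{sound} already subsumes the cyclic case wholesale. Consequently there is no genuine obstacle here; the entire mathematical content lives in the prior soundness theorem for {\ortc}, and the cyclic case follows with no further work.
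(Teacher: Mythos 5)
Your proposal is correct and takes exactly the same route as the paper: since {\cortc} is by definition a subsystem of {\ortc} (the finite and regular proofs), every {\cortc} proof is already an {\ortc} proof, and soundness follows immediately from \cref{sound}. Your additional remark that \cref{sound} holds for arbitrary, not necessarily regular, proofs is precisely why the paper treats this as an immediate corollary requiring no further argument.
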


Following a standard technique (as used in e.g.~\cite{brotherston2010sequent}), we can show cut-free completeness of {\ortc} with respect to the standard semantics.

\begin{definition}[Schedule]
  A \emph{schedule element} $E$ is defined as any of the following:
  \begin{itemize}[nosep]
  \item
  a formula of the form $\neg \varphi, \varphi \wedge \psi, \varphi \vee \psi, \varphi \rightarrow \psi$;
  \item
  a pair of the form $\pair{\forall x \, \varphi}{t}$ or $\pair{\exists x \, \varphi}{t}$ where $\forall x \, \varphi$ and $\exists x \, \varphi$ are formulas and $t$ is a term;
  \item
  a tuple of the form $\langle (\rtcformula{x}{y}{\varphi})(s, t), r, z, \Gamma, \Delta \rangle$ where $(\rtcformula{x}{y}{\varphi})(s, t)$ is a formula, $r$ is a term, $\Gamma$ and $\Delta$ are finite sequences of formulas, and $z$ is a variable not occurring free in $\Gamma$, $\Delta$, or $(\rtcformula{x}{y}{\varphi})(s, t)$; or
  \item
  a tuple of the form $\langle s = t, x, \varphi, n, \Gamma, \Delta \rangle$ where $s$ and $t$ are terms, $x$ is a variable, $\varphi$ is a formula, $n \in \{1, 2\}$, and $\Gamma$ and $\Delta$ are finite sequences of formulas.
  \end{itemize}
A \emph{schedule} is a recursive enumeration of schedule elements in which every schedule element appears infinitely often (these exist since our language is countable).
\end{definition}

Each schedule corresponds to an exhaustive search strategy for a cut-free proof for each sequent $\Gamma \Rightarrow \Delta$, via the following notion of a `search tree'.

\begin{definition}[Search Tree]
  Given a schedule $\seq{E_{i}}{i > 0}$, for each sequent $\Gamma \Rightarrow \Delta$ we inductively define an infinite sequence of (possibly open) derivation trees, $\seq{T_{i}}{i > 0}$, such that $T_{1}$ consists of the single open node $\Gamma \Rightarrow \Delta$, and each $T_{i+1}$ is obtained by replacing all suitable open nodes in $T_{i}$ with applications of first axioms and then the left and right inference rules for the formula in the $i$\textsuperscript{th} schedule element.
  We show the cases for building $T_{i+1}$ for when $E_i$ corresponds to an {\rtcOperator} formula and an equality formula. The cases for when $E_i$ corresponds to a standard compound first-order formula are similar.
  \begin{itemize}[wide]
    \item
    When $E_{i}$ is of the form $\langle (\rtcformula{x}{y}{\varphi})(s, t), r, z, \Gamma, \Delta \rangle$, then $T_{i+1}$ is obtained by:
    \begin{enumerate}[wide,topsep={\smallskipamount},itemsep={\smallskipamount}]
      \item
      first closing as such any open node that is an instance of an axiom (after left and right weakening, if necessary);
      \item
      next, replacing every open node $\sequent{\Gamma', (\rtcformula{x}{y}{\varphi})(s, t)}{\Delta'}$ of the resulting tree for which $\Gamma' \subseteq \Gamma$ and $\Delta' \subseteq \Delta$ with the derivation:
      \par\noindent
      \begin{minipage}{\textwidth}
      \smaller[2]
      \begin{equation*}
        \infer[{\eqref{eq:RTC-cyc}}]
          {\sequent{\Gamma', (\rtcformula{x}{y}{\varphi})(s, t)}{\Delta'}}
          {\sequent{\Gamma', (\rtcformula{x}{y}{\varphi})(s, t), s = t}{\Delta'}
           \qquad
            \sequent
              {\Gamma', (\rtcformula{x}{y}{\varphi})(s, t), (\rtcformula{x}{y}{\varphi})(s, z), {\varphi \subst{\replace{z}{x}, \replace{t}{y}}}}
              {\Delta'}
           }
      \end{equation*}
      \end{minipage}
      \smallskip
      \item
      finally, replacing every open node $\sequent{\Gamma'}{\Delta', (\rtcformula{x}{y}{\varphi})(s, t)}$ of the resulting tree with the derivation:
      \medskip
      \par\noindent
      \begin{minipage}{\textwidth}
      \smaller[2]
      \begin{equation*}
        \infer[{\eqref{eq:transRTC}}]
          {\sequent{\Gamma'}{\Delta', (\rtcformula{x}{y}{\varphi})(s, t)}}
          {\sequent{\Gamma'}{\Delta', (\rtcformula{x}{y}{\varphi})(s, t), (\rtcformula{x}{y}{\varphi})(s, r)}
            \quad
           \sequent{\Gamma'}{\Delta', (\rtcformula{x}{y}{\varphi})(s, t), {\varphi \subst{\replace{r}{x}, \replace{t}{y}}}}}
      \end{equation*}
      \end{minipage}
      \smallskip
    \end{enumerate}
    \item
    When $E_i$ is of the form $\langle s = t, x, \varphi, n, \Gamma, \Delta \rangle$, then $T_{i+1}$ is then obtained by first closing as such any open node that is an instance of an axiom (after left and right weakening, if necessary); and next, replacing every open node $\sequent{\Gamma', s = t}{\Delta', \psi}$ in the resulting tree where $\Gamma' \subseteq \Gamma$ and $\Delta' \subseteq \Delta$, and $\psi$ is $\varphi \subst{\replace{s}{x}}$ (resp.~$\varphi \subst{\replace{t}{x}}$) if $n = 1$ (resp.~$n = 2$), with the appropriate one of the following derivations:
    \medskip
    \par\noindent
    \begin{minipage}{\textwidth}
    \smaller
    \begin{gather*}
      \infer[({=}\text{L}_1)]
        {\sequent{\Gamma', s = t}{\Delta', {\varphi \subst{\replace{t}{x}}}}}
        {\sequent{\Gamma', s = t}{\Delta', {\varphi \subst{\replace{t}{x}}}, {\varphi \subst{\replace{s}{x}}}}}
      \hspace{4em}
      \infer[({=}\text{L}_2)]
        {\sequent{\Gamma', s = t}{\Delta', {\varphi \subst{\replace{s}{x}}}}}
        {\sequent{\Gamma', s = t}{\Delta', {\varphi \subst{\replace{s}{x}}}, {\varphi \subst{\replace{t}{x}}}}}
    \end{gather*}
    \end{minipage}
    \smallskip
  \end{itemize}
  The limit of the sequence $\seq{T_{i}}{i > 0}$ is a possibly infinite (and possibly open) derivation tree called the \emph{search tree} for $\sequent{\Gamma}{\Delta}$ with respect to the schedule $\seq{E_{i}}{i > 0}$, and denoted by $T_\omega$.
\end{definition}

Search trees are, by construction, recursive and cut-free. We construct special `sequents' out of search trees, called \emph{limit sequents}, as follows.

\begin{definition}[Limit Sequents]
\label[definition]{def:LimitSequent}
  When a search tree $T_\omega$ is not an {\ortc} proof, either:
  \begin{enumerate*}[label=({\arabic*})]
    \item \label{def:LimitSequent:OpenNode}
    it is not even a pre-proof, i.e. it contains an open node; or 
    \item \label{def:LimitSequent:UntraceableBranch}
    it is a pre-proof but contains an infinite branch that fails to satisfy the global trace condition.
  \end{enumerate*}
  In case \ref{def:LimitSequent:OpenNode} it contains an open node to which, necessarily, no schedule element applies (e.g.~a sequent containing only atomic formulas), for which we write $\sequent{\Gamma_\omega}{\Delta_\omega}$. 
  In case \ref{def:LimitSequent:UntraceableBranch} the global trace condition fails, so there exists an infinite path $\seq{\sequent{\Gamma_i}{\Delta_i}}{i > 0}$ in $T_\omega$ which is followed by no infinitely progressing traces; we call this path the \emph{untraceable branch} of $T_\omega$.
  We then define $\Gamma_{\omega} = \bigcup_{i > 0}{\Gamma_{i}}$ and $\Delta_{\omega} = \bigcup_{i > 0}{\Delta_{i}}$, and call $\sequent{\Gamma_{\omega}}{\Delta_{\omega}}$ the \emph{limit sequent}.\footnote{To be rigorous, we may pick e.g.~the left-most open node or untraceable branch.}
\end{definition}

Note that use of the word `sequent' here is an abuse of nomenclature, since limit sequents may be infinite and thus technically not sequents. However when we say that such a limit sequent is provable, we mean that it has a finite subsequent that is provable.

\begin{restatable}{lemma}{LimitSequentsNotProvable}
\label{not-prov}
  Limit sequents $\sequent{\Gamma_\omega}{\Delta_\omega}$ are not cut-free provable.
\end{restatable}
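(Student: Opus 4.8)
The plan is to establish the lemma \emph{semantically}: I will show that the limit sequent $\sequent{\Gamma_\omega}{\Delta_\omega}$ is \emph{falsifiable} under the standard semantics, so that the soundness theorem (\cref{sound}), which applies in particular to any cut-free {\ortc} proof, immediately forbids any finite subsequent from being provable at all --- a fortiori not cut-free provable. The task thus reduces to constructing a single standard model $M$ with valuation $v$ satisfying every $\gamma \in \Gamma_\omega$ and falsifying every $\delta \in \Delta_\omega$. For this I would take $M$ to be the \emph{term model}: its domain is the set of terms quotiented by the equality atoms occurring in $\Gamma_\omega$, $v$ the identity assignment, and each predicate interpreted so that $P(t_1,\ldots,t_n)$ holds iff $P(t_1,\ldots,t_n) \in \Gamma_\omega$.

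First I would record the two structural properties furnished by the construction of $T_\omega$. For \emph{consistency}: since the search procedure closes every axiom node immediately and atomic formulas are never consumed by a rule, no atom lies in both $\Gamma_\omega$ and $\Delta_\omega$ and no reflexivity atom $t=t$ lies in $\Delta_\omega$ --- otherwise the branch would have closed at a finite stage, contradicting its being infinite (or a genuinely open leaf, in the open-node case of \cref{def:LimitSequent}). Hence $M,v$ is well defined and decides the atoms correctly. For \emph{saturation}: the fairness of the schedule together with the persistence of principal formulas in the {\ortc} rules yields the Hintikka closure conditions, e.g.~if $(\rtcformula{x}{y}{\varphi})(s,t) \in \Gamma_\omega$ then by \eqref{eq:RTC-cyc} either $s=t \in \Gamma_\omega$ or $(\rtcformula{x}{y}{\varphi})(s,z), \varphi\subst{\replace{z}{x},\replace{t}{y}} \in \Gamma_\omega$, while if $(\rtcformula{x}{y}{\varphi})(s,t) \in \Delta_\omega$ then, via \eqref{eq:refRTC} and \eqref{eq:transRTC}, $s \neq t$ and for every term $r$ either $(\rtcformula{x}{y}{\varphi})(s,r) \in \Delta_\omega$ or $\varphi\subst{\replace{r}{x},\replace{t}{y}} \in \Delta_\omega$. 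A routine induction on formula structure then handles the propositional connectives, the quantifiers, equality, and {\rtcOperator} formulas on the right (the last using the saturation above to block any finite $\varphi$-path that would witness the closure); in the open-node case the limit is a single irreducible non-axiom sequent and this induction is immediate.

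The delicate case --- and the step I expect to be the main obstacle --- is showing that a formula $(\rtcformula{x}{y}{\varphi})(s,t) \in \Gamma_\omega$ is satisfied in the \emph{standard} model, i.e.~that there is a genuinely \emph{finite} $\varphi$-path from $s$ to $t$ as required by \cref{def:Semantics:Standard}; this is exactly where the \emph{untraceability} of the branch enters. Suppose, for contradiction, that some left-hand {\rtcOperator} formula admitted no finite path. By the saturation condition its base disjunct $s=t$ fails, so $(\rtcformula{x}{y}{\varphi})(s,z)$ and the step $\varphi\subst{\replace{z}{x},\replace{t}{y}}$ lie in $\Gamma_\omega$; and $(\rtcformula{x}{y}{\varphi})(s,z)$ can itself admit no finite path, since prepending the (satisfied) step would otherwise produce one for the original. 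Iterating yields an infinite sequence of left-hand {\rtcOperator} formulas, each the \eqref{eq:RTC-cyc}-ancestor of its predecessor; because each such formula entered $\Gamma_\omega$ precisely through an unfolding performed along the branch, this sequence assembles into a trace that follows the branch and progresses infinitely often --- contradicting the defining property (\cref{def:LimitSequent}) that no infinitely progressing trace follows the untraceable branch. The real work here, mirroring the descent in \cref{lem:DescendingCountermodels}, is to verify that the successive step-unfoldings genuinely line up into one trace following the chosen branch, for which the fresh-variable discipline of \eqref{eq:RTC-cyc} and the persistence of principal formulas are the key ingredients. With every left {\rtcOperator} formula thus satisfied, $M,v$ falsifies $\sequent{\Gamma_\omega}{\Delta_\omega}$, and the lemma follows by \cref{sound}.
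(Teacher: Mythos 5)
Your strategy is genuinely different from the paper's. The paper proves \cref{not-prov} purely proof-theoretically, as an adaptation of \cite[Lemma~6.3]{brotherston2010sequent}: assuming some finite subsequent of $\sequent{\Gamma_\omega}{\Delta_\omega}$ had a cut-free proof, one plays that assumed proof off against the saturated untraceable branch and obtains a contradiction with untraceability (an infinitely progressing trace is transferred onto the branch). No semantics is involved, and this ordering is essential to the paper's architecture: \cref{not-prov} is established \emph{before} the counter-interpretation is even defined, because the counter-model lemma (\cref{lem:LimitSequents}) \emph{uses} \cref{not-prov} (together with \cref{lem:quotient}) in its atomic, equality, and right-hand {\rtcOperator} cases. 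Your proposal inverts this: you re-prove the counter-model property without \cref{not-prov}, replacing its uses by saturation arguments, and then conclude via \cref{sound}. The inversion is not circular, since soundness is independent of \cref{not-prov}, and your treatment of the hard case (left-hand {\rtcOperator} formulas, where untraceability forces the unfolding to terminate, yielding a finite $\varphi$-path) is essentially the argument the paper gives inside \cref{lem:LimitSequents}, so that part is sound.

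However, there is a genuine gap at the atomic/equality level, and it sits exactly where the paper invokes the lemma you are trying to prove. Your consistency claim --- no atom in both $\Gamma_\omega$ and $\Delta_\omega$, no $t = t$ in $\Delta_\omega$ --- does \emph{not} imply that the quotient model ``decides the atoms correctly''. Concretely: suppose $a = b \in \Gamma_\omega$, $q(a) \in \Gamma_\omega$ and $q(b) \in \Delta_\omega$. No atom occurs on both sides, yet in the quotient model $[a] = [b]$, so $M, v \models q(b)$, and your model fails to falsify the formula $q(b) \in \Delta_\omega$. The same problem recurs for $s = t \in \Delta_\omega$ with $s \sim t$, and for $(\rtcformula{x}{y}{\varphi})(s,t) \in \Delta_\omega$ with $s \sim t$: your ``$s \neq t$'' must mean $[s] \neq [t]$ in the quotient, which is strictly stronger than the branch not being closable by \cref{eq:refRTC}. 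To repair this you need an explicit \emph{equality-saturation} property: $\Delta_\omega$ is closed under rewriting by equalities in $\Gamma_\omega$ (using fairness of the equality schedule elements, plus the fact that single-step rewrites under arbitrary context formulas generate the congruence $\sim$), so that in the scenario above $q(a)$ also enters $\Delta_\omega$, producing an axiom at some finite stage and closing the branch. This is provable, but it is the crux of your approach and is missing as written; it is precisely the work the paper offloads to \cref{lem:quotient} plus \cref{not-prov} itself --- which is why the paper cannot, and does not, prove \cref{not-prov} semantically.
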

\begin{proof}
  Straightforward adaptation of the proof of \cite[Lemma~6.3]{brotherston2010sequent}.
\end{proof}

As standard, we use a limit sequent to induce a counter-interpretation, consisting of a Herbrand model quotiented by the equalities found in the limit sequent.

\begin{definition}[Quotient Relation]
  For a limit sequent $\sequent{\Gamma_{\omega}}{\Delta_{\omega}}$, the relation $\sim$ is defined as the smallest congruence relation on terms such that $s \sim t$ whenever $s = t \in \Gamma_{\omega}$. We write $[t]$ for the $\sim$-equivalence class of $t$, i.e.~$[t] = \{ u \,\mid\, t \sim u \}$.
\end{definition}

The following property holds of the quotient relation.

\begin{lemma}
  \label{lem:quotient}
  If $t \sim u$, then $\sequent{\Gamma_{\omega}}{F \subst{\replace{t}{x}}}$ is cut-free provable in {\ortc} if and only if $\sequent{\Gamma_{\omega}}{F \subst{\replace{u}{x}}}$.
\end{lemma}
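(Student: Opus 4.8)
The plan is to prove \cref{lem:quotient} by establishing the congruence property for a single generating step and then lifting it to the full congruence $\sim$ by an induction on how $\sim$ is built. First I would observe that it suffices to handle the case where $t$ and $u$ differ by a single application of one of the defining clauses of the congruence: namely, the base case where $s = t' \in \Gamma_\omega$ directly gives $t' \sim s$, together with the reflexive, symmetric, transitive, and congruence-closure steps. Since $\sim$ is the \emph{smallest} congruence satisfying $s \sim t$ whenever $s = t \in \Gamma_\omega$, I would argue by rule induction on the inductive generation of $\sim$: reflexivity is trivial (the statement is the identity), symmetry follows because the claimed biconditional is itself symmetric in $t$ and $u$, and transitivity follows by chaining two instances of the biconditional. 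The congruence-closure step (if $t_i \sim u_i$ for all $i$ then $f(t_1,\dots,t_n) \sim f(u_1,\dots,u_n)$) reduces to repeated use of the base case under a surrounding term context.

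The crux is therefore the base case: if $s = t' \in \Gamma_\omega$, I must show that $\sequent{\Gamma_\omega}{F\subst{\replace{t'}{x}}}$ is cut-free provable iff $\sequent{\Gamma_\omega}{F\subst{\replace{s}{x}}}$ is. Here I would invoke the equality rules $({=}\mathrm{L}_1)$ and $({=}\mathrm{L}_2)$ directly. Concretely, suppose the first sequent is cut-free provable; since $s = t' \in \Gamma_\omega$, there is a finite subsequent $\sequent{\Gamma'}{F\subst{\replace{t'}{x}}}$ (with $\Gamma' \subseteq \Gamma_\omega$) that is cut-free provable, and I may assume $s = t' \in \Gamma'$ by weakening. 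Applying $({=}\mathrm{L}_1)$ to this premise yields a cut-free proof of $\sequent{\Gamma', s=t'}{F\subst{\replace{s}{x}}}$, which is a finite subsequent of $\sequent{\Gamma_\omega}{F\subst{\replace{s}{x}}}$, giving provability of the latter (recall that by the remark preceding the lemma, provability of a limit sequent means provability of some finite subsequent). The converse direction is symmetric, using $({=}\mathrm{L}_2)$ instead. Because neither rule introduces a cut, cut-freeness is preserved throughout.

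I expect the main obstacle to be bookkeeping around the finiteness convention for limit sequents rather than any deep mathematical content: one must be careful that at each inductive step the relevant equality formula $s = t' \in \Gamma_\omega$ is actually available in the \emph{finite} subsequent being manipulated, which is guaranteed only because $\Gamma_\omega = \bigcup_{i>0}\Gamma_i$ collects finitely many formulas into any given finite subsequent and left-weakening lets us adjoin the needed equation. A secondary subtlety is the congruence-closure case, where $F\subst{\replace{t}{x}}$ and $F\subst{\replace{u}{x}}$ differ at a subterm occurrence nested inside a function symbol; there I would apply the base-case argument to the formula $F$ with $x$ relocated to mark the differing subterm position, so that a single $({=}\mathrm{L})$ step rewrites that occurrence, and then iterate over all occurrences. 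Since the entire argument is a routine application of the equality rules combined with structural induction on $\sim$, and closely mirrors the corresponding development for {\lkid} in \cite{brotherston2010sequent}, I anticipate no genuine difficulty beyond this careful matching of substitution contexts.
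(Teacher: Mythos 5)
Your overall strategy matches the paper's proof: rule induction on the inductive generation of $\sim$, with reflexivity/symmetry/transitivity handled structurally and the base case $t = u \in \Gamma_{\omega}$ discharged by the left equality rules, observing that $\Gamma_{\omega}$ already contains the needed equation (your extra care with finite subsequents and weakening is fine, though your rule labels are swapped: passing from the $t'$-instance to the $s$-instance under the hypothesis $s = t' \in \Gamma_{\omega}$ is an application of $({=}\mathrm{L}_2)$, not $({=}\mathrm{L}_1)$, and conversely).

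However, your congruence-closure case contains a genuine error. You claim it ``reduces to repeated use of the base case under a surrounding term context'', so that ``a single $({=}\mathrm{L})$ step rewrites'' each differing occurrence. This fails: the hypotheses of that case are $t_i \sim u_i$, and $\sim$-equivalence of $t_i$ and $u_i$ does \emph{not} mean $t_i = u_i \in \Gamma_{\omega}$ --- the relation $t_i \sim u_i$ may have been derived by transitivity through intermediate terms, or by a nested congruence step, in which case no instance of $({=}\mathrm{L}_1)$ or $({=}\mathrm{L}_2)$ is available to rewrite $t_i$ into $u_i$. What licenses the rewrite at each argument position is the \emph{induction hypothesis} for $t_i \sim u_i$, applied to the relocated formula: taking $G_i$ to be $F$ with each occurrence of $x$ replaced by $f(u_1, \ldots, u_{i-1}, y, t_{i+1}, \ldots, t_n)$ for a fresh variable $y$, the induction hypothesis gives that $\sequent{\Gamma_{\omega}}{G_i \subst{\replace{t_i}{y}}}$ is cut-free provable if and only if $\sequent{\Gamma_{\omega}}{G_i \subst{\replace{u_i}{y}}}$ is, and one then chains these biconditionals through the hybrid terms $f(u_1, \ldots, u_{i-1}, t_i, \ldots, t_n)$, which is exactly what the paper does. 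Since you have already set up the rule induction, the repair is immediate --- the induction hypotheses are sitting there to be used --- but as written, the step you describe (and likewise your earlier claim that the congruence step ``reduces to ... the base case'') would not go through.
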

\begin{proof}
  By induction on the conditions defining $\sim$. We use $\equiv$ to denote syntactic equality on terms, in order to distinguish from formulas $s = t$ asserting equality between (interpretations of) terms.
  \begin{description}
    \item[($t \sim t$):]
    Immediate, since then $t \equiv u$.
    \item[($t = u \in \Gamma_{\omega}$):]
    Assume $\sequent{\Gamma_{\omega}}{F \subst{\replace{t}{x}}}$ is cut-free provable, then we can apply the (${=}\text{L}_1$) rule to derive (without cut) $\sequent{\Gamma_{\omega}, t = u}{F \subst{\replace{u}{x}}}$; however notice that $\Gamma_{\omega}, t = u$ is simply $\Gamma_{\omega}$ since $t = u \in \Gamma_{\omega}$ already. The converse direction is symmetric, using rule (${=}\text{L}_2$).
    \item[($t \sim u \Rightarrow u \sim t$):]
    Immediate, by induction.
    \item[($t \sim u \wedge u \sim v \Rightarrow t \sim v$):]
    Straightforward, by induction.
    \item[($t_1 \sim u_1 \wedge \ldots \wedge t_n \sim u_n \Rightarrow f(t_1, \ldots, t_n) \sim f(u_1, \ldots, u_n)$):]
    Consider the formula $F$; clearly there exist formulas $G_1, \ldots, G_n$ and some variable $y$ such that ${G_i \subst{\scalebox{0.865}{\replace{t}{y}}}} \equiv {F \subst{\scalebox{0.865}{\replace{f(u_1, \ldots, u_{i-1}, t_i, \ldots, t_n)}{x}}}}$ for each $i \leq n$.
    By induction, each sequent $\sequent{\Gamma_{\omega}}{G_i \subst{\scalebox{0.865}{\replace{t_i}{y}}}}$ is cut-free provable if and only if so too is $\sequent{\Gamma_{\omega}}{G_i \subst{\scalebox{0.865}{\replace{u_i}{y}}}}$.
    The result then follows since ${F \subst{\scalebox{0.865}{\replace{f(t_1, \ldots, t_n)}{x}}}} \equiv {G_1 \subst{\scalebox{0.865}{\replace{t_1}{y}}}}$ and ${F \subst{\scalebox{0.865}{\replace{f(u_1, \ldots, u_n)}{x}}}} \equiv {G_n \subst{\scalebox{0.865}{\replace{u_n}{y}}}}$, and also ${G_i \subst{\scalebox{0.865}{\replace{u_i}{y}}}} \equiv {G_{i+1} \subst{\scalebox{0.865}{\replace{t_{i+1}}{y}}}}$ for each $i < n$.
  \end{description}
\end{proof}

We define the counter-interpretation as follows.

\begin{definition}[Counter-interpretations]
  \label[definition]{def:Counter-interpretations}
  Assume a search tree $T_{\omega}$ which is not a {\ortc} proof with limit sequent $\sequent{\Gamma_{\omega}}{\Delta_{\omega}}$.
  Define a structure $M_{\omega} = \langle D, I \rangle$ as follows:
  \begin{itemize}[nosep]
    \item 
    $D = \{ [t] \,\mid\, \text{t is a term} \}$ (i.e.~the set of terms quotiented by the relation $\sim$).
    \item
    For every $k$-ary function symbol $f$: 
    $I(f)([t_1], \ldots, [t_k]) = [f (t_1, \ldots, t_k)]$
    \item 
    For every $k$-ary relation symbol $q$:
    $I(q) = \{ ([t_1], \ldots, [t_k]) \,\mid\, q(t_1, \ldots, t_k) \in \Gamma_{\omega} \}$
  \end{itemize}
  We also define a valuation $\rho_{\omega}$ for $M_{\omega}$ by $\rho_{\omega}(x) = [x]$ for all variables $x$. 
\end{definition}

The counter-interpretation $\pair{M_{\omega}}{\rho_{\omega}}$ has the following property, meaning that $M_{\omega}$ is a counter-model for the corresponding sequent $\sequent{\Gamma}{\Delta}$ if its search tree $T_{\omega}$ is not a proof.

\begin{restatable}{lemma}{LimitSequentProperties}
  \label[lemma]{lem:LimitSequents}
  If $\psi \in \Gamma_{\omega}$ then $M_{\omega}, \rho_{\omega} \models \psi$; and if $\psi \in \Delta_{\omega}$ then $M_{\omega}, \rho_{\omega} \not\models \psi$.
\end{restatable}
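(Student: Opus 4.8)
The plan is to prove the two claims simultaneously by induction on the structure of the formula $\psi$, since the satisfaction relation for $\Gamma_\omega$-membership and $\Delta_\omega$-membership are mutually entangled through negation and implication. The crucial fact driving every case is that, because each schedule element appears infinitely often in the schedule and the limit sequent arises either from an open node to which no rule applies or from an untraceable branch, the sets $\Gamma_\omega$ and $\Delta_\omega$ are \emph{saturated}: whenever a compound formula lies in $\Gamma_\omega$ (resp.\ $\Delta_\omega$), the formulas that its left (resp.\ right) inference rule would introduce into the premises also lie in the appropriate limit sets. For example, if $\varphi \wedge \psi \in \Gamma_\omega$ then both $\varphi, \psi \in \Gamma_\omega$; if $\forall x\,\varphi \in \Gamma_\omega$ then $\varphi\subst{\replace{t}{x}} \in \Gamma_\omega$ for every term $t$ (using that the pair $\pair{\forall x\,\varphi}{t}$ recurs infinitely often); and dually for $\Delta_\omega$. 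I would first state and justify these saturation properties as the backbone of the argument, appealing to the construction of the search tree and to \cref{not-prov} to rule out the degenerate possibility that a formula and its decompositions could close a branch.

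First I would handle the base cases. For an atomic predicate $q(t_1,\ldots,t_k)$, membership in $\Gamma_\omega$ gives $([t_1],\ldots,[t_k]) \in I(q)$ directly by \cref{def:Counter-interpretations}, whence $M_\omega, \rho_\omega \models q(t_1,\ldots,t_k)$; conversely, if $q(t_1,\ldots,t_k) \in \Delta_\omega$ then, were it satisfied, we would have $q(u_1,\ldots,u_k) \in \Gamma_\omega$ with $t_i \sim u_i$, and saturation together with the axiom rule would close that branch, contradicting \cref{not-prov}. For equalities $s = t$, membership in $\Gamma_\omega$ yields $s \sim t$ by definition of the quotient, so $\rho_\omega(s) = [s] = [t] = \rho_\omega(t)$; for $s = t \in \Delta_\omega$, the ($=$R) axiom forces $s \not\sim t$, giving $M_\omega, \rho_\omega \not\models s = t$. \Cref{lem:quotient} is what guarantees that these readings are well defined on equivalence classes.

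For the inductive step I would treat the propositional connectives and quantifiers by unfolding the appropriate saturation property and invoking the induction hypothesis, taking care with the two-sided dependence: e.g.\ for $\varphi \rightarrow \psi \in \Delta_\omega$, saturation of the ($\rightarrow$R) rule places $\varphi \in \Gamma_\omega$ and $\psi \in \Delta_\omega$, so by induction $M_\omega,\rho_\omega \models \varphi$ and $M_\omega,\rho_\omega \not\models \psi$, whence $M_\omega,\rho_\omega \not\models \varphi\rightarrow\psi$. The existential and universal cases use that the term witnesses range over all terms, and that $D$ consists precisely of equivalence classes of terms, so quantification over $D$ is matched by the infinitely-recurring schedule pairs $\pair{\exists x\,\varphi}{t}$ and $\pair{\forall x\,\varphi}{t}$.

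The main obstacle will be the {\rtcOperator} case, which does not reduce to a single unfolding. For $(\rtcformula{x}{y}{\varphi})(s,t) \in \Gamma_\omega$ I would argue by a secondary induction on the degree, using the saturation corresponding to \cref{eq:RTC-cyc}: the left premise supplies $s = t$ (handled by the equality base case, giving $v(s)=v(t)$), while the right premise supplies $(\rtcformula{x}{y}{\varphi})(s,z)$ and $\varphi\subst{\replace{z}{x},\replace{t}{y}}$ for the fresh $z$, letting me peel off one step of the closure and recurse. The genuinely delicate direction is $(\rtcformula{x}{y}{\varphi})(s,t) \in \Delta_\omega$, where I must show the closure is \emph{not} satisfied: here I would use the saturation arising from \cref{eq:transRTC} applied with every intermediary term $r$ (again exploiting that the schedule revisits each tuple $\langle(\rtcformula{x}{y}{\varphi})(s,t),r,\ldots\rangle$ infinitely often), so that for no choice of $r$ can both $(\rtcformula{x}{y}{\varphi})(s,r)$ and $\varphi\subst{\replace{r}{x},\replace{t}{y}}$ be forced true, thereby blocking any finite chain $a_0,\ldots,a_n$ witnessing satisfaction in $M_\omega$. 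Closing this argument cleanly — ensuring that the denial of membership propagates along \emph{every} potential witnessing path in the term model — is the crux, and I expect it to mirror, but be somewhat more involved than, the corresponding inductive-predicate case in \cite{brotherston2010sequent}.
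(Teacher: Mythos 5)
Your overall strategy (saturation of the limit sets, the quotient term model, contradiction with \cref{not-prov}) matches the paper's, and your base cases and propositional/quantifier cases are essentially the paper's argument. But there is a genuine gap at the heart of the {\rtcOperator} case for $\Gamma_\omega$. Your ``secondary induction on the degree'' cannot get off the ground: the degree $\degreeOf{\phi}{M_\omega}{\rho_\omega}$ is a semantic quantity that is only defined once you already know $M_\omega, \rho_\omega \models \phi$ --- which is exactly what you are trying to prove --- so it cannot serve as an induction measure here. Concretely, saturation under \cref{eq:RTC-cyc} only tells you that either $s = t \in \Gamma_\omega$, or else $(\rtcformula{x}{y}{\varphi})(s,z), \varphi\subst{\replace{z}{x},\replace{t}{y}} \in \Gamma_\omega$ for some fresh $z$; in the latter case the new formula $(\rtcformula{x}{y}{\varphi})(s,z)$ is no smaller by any measure available to you, and saturation alone is perfectly consistent with this unfolding repeating forever, producing $z, z', z'', \ldots$ without ever reaching an equality. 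What rules this out --- and what is absent from your proposal --- is the defining property of the untraceable branch: each right-premise unfolding is a \emph{progressing} trace pair, so an infinite chain of such unfoldings would assemble into an infinitely progressing trace following the untraceable branch, contradicting the very reason that branch was selected. This is the one place in the whole completeness argument where failure of the global trace condition is actually used. The paper uses it to conclude that the chain of unfoldings is finite, yielding finitely many variables with $s = z_1 \in \Gamma_\omega$, $\varphi\subst{\replace{z_i}{x},\replace{z_{i+1}}{y}} \in \Gamma_\omega$ for $i < n$, and $\varphi\subst{\replace{z_n}{x},\replace{t}{y}} \in \Gamma_\omega$, to which the induction hypothesis applies, giving a genuine witnessing sequence in $M_\omega$. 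Without this, your ``peel off one step and recurse'' has no termination argument.

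Two lesser issues. First, your outer induction ``on the structure of $\psi$'' is too weak: in both the quantifier and the {\rtcOperator} cases the induction hypothesis must be applied to substitution instances such as $\varphi\subst{\replace{t}{x}}$ and $\varphi\subst{\replace{z}{x},\replace{t}{y}}$, which are not subformulas. The paper instead performs well-founded induction on the lexicographic pair consisting of the number of binders in $\psi$ and its structure; substitution leaves the binder count unchanged while unfolding a binder strictly decreases it, so the hypothesis does apply. Second, your $\Delta_\omega$ direction for {\rtcOperator} formulas has the right ingredients (saturation under \cref{eq:transRTC} for \emph{every} term $r$, plus the fact that $D$ consists of term classes, so every hypothetical witnessing chain is named by terms), but the step you defer as ``the crux'' is closed in the paper by an inner induction on the length $n$ of the hypothetical chain: writing $a_{n-1} = [r]$, saturation gives either $(\rtcformula{x}{y}{\varphi})(s,r) \in \Delta_\omega$ (inner induction hypothesis) or $\varphi\subst{\replace{r}{x},\replace{t}{y}} \in \Delta_\omega$ (outer induction hypothesis); the base case $n = 1$ additionally requires observing that the branch cannot follow the left premise of \cref{eq:transRTC} with $r \equiv s$, since $(\rtcformula{x}{y}{\varphi})(s,s) \in \Delta_\omega$ would close the node as an instance of \cref{eq:refRTC}, contradicting the infinitude of the untraceable branch. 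So that direction is completable along the lines you anticipate, but the $\Gamma_\omega$ direction as you have set it up would fail.
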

\begin{proof}
  By well-founded induction using the lexicographic ordering of the number of binders (i.e.~$\exists$, $\forall$, and {\rtcOperator}) in $\psi$ and the structure of $\psi$. 
  Notice that, by definition, $\rho_{\omega}(t) = [t]$ for all terms $t$.
  
  \medskip
  For $\psi$ atomic (i.e.~of the form $q(t_{1}, \ldots, t_{k})$), if $\psi \in \Gamma_{\omega}$ then it follows immediately by \Cref{def:Counter-interpretations} that $M_{\omega}, \rho_{\omega} \models q(t_{1}, \ldots, t_{k})$. 
  If, on the other hand, $\psi \in \Delta_{\omega}$ then assume for contradiction that indeed $M_{\omega}, \rho_{\omega} \models q(t_{1}, \ldots, t_{k})$. 
  It then follows from \cref{def:Counter-interpretations} that there are terms $u_1, \ldots, u_k$ such that $q(u_1, \ldots, u_k) \in \Gamma_{\omega}$ and $u_i \sim t_i$ for each $i \leq k$.
  Notice that then we can prove $\sequent{\Gamma_{\omega}}{q(u_1, \ldots, u_k)}$ axiomatically, and so it follows by ($k$ applications of) \cref{lem:quotient} that $\sequent{\Gamma_{\omega}}{q(t_1, \ldots, t_k)}$ is cut-free provable.
  However, since $q(t_1, \ldots, t_k) \in \Delta_{\omega}$, this would mean that the limit sequent $\sequent{\Gamma_{\omega}}{\Delta_{\omega}}$ is cut-free provable, which contradicts \cref{not-prov}.
  Thus we conclude that in fact $M_{\omega}, \rho_{\omega} \not\models q(t_{1}, \ldots, t_{k})$.
  
  \medskip
  For $\psi$ an equality formula $s = t$, if $\psi \in \Gamma_{\omega}$ then we have immediately by \cref{def:Counter-interpretations} that $\rho_{\omega}(s) = \rho_{\omega}(t)$ and thus that $M_{\omega}, \rho_{\omega} \models s = t$ by \cref{def:Semantics:Standard}.
  If, on the other hand, $s = t \in \Delta_{\omega}$, suppose for contradiction that indeed $M_{\omega}, \rho_{\omega} \models s = t$.
  It then follows from \cref{def:Counter-interpretations} that $s \sim t$.
  Since we may derive $\sequent{\Gamma_{\omega}}{s = s}$ axiomatically, it thus follows from \cref{lem:quotient} that there is a cut-free proof of $\sequent{\Gamma_{\omega}}{s = t}$.
  However, since $s = t \in \Delta_{\omega}$ this would mean that the limit sequent $\sequent{\Gamma_{\omega}}{\Delta_{\omega}}$ is cut-free provable, which contradicts \cref{not-prov}.
  We thus conclude that in fact $M_{\omega}, \rho_{\omega} \not\models s = t$.

  \medskip
  The cases where $\psi$ is a standard compound first-order formula follow easily by induction.

  \medskip
  In case $\psi = (\rtcformula{x}{y}{\varphi})(s, t)$, we reason as follows.
  \begin{itemize}[wide,itemsep={\medskipamount}]
    \item
    For the first part of the lemma assume $(\rtcformula{x}{y}{\varphi})(s, t)\in \Gamma_\omega$. 
    Then, by the construction of $T_{\omega}$, there is at least one occurrence of rule \eqref{eq:RTC-cyc} with active formula $\psi$ in the untraceable branch; thus there a two cases:
    \begin{enumerate}[label={\roman*)},topsep={\smallskipamount}] 
      \item
      The branch follows the left-hand premise, so there is $s = t \in \Gamma_{\omega}$.
      Therefore, by \cref{def:Counter-interpretations}, $\rho_{\omega}(s) = \rho_{\omega}(t)$ and so it follows immediately from \cref{def:Semantics:Standard} that $M_{\omega}, \rho_{\omega} \models (\rtcformula{x}{y}{\varphi})(s, t)$.
      \item
      The branch follows the right-hand premise and, since there is no infinitely progressing trace along the untraceable branch, there must be a finite number of distinct variables $z_1, \ldots, z_n$ ($n > 0$) such that $\varphi \subst{\scalebox{0.865}{\replace{z_{i}}{x}, \replace{z_{i+1}}{y}}} \in \Gamma_{\omega}$, for each $i < n$, and $\varphi \subst{\scalebox{0.865}{\replace{z_{n}}{x}, \replace{t}{y}}} \in \Gamma_{\omega}$.
      Then, by the I.H., $M_{\omega}, \rho_{\omega} \models \varphi \subst{\scalebox{0.865}{\replace{z_{i}}{x}, \replace{z_{i+1}}{y}}}$ for each $i < n$, and $M_{\omega}, \rho_{\omega} \models \varphi \subst{\scalebox{0.865}{\replace{z_{n}}{x}, \replace{t}{y}}}$.
      Thus, $M_{\omega}, \rho_{\omega} [x := [z_{i}], y := [z_{i+1}]] \models \varphi$ for each $i < n$, and $M_{\omega}, \rho_{\omega} [x := [z_{n}], y := [t]] \models \varphi$.
      Moreover, the untraceable branch also follows the left-hand premise of rule \eqref{eq:RTC-cyc} with active formula $(\rtcformula{x}{y}{\varphi})(s, z_{1})$.
      Thus $s = z_{1} \in \Gamma_{\omega}$, and so $\rho_{\omega}(s) = \rho_{\omega}(z_{1}) = [z_{1}]$.
      We then have from \Cref{def:Semantics:Standard} that $M_{\omega}, \rho_{\omega} \models \psi$.
    \end{enumerate}
    \item
    For the second part of the lemma we first prove, by an inner induction on $n$, the following auxiliary result for all terms $s$ and $t$ and elements $a_0, \ldots, a_n \in D$ ($n > 0$):
    \medskip
    \begin{adjustwidth}{1em}{1em}
      if $(\rtcformula{x}{y}{\varphi})(s, t) \in \Delta_{\omega}$, with $\rho_{\omega}(s) = a_0$ and $\rho_{\omega}(t) = a_n$, then there exists some $i < n$ such that $M_{\omega}, \rho_{\omega} [x := a_{i}, y := a_{i+1}] \not\models \varphi$.
    \end{adjustwidth}
    \smallskip
    \begin{description}
      \item[($n = 1$):]
      Since $(\rtcformula{x}{y}{\varphi})(s, t) \in \Delta_{\omega}$, we have $\varphi \subst{\scalebox{0.865}{\replace{s}{x}, \replace{t}{y}}} \in \Delta_{\omega}$ by construction as the untraceable branch must traverse an instance of rule \eqref{eq:transRTC} with $r \equiv s$ and moreover must traverse the right-hand premise (otherwise, we would have $(\rtcformula{x}{y}{\varphi})(s, s) \in \Delta_{\omega}$ resulting in the branch being closed by an instance of rule \eqref{eq:refRTC}). Thus by the outer induction it follows that $M_{\omega}, \rho_{\omega} \not\models \varphi \subst{\scalebox{0.865}{\replace{s}{x}, \replace{t}{y}}}$ and thence that $M_{\omega}, \rho_{\omega} [x := \rho_{\omega}(s), y := \rho_{\omega}(t)] \not\models \varphi$ as required.
      \item[($n = k + 1$, $k > 0$):]
      Then there exists some term $r$ such that $a_{k} = [r] = \rho_{\omega}(r)$. If we have $(\rtcformula{x}{y}{\varphi})(s, t) \in \Delta_{\omega}$, by construction of the search tree $T_{\omega}$ we then also have that either $(\rtcformula{x}{y}{\varphi})(s, r) \in \Delta_{\omega}$ or $\varphi \subst{\scalebox{0.865}{\replace{r}{x}, \replace{t}{y}}} \in \Delta_{\omega}$, as the untraceable branch must traverse an instance of rule \eqref{eq:transRTC} for the term $r$. In the case of the former, the required result holds by the inner induction. In the case of the latter, we have $M_{\omega}, \rho_{\omega} \not\models \varphi \subst{\scalebox{0.865}{\replace{r}{x}, \replace{t}{y}}}$ by the outer induction and thence that $M_{\omega}, \rho_{\omega} [x := \rho_{\omega}(r), y := \rho_{\omega}(t)] \not\models \varphi$; i.e.~$M_{\omega}, \rho_{\omega} [x := a_{k}, y := a_{k+1}] \not\models \varphi$ as required.
    \end{description}
    We now show that the primary result holds. Assume $(\rtcformula{x}{y}{\varphi})(s, t) \in \Delta_{\omega}$ and suppose for contradiction that $M_{\omega}, \rho_{\omega} \models (\rtcformula{x}{y}{\varphi})(s, t)$ holds. Thus, by \cref{def:Semantics:Standard}, there are two cases to consider.
    \begin{itemize}[topsep={\medskipamount}]
      \item
      If $\rho_{\omega}(s) = \rho_{\omega}(t)$ then $s \sim t$.
      Thus since we may derive $\sequent{\Gamma_{\omega}}{(\rtcformula{x}{y}{\varphi})(s, s)}$ by applying rule \eqref{eq:refRTC}, by \cref{lem:quotient} there must also be a cut-free proof of $\sequent{\Gamma_{\omega}}{(\rtcformula{x}{y}{\varphi})(s, t)}$.
      However, since $(\rtcformula{x}{y}{\varphi})(s, t) \in \Delta_{\omega}$ this would imply that $\sequent{\Gamma_{\omega}}{\Delta_{\omega}}$ is cut-free provable, which contradicts \cref{not-prov}.
      \item
      Otherwise, there are $a_0, \ldots, a_1 \in D$ ($n > 0$) such that $\rho_{\omega}(s) = a_0$, $\rho_{\omega}(t) = a_n$ and $M_{\omega}, \rho_{\omega} [x := a_{i}, y := a_{i+1}] \models \varphi$ for each $i < n$. However this directly contradicts the auxiliary result proved above.
    \end{itemize}
    In both cases, we have derived a contradiction, and so we then conclude that $M_{\omega}, \rho_{\omega} \not\models (\rtcformula{x}{y}{\varphi})(s, t)$ as required.
    \qed
  \end{itemize}
\end{proof}

The completeness result therefore follows since, by construction, a sequent $S$ is contained within its corresponding limit sequents.

\begin{restatable}[Completeness]{theorem}{infinitarycompleteness}
  \label{thm:ORTC:Cut-free-complete}
  {\ortc} is complete for standard semantics.
\end{restatable}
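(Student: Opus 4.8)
The plan is to prove the contrapositive, in the manner standard for infinite-descent systems: rather than constructing a proof of a valid sequent directly, I exhibit for any \emph{un}provable sequent a standard counter-model, so that validity forces provability. Concretely, given a sequent $\sequent{\Gamma}{\Delta}$, fix once and for all a schedule $\seq{E_i}{i > 0}$; this determines the search tree $T_\omega$ for $\sequent{\Gamma}{\Delta}$, which by construction is recursive and cut-free. The argument then proceeds by the dichotomy of \cref{def:LimitSequent}: either $T_\omega$ is a {\ortc} proof, in which case $\sequent{\Gamma}{\Delta}$ is (cut-free) provable and we are done, or it is not.

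Suppose $T_\omega$ is not a proof. By \cref{def:LimitSequent} there is then an associated limit sequent $\sequent{\Gamma_\omega}{\Delta_\omega}$, arising either from an open leaf to which no schedule element applies, or from an untraceable infinite branch. From this limit sequent, \cref{def:Counter-interpretations} builds the counter-interpretation $\pair{M_\omega}{\rho_\omega}$. The substantive work is already discharged by \cref{lem:LimitSequents}: every formula of $\Gamma_\omega$ is satisfied by $\pair{M_\omega}{\rho_\omega}$, and every formula of $\Delta_\omega$ is falsified by it.

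It remains to connect this counter-model back to the original sequent. Here I use that the search-tree construction never discards side formulas along a branch: in the open-node case the open leaf contains $\Gamma$ on the left and $\Delta$ on the right, and in the untraceable-branch case the unions $\Gamma_\omega = \bigcup_{i} \Gamma_i$ and $\Delta_\omega = \bigcup_{i} \Delta_i$ include the root components $\Gamma_1 = \Gamma$ and $\Delta_1 = \Delta$. Hence $\Gamma \subseteq \Gamma_\omega$ and $\Delta \subseteq \Delta_\omega$, so by \cref{lem:LimitSequents} we obtain $M_\omega, \rho_\omega \models \psi$ for all $\psi \in \Gamma$ and $M_\omega, \rho_\omega \not\models \psi$ for all $\psi \in \Delta$; that is, $\pair{M_\omega}{\rho_\omega}$ is a standard counter-model for $\sequent{\Gamma}{\Delta}$, contradicting its validity. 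Therefore $T_\omega$ must be a proof, and since $T_\omega$ is cut-free the result is in fact cut-free completeness.

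The conceptual weight of the theorem lies entirely in the preceding lemmas, so the main obstacle is not in this final step but in \cref{lem:LimitSequents} (and its dependence on \cref{not-prov} and \cref{lem:quotient}), particularly the {\rtcOperator} case: for $\Delta_\omega$ one must show, via the auxiliary induction on the minimal-length witnessing path, that no finite chain of $\varphi$-steps can connect $s$ to $t$ in $M_\omega$, while for $\Gamma_\omega$ one must read off an actual finite witnessing chain from the \emph{failure} of the trace condition along the untraceable branch. The only genuinely new check at the level of the completeness theorem itself is the persistence of the root formulas into the limit sequent, which is immediate from the monotone, set-based formulation of the calculus.
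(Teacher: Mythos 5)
Your proposal is correct and takes essentially the same approach as the paper: the paper's proof likewise considers the search tree $T_\omega$ and, when it is not a proof, invokes \cref{lem:LimitSequents} to conclude that $\pair{M_\omega}{\rho_\omega}$ refutes the sequent, so that validity forces $T_\omega$ to be a recursive, cut-free {\ortc} proof. The one step you spell out explicitly---that the root sequent persists into the limit sequent, so the counter-model for $\sequent{\Gamma_\omega}{\Delta_\omega}$ is also one for $\sequent{\Gamma}{\Delta}$---is exactly the observation the paper records in the sentence immediately preceding the theorem (``by construction, a sequent $S$ is contained within its corresponding limit sequents'').
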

\begin{proof}
  Now given any sequent $S$, if some search tree $T_{\omega}$ contracted for $S$ is not an {\ortc} proof then it follows from \Cref{lem:LimitSequents} that $S$ is not valid ($M_\omega$ is a counter model for it). Thus if $S$ is valid, then $T_\omega$ is a recursive \ortc~ proof for it.
  \qed
\end{proof}

We obtain admissibility of cut as the search tree $T_\omega$ is cut-free.

\begin{corollary}[Cut admissibility]
  \label[corollary]{thm:ORTC:Cut-elim}
  Cut is admissible in {\ortc}.
\end{corollary}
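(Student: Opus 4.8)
The plan is to obtain cut admissibility as a purely \emph{semantic} consequence, by combining the soundness of \ortc~(\cref{sound}) with the fact that the completeness argument of \cref{thm:ORTC:Cut-free-complete} produces \emph{cut-free} proofs. The guiding observation is that the completeness theorem does not merely assert that every valid sequent has \emph{some} proof: its proof actually exhibits one, namely the search tree $T_\omega$, which by construction never uses (Cut). Hence we need no syntactic cut-elimination procedure (no local permutation of cuts past other rules, which—as noted earlier for \eqref{eq:RTC-min}—can fail); instead we route through validity.

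First I would confirm that soundness genuinely covers the full system \emph{including} (Cut). This reduces to checking that (Cut) is among the standard $\LK_=$ rules handled by \cref{lem:DescendingCountermodels}. Given a counter-model $\pair{M}{v}$ for the conclusion $\sequent{\Gamma, \Sigma}{\Delta, \Pi}$ of a cut on formula $\varphi$, one splits on whether $M, v \models \varphi$: if so, $\pair{M}{v}$ invalidates the right premise $\sequent{\Sigma, \varphi}{\Pi}$, and if not, it invalidates the left premise $\sequent{\Gamma}{\varphi, \Delta}$, establishing part~(1) of the lemma. Since (Cut) carries no progressing trace pairs—every trace simply continues with $\tau = \tau'$—part~(2) holds trivially. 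Thus any \ortc~proof, even one using cut, is sound.

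Next I would assemble the corollary. Suppose $\sequent{\Gamma}{\Delta}$ is provable in \ortc~with arbitrary uses of cut. By \cref{sound} it is valid for the standard semantics. Then, by the construction in the proof of \cref{thm:ORTC:Cut-free-complete}, the search tree $T_\omega$ for $\sequent{\Gamma}{\Delta}$ is a recursive \ortc~proof of it; and $T_\omega$ is built solely from axioms together with the logical left/right rules and the \rtcOperator~rules \eqref{eq:refRTC}, \eqref{eq:transRTC}, and \eqref{eq:RTC-cyc}, so it contains no instance of (Cut). Therefore every cut-provable sequent is cut-free provable, which is exactly the claim that cut is admissible. The only point requiring genuine care is the first step—verifying that the Descending Counter-models lemma really does apply to (Cut)—after which the conclusion is immediate.
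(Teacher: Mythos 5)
Your proposal is correct and follows essentially the same route as the paper: the paper likewise obtains cut admissibility immediately from the fact that the completeness proof (\cref{thm:ORTC:Cut-free-complete}) constructs the cut-free search tree $T_\omega$ as a proof of any valid sequent, combined with soundness of the full system (where the (Cut) case of \cref{lem:DescendingCountermodels} is handled as one of the standard $\LK_=$ rules, exactly as you verify). Your explicit check of the (Cut) case is a detail the paper delegates to the cited literature, but the argument is the same.
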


\subsection{{\rtclogic} with Pairs}

To obtain the full inductive expressivity we must allow the formation of the transitive closure of not only binary relations, but any $2n$-ary relation. In \cite{AvronTC03} it was shown that taking such a $RTC^n$ operator for every $n$ (instead of just for $n=1$) results in a more expressive logic, namely one that captures all finitary first-order definable inductive definitions and relations. Nonetheless, from a proof theoretical point of view having  infinitely many such operators is suboptimal.  Thus, we here instead incorporate the notion of ordered pairs and use it to encode such operators. For example, writing $\pair{x}{y}$ for the application of the pairing function $\langle\rangle(x,y)$, the formula $(RTC^2_{x_1, x_2, y_1, y_2}\,\varphi)(s_1, s_2, t_1, t_2)$ can be encoded by: 
\begin{equation*}
  (\rtcformula{x}{y}{\exists x_1, x_2, y_1, y_2 \suchthat x = \pair{x_1}{x_2} \wedge y = \pair{y_1}{y_2} \wedge \varphi})(\pair{s_1}{s_2}, \pair{t_1}{t_2})
\end{equation*}
Accordingly, we may assume languages that explicitly contain a pairing function, providing that we (axiomatically) restrict to structures that interpret it as such (i.e.~the \emph{admissible} structures). 
For such languages we can consider two induced semantics: admissible standard semantics and admissible Henkin semantics, obtained by restricting the (first-order part of the) structures to be admissible.

The above proof systems are extended to capture ordered pairs as follows.

\begin{definition}
  For a signature containing at least one constant $c$, and a binary function symbol denoted by $\pairSym$, the proof systems {\rtcPairs}, {\ortcPairs}, and {\cortcPairs} are obtained from {\rtc}, {\ortc}, {\cortc} (respectively) by the addition of the following rules:
  \begin{flalign*}
    \qquad
      &
    \begin{gathered}
    \infer
      {\sequent
        {\Gamma}
        {x = u \wedge y = v, \Delta}}
      {\sequent
        {\Gamma}
        {\pair{x}{y} = \pair{u}{v}, \Delta}}
    \end{gathered}
       &
    \begin{gathered}
      \infer
        {\sequent
          {\Gamma, \pair{x}{y} = c}
          {\Delta}}
        {}
    \end{gathered}
      &
    \qquad
  \end{flalign*}
\end{definition}

The proofs of \cref{rtc_comp,thm:ORTC:Cut-free-complete} can easily be extended to obtain the following results for languages with a pairing function. For completeness, the key observation is that the model of the counter-interpretation is one in which every binary function is a pairing function. That is, the interpretation of any binary function is such that satisfies the standard pairing axioms. Therefore, the model of the counter-interpretation is an admissible structure.

\begin{theorem}[Soundness and Completeness of {\rtcPairs} and {\ortcPairs}] 
  \label{rtcpairs_comp}
  The proof systems {\rtcPairs} and {\ortcPairs} are both sound and complete for the admissible forms of Henkin and standard semantics, respectively.
\end{theorem}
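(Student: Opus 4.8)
The plan is to bootstrap off the soundness and completeness results already proved for {\rtc} (\cref{rtc_comp}) and for {\ortc} (\cref{sound,thm:ORTC:Cut-free-complete}), handling the two new pairing rules and, crucially, checking that the counter-models they produce are \emph{admissible}. Recall that both rules merely internalise the defining axioms of a pairing function: the left rule is injectivity, $\pair{x}{y} = \pair{u}{v} \to x = u \wedge y = v$, and the right rule is disjointness of the range of $\pairSym$ from the constant, $\pair{x}{y} \neq c$. Since these hold in every admissible structure by definition, soundness is the easy direction.

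For {\ortcPairs} I would extend \cref{lem:DescendingCountermodels} with the two new cases, now quantifying over admissible models only. For injectivity, an admissible $\pair{M}{v}$ falsifying the conclusion has $M, v \not\models x = u \wedge y = v$; injectivity (contrapositively) gives $M, v \not\models \pair{x}{y} = \pair{u}{v}$, so $\pair{M}{v}$ falsifies the premise, and as no {\rtcOperator} formula is touched all trace degrees are preserved. For disjointness there is nothing to check: no admissible $\pair{M}{v}$ can falsify the (axiomatic) conclusion, since $\pair{x}{y} = c$ is never satisfied, so the premise of the lemma is vacuous. The descent argument of \cref{sound} then applies verbatim, and soundness of {\rtcPairs} for admissible Henkin semantics is immediate for the same reason. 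For completeness I would add schedule elements that drive applications of the two rules, extend the search-tree construction and \cref{lem:LimitSequents} by routine cases, and then argue that the limit model $M_\omega$ is admissible, so that it is a legitimate counter-model in the \emph{admissible} semantics. The completeness statements for the two systems then follow exactly as \cref{thm:ORTC:Cut-free-complete} does for {\ortc}, and as the Henkin half of \cref{rtc_comp} does for {\rtc}.

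The main obstacle is precisely this admissibility of $M_\omega$: one must show that $I(\pairSym)([t_1],[t_2]) = [\pair{t_1}{t_2}]$ is injective with range avoiding $[c]$. In a free term model every function symbol is trivially injective with pairwise-disjoint ranges, so the only danger is that the quotient relation $\sim$ — the least congruence generated by the equalities collected in $\Gamma_\omega$ — collapses two distinct pairs or identifies a pair with $c$. The delicate point is that one cannot simply invoke soundness of the pairing rules to rule this out, as that soundness presupposes the very admissibility we are trying to establish. Instead I would either (a) show directly, using the left rule together with \cref{lem:quotient}, that $\pair{t_1}{t_2} \sim \pair{u_1}{u_2}$ forces $\sequent{\Gamma_\omega}{t_1 = u_1 \wedge t_2 = u_2}$ to be cut-free provable and hence, by a syntactic analysis of cut-free equality derivations, that $t_1 \sim u_1$ and $t_2 \sim u_2$; or (b) redefine $\sim$ to be the least congruence that additionally respects pairing injectivity, re-proving \cref{not-prov}, \cref{lem:quotient} and \cref{lem:LimitSequents} for this refined relation. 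In either case the disjointness rule is what guarantees consistency — that no $\pair{t_1}{t_2}$ is ever forced equal to $c$ — and verifying that this refinement does not disturb the rest of the limit-sequent machinery is where the real work lies.
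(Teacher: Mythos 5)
Your plan follows the paper's own proof: in the paper, the entire argument is the remark that the proofs of \cref{rtc_comp} and \cref{thm:ORTC:Cut-free-complete} ``can easily be extended'', the key observation being that the counter-interpretation $M_\omega$ interprets every binary function symbol as a pairing function and is hence admissible. Your soundness half (extending \cref{lem:DescendingCountermodels} with the two new cases, the disjointness case being vacuous) is correct, and you are right --- indeed more careful than the paper --- that admissibility of $M_\omega$ is the crux and cannot be obtained by a circular appeal to soundness.

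The gap is that both of your strategies for that crux fail, because the crux itself is false for the rules as formulated. The injectivity rule acts only on succedent formulas, the disjointness rule fires only on a \emph{literal} occurrence of $\pair{x}{y} = c$ in the antecedent, and $({=}\text{L}_1)$, $({=}\text{L}_2)$ rewrite only succedent formulas, so proof search can never saturate antecedent equalities. Concretely, $\sequent{\pair{a}{b} = z, z = c}{}$ is admissibly valid, yet cut-free the only rules applicable to it are weakening and substitution; its search tree is a single stuck node and its limit model identifies $[\pair{a}{b}] = [z] = [c]$. So the disjointness rule does not ``guarantee consistency'': it never fires, since $\pair{a}{b} = c$ never appears literally. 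Likewise $\sequent{\pair{a}{b} = \pair{a'}{b'}}{a = a'}$ is admissibly valid but its search gets stuck, leaving $\Gamma_\omega = \{\pair{a}{b} = \pair{a'}{b'}\}$ and $a = a' \in \Delta_\omega$. Here $\sequent{\Gamma_\omega}{a = a' \wedge b = b'}$ \emph{is} cut-free provable (one injectivity step closing with an axiom), yet $a \not\sim a'$, since congruence closure never inverts a constructor; this refutes the key implication in your option (a), and it is the injectivity rule itself that breaks it. Option (b) is also unrepairable: the refined $\sim$ forces $M_\omega \models a = a'$ although $a = a' \in \Delta_\omega$, so \cref{lem:LimitSequents} becomes false --- necessarily, as this sequent has \emph{no} admissible counter-model at all. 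What these examples really show is that \emph{cut-free} completeness (and with it the pairs analogue of \cref{thm:ORTC:Cut-elim}) fails for {\ortcPairs} and {\rtcPairs} as formulated, so no elaboration of the search-tree argument --- which can only ever deliver cut-free completeness --- can succeed; the paper's own sketch suffers from the same defect. The theorem as literally stated (completeness, with cut available) does hold, but by a different route: a sequent is admissibly valid iff it becomes valid in the unrestricted semantics after adding $\forall x, y, u, v \,.\, \pair{x}{y} = \pair{u}{v} \rightarrow (x = u \wedge y = v)$ and $\forall x, y \,.\, \neg\, \pair{x}{y} = c$ to its antecedent; apply \cref{thm:ORTC:Cut-free-complete} (resp.~\cref{rtc_comp}) to that enlarged sequent, derive the two axioms in the pairs systems from the new rules, and finish with cuts. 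Alternatively, reformulate the pairing rules as left rules --- e.g.~infer $\sequent{\Gamma, \pair{s}{t} = \pair{u}{v}}{\Delta}$ from $\sequent{\Gamma, \pair{s}{t} = \pair{u}{v}, s = u, t = v}{\Delta}$, together with antecedent equality rewriting --- whereupon your saturation plan for $\Gamma_\omega$ does go through.
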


\section{Relating the Finitary and Infinitary Proof Systems}
\label{sec:ProofSystemComparisons}

This section discusses the relation between the explicit and the cyclic system for {\TC}. In \cref{rtc-cortc} we show that the former is contained in the latter. The converse direction, which is much more subtle, is discussed in \cref{cortc-rtc}.

\subsection{Inclusion of {\rtc} in {\cortc}}
\label{rtc-cortc}

Provability in the explicit induction system implies provability in the cyclic system. The key property is that we can derive the explicit induction rule in the cyclic system, as shown in \Cref{fig:implicit-explicit}.

\begin{figure*}[t]
  \begin{equation*}
  \scalebox{0.95}{
    \kern -1.75cm
    \begin{prooftree}
      \prooftree
        \prooftree
          \prooftree
            \justifies
            {\sequent
              {\Gamma, {\psi \subst{\mbox{\replace{v}{x}}}}}
              {\Delta, {\psi \subst{\mbox{\replace{v}{x}}}}}}
            \using {\rulename{WL,WR,Ax}}
          \endprooftree
          \justifies
          \tikz \coordinate (left-edge) at (-1em,0) ;
          {\sequent
            {\Gamma, {\psi \subst{\mbox{\replace{v}{x}}}}, v=w}
            {\Delta, {\psi \subst{\mbox{\replace{w}{x}}}}}}
          \using {\rulename{${=}\text{L}_1$}}
        \endprooftree
        {\multiput(37.5,0)(0,5){8}{.}
        \raise 1.5cm
        \hbox to 4.5cm {
        \kern -5.875cm
        \prooftree
          \prooftree
            \begin{gathered}
              \tikz \coordinate (bud) at (0,-0.5em) ; \\
              {\sequent
                {\Gamma, {\psi \subst{\mbox{\replace{v}{x}}}}, (\rtcformula{x}{y}{\varphi})(v, w)}
                {\Delta, {\psi \subst{\mbox{\replace{w}{x}}}}}}
              \\[0.25em]
            \end{gathered}
            \justifies
            \begin{gathered}
            \vspace{0.125em}
            {\sequent
              {\Gamma, {\psi \subst{\mbox{\replace{v}{x}}}}, (\rtcformula{x}{y}{\varphi})(v, z)}
              {\Delta, {\psi \subst{\mbox{\replace{z}{x}}}}}}
            \end{gathered}
            \using {\rulename{Subst}}
          \endprooftree
          \prooftree
            \begin{gathered}
            \sequent
              {\Gamma, \psi, \varphi}
              {\Delta, {\psi \subst{\mbox{\replace{y}{x}}}}}
            \\[0.375em]
            \end{gathered}
            \justifies
            \begin{gathered}
            \sequent
              {\Gamma, {\psi \subst{\mbox{\replace{z}{x}}}}, {\varphi \subst{\scalebox{0.865}{\replace{z}{x}, \replace{w}{y}}}}}
              {\Delta, {\psi \subst{\mbox{\replace{w}{x}}}}}
            \\[0.3em]
            \end{gathered}
            \using {\rulename{Subst}}
          \endprooftree
          \justifies
          {\sequent
            {\Gamma, {\psi \subst{\mbox{\replace{v}{x}}}}, (\rtcformula{x}{y}{\varphi})(v, z), {\varphi \subst{\scalebox{0.865}{\replace{z}{x}, \replace{w}{y}}}}}
            {\Delta, {\psi \subst{\mbox{\replace{w}{x}}}}}}
          \using {\rulename{Cut}}
        \endprooftree}}
        \justifies
        \begin{gathered}
        {\tikz \coordinate (companion) at (-0.5em,0.25em) ;
         \sequent
          {\Gamma, {\psi \subst{\mbox{\replace{v}{x}}}}, (\rtcformula{x}{y}{\varphi})(v, w)}
          {\Delta, {\psi \subst{\mbox{\replace{w}{x}}}}}}
          \\[0.36em]
        \end{gathered}
        \using {\rulename{\ref{eq:RTC-cyc}}}
      \endprooftree
      \justifies
      {\sequent
        {\Gamma, {\psi \subst{\mbox{\replace{s}{x}}}}, (\rtcformula{x}{y}{\varphi})(s, t)}
        {\Delta, {\psi \subst{\mbox{\replace{t}{x}}}}}}
      \using {\rulename{Subst}}
    \end{prooftree}
    \begin{tikzpicture}[overlay]
      \draw[arrow] (bud) -- ++(0,0.5em) -| (left-edge) |- (companion) [->];
    \end{tikzpicture}}
  \end{equation*}
  \caption{{\cortc} derivation simulating Rule \eqref{eq:RTC-min}. The variables $v$ and $w$ are fresh (i.e.~do not occur free in $\Gamma$, $\Delta$, $\varphi$, or $\psi$).}
  \label{fig:implicit-explicit}
\end{figure*}

\begin{lemma}
  \label[lemma]{lem:RTC-min:derivable:in:cortc}
  Rule (\ref{eq:RTC-min}) is derivable in $\cortc$.
\end{lemma}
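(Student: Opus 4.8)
The plan is to take the concrete {\cortc} derivation displayed in \cref{fig:implicit-explicit} and verify that it is genuinely a valid cyclic proof whose single non-axiom leaf is the premise $\sequent{\Gamma, \psi(x), \varphi(x,y)}{\Delta, \psi\subst{\replace{y}{x}}}$ of Rule \eqref{eq:RTC-min} and whose root is its conclusion $\sequent{\Gamma, \psi\subst{\replace{s}{x}}, (\rtcformula{x}{y}{\varphi})(s,t)}{\Delta, \psi\subst{\replace{t}{x}}}$. There are three things to check: that every rule instance is correctly applied (respecting side conditions), that the derivation closes up into a finite cyclic graph, and that the resulting graph satisfies the global trace condition.

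First I would verify the local structure. The bottom-most (Subst) instantiates the fresh $v, w$ by $s, t$; since $v, w \notin \fn{fv}(\Gamma, \Delta, \varphi, \psi)$, this recovers the root from the companion $\sequent{\Gamma, \psi\subst{\replace{v}{x}}, (\rtcformula{x}{y}{\varphi})(v,w)}{\Delta, \psi\subst{\replace{w}{x}}}$. To the companion I would apply Rule \eqref{eq:RTC-cyc} with fresh $z$: its left premise $\sequent{\Gamma, \psi\subst{\replace{v}{x}}, v{=}w}{\Delta, \psi\subst{\replace{w}{x}}}$ is discharged by $({=}\text{L}_1)$ from an axiom (after weakening), using $v{=}w$ to rewrite $\psi\subst{\replace{v}{x}}$ to $\psi\subst{\replace{w}{x}}$; its right premise is produced by a (Cut) on $\psi\subst{\replace{z}{x}}$. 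The right cut-branch arises from the premise of Rule \eqref{eq:RTC-min} by a (Subst) with $\subst{\replace{z}{x}, \replace{w}{y}}$, where the side conditions $x \notin \fn{fv}(\Gamma,\Delta)$ and $y \notin \fn{fv}(\Gamma,\Delta,\psi)$ guarantee that $\Gamma, \Delta$ are untouched and $\psi\subst{\replace{y}{x}}$ becomes $\psi\subst{\replace{w}{x}}$. The left cut-branch is a (Subst) with $\subst{\replace{z}{w}}$ applied to the bud; crucially, this makes the bud syntactically identical to the companion, so assigning the bud to the companion yields a bona fide finite cyclic graph, placing the derivation in {\cortc}.

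The heart of the argument, and the step I expect to be the main obstacle, is the global trace condition. Since the graph is finite with a single back-edge, every infinite path eventually traverses the bud--companion loop infinitely often, so it suffices to exhibit one trace that follows this loop and progresses on each pass. I would trace the {\rtcOperator} formula around the cycle: starting from $(\rtcformula{x}{y}{\varphi})(v,w)$ at the companion, the application of Rule \eqref{eq:RTC-cyc} produces its immediate ancestor $(\rtcformula{x}{y}{\varphi})(v,z)$ in the right premise, which is by definition a \emph{progressing} trace pair. This formula is then carried unchanged through the (Cut) — the key point being that it is a side formula, not the cut formula $\psi\subst{\replace{z}{x}}$, so it survives with $\tau = \tau'$ — and finally, through the (Subst) with $\subst{\replace{z}{w}}$ on the left cut-branch, it is exactly the image of the companion's formula under the associated substitution, giving a (non-progressing) trace pair back to $(\rtcformula{x}{y}{\varphi})(v,w)$ at the bud. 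Thus each traversal of the loop contributes precisely one progression, so the trace following any infinite path progresses infinitely often, the global trace condition holds, and \cref{fig:implicit-explicit} is a valid {\cortc} proof. Hence Rule \eqref{eq:RTC-min} is derivable in {\cortc}.
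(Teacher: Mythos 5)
Your proposal is correct and follows exactly the paper's own route: the paper proves \cref{lem:RTC-min:derivable:in:cortc} by exhibiting precisely the derivation of \cref{fig:implicit-explicit} (the same (Subst)/\eqref{eq:RTC-cyc}/(Cut)/(Subst) decomposition, with the bud tied back to the companion), and the trace argument you give—one progression per traversal of the single cycle, across \cref{eq:RTC-cyc}, with the {\rtcOperator} formula surviving the (Cut) as a side formula and matching under the (Subst)—is the same one the paper records when validating this construction in the proof of \cref{easyinclusion}. Your write-up is simply a more explicit verification of the side conditions and trace pairs that the paper leaves to the figure.
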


This leads to the following result (an analogue to \cite[Thm.~7.6]{brotherston2010sequent}).

\begin{restatable}{theorem}{easyinclusion}
  \label{easyinclusion}
  $\cortc \supseteq \rtc$, and is thus complete w.r.t. Henkin semantics.
\end{restatable}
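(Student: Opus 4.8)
The goal is to prove $\cortc \supseteq \rtc$, namely that every sequent provable in the explicit (finitary) induction system is also provable in the cyclic system, and to derive completeness for Henkin semantics as a corollary. The plan is to leverage \Cref{lem:RTC-min:derivable:in:cortc}, which establishes that the explicit induction rule \eqref{eq:RTC-min} is derivable in $\cortc$ via the cyclic derivation depicted in \Cref{fig:implicit-explicit}. Since {\rtc} is obtained from $\LK_=$ by adding precisely the three rules \eqref{eq:refRTC}, \eqref{eq:transRTC}, and \eqref{eq:RTC-min}, and since the first two of these are already primitive rules of $\cortc$ (being shared with {\ortc}), the only rule of {\rtc} not immediately available in $\cortc$ is the induction rule \eqref{eq:RTC-min}.

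The main argument is a straightforward structural induction on the height of the given {\rtc} proof $\mathcal{D}$ of $\sequent{\Gamma}{\Delta}$. First I would observe that every rule instance appearing in $\mathcal{D}$ is either an $\LK_=$ rule, the (Subst) rule, rule \eqref{eq:refRTC}, rule \eqref{eq:transRTC}, or rule \eqref{eq:RTC-min}. All rules except the last are primitive inference rules of $\cortc$ and so are reused verbatim. For each instance of \eqref{eq:RTC-min}, I would appeal to \Cref{lem:RTC-min:derivable:in:cortc} and replace the single rule application by the finite cyclic derivation of \Cref{fig:implicit-explicit}, splicing the (already-translated) subderivations of the premise of \eqref{eq:RTC-min} into the appropriate open leaves. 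Because the simulating derivation is itself a finite graph with a single back-edge (from the bud to its companion), and because composing finitely many finite, regular derivations yields a finite, regular object, the resulting derivation is a legitimate $\cortc$ proof: it is finitely representable, and its only infinite paths are those traversing the back-edges introduced by the simulation, each of which carries the infinitely progressing trace guaranteed by the correctness of \Cref{lem:RTC-min:derivable:in:cortc}.

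The part requiring the most care is verifying that the global trace condition is preserved under this splicing, rather than the mere syntactic correctness of the rule replacements. One must check that the back-edge in \Cref{fig:implicit-explicit} genuinely carries a progressing trace on the traced {\rtcOperator} formula $(\rtcformula{x}{y}{\varphi})(v, w)$: the trace follows the principal formula down through the \eqref{eq:RTC-cyc} application (progressing there), then through the (Subst) step linking the bud back to the companion, so that every infinite path through the cycle progresses infinitely often. I would also confirm that introducing these cycles does not create any \emph{new} infinite paths that fail the trace condition: since the surrounding translated proof $\mathcal{D}$ is finite (being the image of a finite {\rtc} proof under a height-preserving translation), the only infinite paths in the final graph are those that eventually remain within one of the inserted simulation gadgets, and each such path is handled by the trace just described. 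Finally, the completeness corollary for Henkin semantics follows immediately by combining $\cortc \supseteq \rtc$ with \Cref{rtc_comp}, which states that {\rtc} is complete for Henkin semantics.
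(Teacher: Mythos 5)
Your proposal is correct and takes essentially the same route as the paper: replace each instance of Rule \eqref{eq:RTC-min} by the cyclic derivation of \Cref{lem:RTC-min:derivable:in:cortc} (\Cref{fig:implicit-explicit}), observe that every infinite path in the resulting pre-proof must eventually traverse one of the inserted simulation cycles, each of which carries an infinitely progressing trace through Rule \eqref{eq:RTC-cyc} and the (Subst) back-edge, and then conclude Henkin completeness from \Cref{rtc_comp}. Your additional framing as a structural induction and the explicit check of the (Subst) trace pair are just more detailed renderings of the same argument.
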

\begin{proof}
  Let $\mathcal{P}$ be a proof in {\rtc} and $\mathcal{P}'$ be the corresponding pre-proof in {\cortc} obtained be replaing each instance of Rule \eqref{eq:RTC-min} by the corresponding instance of the proof schema given in \cref{lem:RTC-min:derivable:in:cortc}. We argue that $\mathcal{P}'$ is a valid {\cortc} proof. Notice that the only cycles in $\mathcal{P}'$ are internal to the subproofs that simulate Rule \eqref{eq:RTC-min}. Thus any infinite path in $\mathcal{P}'$ must eventually end up traversing one of these cycles infinitely often. Therefore, it suffices to show that there is an infinitely progressing trace following each such path. This is clearly the case since we can trace the active {\rtcOperator} formulas along these paths, which progress once each time around the cycle, across Rule \eqref{eq:RTC-cyc}.
  \qed
\end{proof}

\Cref{lem:RTC-min:derivable:in:cortc} is the {\TC} counterpart of \cite[Lemma~7.5]{brotherston2010sequent}. It is interesting to note that the simulation of the explicit {\lkid} induction rule in the cyclic {\lkid} system is rather complex since each predicate has a slightly different explicit induction rule, which depends on the particular productions defining it. Thus, the construction for the cyclic {\lkid} system must take into account the possible forms of arbitrary productions. In contrast, {\cortc} provides a single, uniform way to unfold an {\rtcOperator} formula: the construction given in \cref{fig:implicit-explicit} is the cyclic representation of the {\rtcOperator} operator semantics, with the variables $v$ and $w$ implicitly standing for arbitrary terms (that we subsequently substitute for). 

This uniform syntactic translation of the explicit {\rtc} induction rule into {\cortc} allows us to \emph{syntactically} identify a proper subset of cyclic proofs which is also complete w.r.t. Henkin semantics.\footnote{Note it is not clear that a similar complete structural restriction is possible for {\lkid}.}
The criterion we use is based on the notion of \emph{overlapping cycles}. Recall the definition of a \emph{basic} cycle, which is a path in a (proof) graph starting and ending at the same point, but containing no other repeated nodes. We say that two distinct (i.e.~not identical up to permutation) basic cycles \emph{overlap} if they share any nodes in common, i.e.~at some point they both traverse the same path in the graph. We say that a cyclic proof is \emph{non-overlapping} whenever no two distinct basic cycles it contains overlap. The restriction to non-overlapping proofs has an advantage for automation, since one has only to search for cycles in one single branch.

\begin{definition}[Normal Cyclic Proofs]
  The \emph{normal} cyclic proof system {\ncortc} is the subsystem of {\ortc} comprising of all and only the non-overlapping cyclic proofs.
\end{definition}

The following theorem is immediate due to the fact that the translation of an $\rtc$ proof into $\cortc$, using the construction shown in \Cref{fig:implicit-explicit}, results in a proof with no overlapping cycles.

\begin{theorem}
  \label{normal}
  $\ncortc \supseteq \rtc$.
\end{theorem}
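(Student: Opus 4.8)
The plan is to reuse verbatim the translation $\mathcal{P} \mapsto \mathcal{P}'$ underlying \cref{easyinclusion}, and simply verify that the $\cortc$ proof it produces already meets the non-overlapping criterion, hence lies in $\ncortc$. So, given an $\rtc$ proof $\mathcal{P}$, I would let $\mathcal{P}'$ be the $\cortc$ proof obtained by replacing every instance of Rule \eqref{eq:RTC-min} with the corresponding instance of the derivation schema of \cref{fig:implicit-explicit}. By the argument already established for \cref{easyinclusion}, $\mathcal{P}'$ is a valid $\cortc$ proof; it remains only to show that no two of its distinct basic cycles overlap.

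First I would recall, as noted in the proof of \cref{easyinclusion}, that the only cycles of $\mathcal{P}'$ are internal to the gadgets simulating Rule \eqref{eq:RTC-min}, and that each such gadget contributes exactly one basic cycle, namely the one closed by the single back-edge from its bud to its companion (the path from companion to bud is forced, since the bud is reachable only through the right premise of \eqref{eq:RTC-cyc}, not through its axiom-closed left premise). I would then pin down the sequent nodes lying on this cycle by tracing from the companion upwards: through the instance of \eqref{eq:RTC-cyc} to its right premise, through the Cut to the left premise of that Cut, through the Subst to the bud, and back to the companion. Thus the basic cycle of a gadget consists of precisely these four nodes, all lying strictly within that single gadget.

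The heart of the argument is then to observe that these per-gadget cycles are pairwise node-disjoint. Since $\mathcal{P}$ is a finite derivation tree, its instances of Rule \eqref{eq:RTC-min} occupy distinct positions and are replaced by separate gadgets using their own fresh variables $v, w$, which makes distinct companions and buds syntactically distinguishable. Crucially, the subderivation grafted at the right premise of the Cut—namely the $\mathcal{P}$-subproof of the premise $\sequent{\Gamma, \psi, \varphi}{\Delta, \psi \subst{\replace{y}{x}}}$ of Rule \eqref{eq:RTC-min}, which is the locus of all nested gadgets—contributes no node to the enclosing gadget's basic cycle, as that cycle comprises only the four nodes identified above. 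Consequently a nested gadget's cycle shares no node with the cycle of any gadget enclosing it, while cycles of positionally incomparable gadgets are trivially disjoint.

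Putting these together, no two distinct basic cycles of $\mathcal{P}'$ share a node, so $\mathcal{P}'$ is non-overlapping and is therefore a valid $\ncortc$ proof; as $\mathcal{P}$ was arbitrary, this gives $\ncortc \supseteq \rtc$. I expect the only genuinely delicate step to be this node-disjointness of nested gadgets: one must check carefully that the subproof attached at the right premise of the Cut is truly off the enclosing cycle, which is exactly what the explicit four-node characterization of the cycle secures.
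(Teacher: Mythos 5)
Your proposal is correct and takes essentially the same approach as the paper: the paper proves \cref{normal} by observing that the translation of \cref{easyinclusion} (replacing each instance of Rule \eqref{eq:RTC-min} by the schema of \cref{fig:implicit-explicit}) yields a proof with no overlapping cycles. The paper simply asserts this as immediate, whereas you spell out the supporting details — that each gadget's unique basic cycle lies entirely within that gadget and is disjoint from the grafted subproofs — which is exactly the reasoning the paper leaves implicit.
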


Henkin-completeness of the normal cyclic system then follows from \Cref{normal} and \Cref{rtc_comp}.

\subsection{Inclusions of {\cortc} in {\rtc}}
\label{cortc-rtc}

This section addresses the question of whether the cyclic system is equivalent to the explicit one, or strictly stronger. In \cite{brotherston2010sequent} it was conjectured that for the system with inductive definitions,  $\lkid$ and $\colkid$ are equivalent. Later, it was shown that they are indeed equivalent when containing arithmetics \cite{Berardi2017b,Simpson2017}. We obtain a corresponding theorem in \Cref{arithmetics} for the {\TC} systems. However, it was also shown in \cite{Berardi2017} that in the general case the cyclic system is stronger than the explicit one. We discuss the general case for {\TC} and its subtleties in \Cref{hydra}. 

\subsubsection{The Case of Arithmetics}
\label{arithmetics}

Let $\rtclogic$ be a language based on the signature 
$\{0,\fn{s},{+}\}$. Let $\rtc\plusArith$ and $\cortc\plusArith$ be the systems for $\rtclogic$ obtained by adding to {\rtc} and {\cortc}, respectively, the standard axioms of {\PA} together with the $\rtcOperator$-characterization of the natural numbers, i.e.: 
\begin{enumerate}[nosep,leftmargin=3em,label={\roman*)},ref={\roman*}]
  \item
  $\sequent{ \suc{x} = 0}{}$
  \item
  $\sequent{\suc{x} = \suc{y}}{x = y}$
  \item
  $\sequent{}{x + 0 = x}$
  \item
  $\sequent{}{x + \suc{y} = \suc{(x + y)}}$
  \item
  \label{ax:everything-is-nat}
  $\sequent{}{(\rtcformula{w}{u}{\suc{w} = u})(0,x)}$
\end{enumerate}
Note that we do not need to assume multiplication explicitly in the signature, nor do we need to add axioms for it, since multiplication is definable in $\rtclogic$ and its standard axioms are derivable~\cite{AvronTC03,cohen2015middle}. 

Recall that we can express facts about sequences of numbers in {\PA} by using a $\beta$-function such that for any finite sequence $k_{0},k_{1},...,k_{n}$ there is some $c$ such that for all $i \leq n$, $\beta(c,i) = k_{i}$.
Accordingly, let $B$ be a well-formed formula of the language of {\PA} with three free variables which captures in {\PA} a $\beta$-function. For each formula $\varphi$ of the language of {\PA} define $\varphi^{\beta} := \varphi$, and define $((\rtcformula{x}{y}{\varphi})(s, t))^{\beta}$ to be:
\begin{multline*}
  s = t 
    \vee 
  (\exists z, c \suchthat 
    B(c,0,s) \wedge B(c,\suc{z},t) 
      \wedge 
        {} \\ 
    (\forall u \leq z \holdsthat 
      \exists v, w \suchthat 
        B(c,u,v) \wedge B(c,\suc{u},w) \wedge {\varphi^{\beta} \subst{\scalebox{0.8}{\replace{v}{x}, \replace{w}{y}}}}))
\end{multline*}

The following result, which was proven in \cite{CohenMSc,cohen2015middle}, establishes an equivalence between $\rtc\plusArith$ and {\pa} (a Gentzen-style system for {\PA}). It is mainly based on the fact that in $\rtc\plusArith$ all instances of {\pa} induction rule are derivable.

\begin{theorem}[cf.~\cite{cohen2015middle}] %
  \label{lem:PA-RTC} %
  The following hold:
  \begin{enumerate}[nosep]
    \item \label{lem:PA-RTC:formula}
    $\vdash_{\rtc\plusArith} \varphi \Leftrightarrow \varphi^{\beta}$. 
    \item \label{lem:PA-RTC:sequent}
    $\vdash_{\rtc\plusArith} \sequent{\Gamma}{\Delta}$ iff ${} \vdash_{\pa} \sequent{\Gamma^{\beta}}{\Delta^{\beta}}$.
  \end{enumerate}
\end{theorem}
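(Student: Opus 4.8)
The plan is to prove both parts by leveraging the fact that the $\beta$-translation $\cdot^{\beta}$ is a homomorphism: it is the identity on atomic {\PA} formulas, commutes with all first-order connectives, quantifiers, and (capture-avoiding) substitution — the fresh bound variables $z, c, u, v, w$ introduced by the expansion never clash with the displayed terms $s, t$ — and replaces each {\rtcOperator} operator by its $\beta$-expansion. With this established, Part~\ref{lem:PA-RTC:formula} follows by induction on the structure of $\varphi$, the only nontrivial case being $\varphi \equiv (\rtcformula{x}{y}{\psi})(s,t)$. Here I would first use the induction hypothesis $\vdash_{\rtc\plusArith} \psi \Leftrightarrow \psi^{\beta}$ together with congruence of provable equivalence, and then prove the two implications between $(\rtcformula{x}{y}{\psi})(s,t)$ and its $\beta$-expansion separately. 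The left-to-right direction is obtained by applying the explicit induction rule \eqref{eq:RTC-min} with the invariant ``$t$ is $\beta$-reachable from $s$'', showing this predicate holds reflexively of $s$ and is closed under $\psi$-steps (the closure step amounts to extending a coded sequence by one element, a property of the $\beta$-function provable in {\PA} and hence in $\rtc\plusArith$). The right-to-left direction proceeds by {\PA}-style induction on the sequence length $z$ in the $\beta$-expansion, using rules \eqref{eq:refRTC} and \eqref{eq:transRTC} to assemble the {\rtcOperator} chain; crucially, {\PA} induction is itself available in $\rtc\plusArith$, as sketched below.

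For the forward direction of Part~\ref{lem:PA-RTC:sequent}, I would translate a given $\rtc\plusArith$ derivation of $\sequent{\Gamma}{\Delta}$ rule-by-rule into a {\pa} derivation of $\sequent{\Gamma^{\beta}}{\Delta^{\beta}}$. Because $\cdot^{\beta}$ commutes with the connectives and substitution, every $\LK_{=}$ inference and every purely arithmetical axiom (i)--(iv) translates to an instance of the same rule or axiom (the translation acting as the identity on {\PA} formulas). It therefore remains to check that the $\beta$-translations of the three {\rtcOperator} rules and of axiom~(\ref{ax:everything-is-nat}) are derivable in {\pa}: reflexivity \eqref{eq:refRTC} translates to the trivially provable disjunct $s = t$; transitivity \eqref{eq:transRTC} translates to a concatenation of two coded sequences; the induction rule \eqref{eq:RTC-min} translates to an application of {\PA} induction on the sequence length; and axiom~(\ref{ax:everything-is-nat}) translates to the statement that every number is $\beta$-reachable from $0$ by successor, proved by ordinary induction in {\PA}.

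For the converse direction of Part~\ref{lem:PA-RTC:sequent}, observe that $\Gamma^{\beta}$ and $\Delta^{\beta}$ are pure {\PA} formulas, so a {\pa} derivation of $\sequent{\Gamma^{\beta}}{\Delta^{\beta}}$ embeds directly into $\rtc\plusArith$ once we verify that each {\pa} inference — in particular the induction rule — is derivable there. This derivation of {\PA} induction is the linchpin: instantiating \eqref{eq:RTC-min} with step relation $\suc{w} = u$, base point $0$, and the target formula as invariant yields, via axiom~(\ref{ax:everything-is-nat}), exactly the {\PA} induction rule. Having obtained $\vdash_{\rtc\plusArith} \sequent{\Gamma^{\beta}}{\Delta^{\beta}}$, I would then appeal to Part~\ref{lem:PA-RTC:formula}, which gives $\vdash_{\rtc\plusArith} \varphi \Leftrightarrow \varphi^{\beta}$ for every $\varphi \in \Gamma \cup \Delta$, and close the gap between $\sequent{\Gamma}{\Delta}$ and $\sequent{\Gamma^{\beta}}{\Delta^{\beta}}$ with a series of cuts.

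The principal obstacle throughout is the formalization of the $\beta$-function reasoning inside {\PA}: establishing that coded sequences can be extended by a single element and concatenated, uniformly in the (translated) step formula $\varphi^{\beta}$. This is what underwrites the closure step in Part~\ref{lem:PA-RTC:formula} and the translation of transitivity \eqref{eq:transRTC} in Part~\ref{lem:PA-RTC:sequent}, and it is where the bulk of the routine-but-delicate arithmetical bookkeeping lives; the logical and proof-theoretic scaffolding around it is comparatively straightforward.
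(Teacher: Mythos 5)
The paper does not actually prove this theorem: it is imported from \cite{CohenMSc,cohen2015middle}, with only the remark that it ``is mainly based on the fact that in $\rtc\plusArith$ all instances of {\pa} induction rule are derivable.'' Your reconstruction is correct and pivots on exactly that fact (instantiating rule~\eqref{eq:RTC-min} with step formula $\suc{w} = u$, base point $0$, and axiom~\eqref{ax:everything-is-nat}), so it follows essentially the same route as the cited proof; the only detail you gloss over is that reading {\pa} formulas, which contain multiplication, inside $\rtc\plusArith$ relies on the {\TC}-definability of multiplication and the derivability of its axioms, a point the paper notes separately.
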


We show a similar equivalence holds between the cyclic system {\cortc} and {\ca}, a cyclic system for arithmetic shown to be equivalent to {\pa} \cite{Simpson2017}.
The first part of the result is straightforward.

\begin{lemma}
  \label{lem:CRTC+A:formulas}
  $\vdash_{\cortc\plusArith} \varphi \Leftrightarrow \varphi^{\beta}$
\end{lemma}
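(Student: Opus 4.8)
The plan is to derive this as an immediate consequence of the already-established finitary result together with the subsumption of the explicit system by the cyclic one. Concretely, \Cref{lem:PA-RTC}\ref{lem:PA-RTC:formula} already gives $\vdash_{\rtc\plusArith} \varphi \Leftrightarrow \varphi^{\beta}$, i.e.\ cut-free-or-not proofs in $\rtc\plusArith$ of the two sequents $\sequent{\varphi}{\varphi^{\beta}}$ and $\sequent{\varphi^{\beta}}{\varphi}$. It therefore suffices to transport these proofs into $\cortc\plusArith$.

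First I would observe that the translation underpinning \Cref{easyinclusion} extends verbatim to the arithmetic systems, giving the inclusion $\vdash_{\rtc\plusArith} \sequent{\Gamma}{\Delta}$ implies $\vdash_{\cortc\plusArith} \sequent{\Gamma}{\Delta}$. The point is that $\rtc\plusArith$ and $\cortc\plusArith$ are obtained from $\rtc$ and $\cortc$ by adjoining exactly the same leaf axioms (the {\PA} axioms (i)--(iv) and the $\rtcOperator$-characterisation (v)), and these axioms are orthogonal to the only rule in which the two base systems differ, namely \eqref{eq:RTC-min} versus \eqref{eq:RTC-cyc}. Hence the construction of \Cref{lem:RTC-min:derivable:in:cortc} (the derivation depicted in \Cref{fig:implicit-explicit}) applies unchanged: one replaces each instance of Rule \eqref{eq:RTC-min} in the given $\rtc\plusArith$ proof by the corresponding cyclic subderivation, leaving every axiom leaf and every other rule instance intact. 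The resulting pre-proof is a valid $\cortc\plusArith$ proof by exactly the argument given in the proof of \Cref{easyinclusion}: the only cycles are internal to the simulating subproofs, and each such cycle carries an infinitely progressing trace on the active $\rtcOperator$ formula across Rule \eqref{eq:RTC-cyc}.

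Combining these two facts yields the claim: applying the inclusion to the two $\rtc\plusArith$ proofs supplied by \Cref{lem:PA-RTC}\ref{lem:PA-RTC:formula} produces $\cortc\plusArith$ proofs of $\sequent{\varphi}{\varphi^{\beta}}$ and $\sequent{\varphi^{\beta}}{\varphi}$, which is precisely $\vdash_{\cortc\plusArith} \varphi \Leftrightarrow \varphi^{\beta}$.

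I do not anticipate a genuine obstacle here, which is why the statement is flagged as straightforward; the only point requiring any care is the verification that the explicit-to-cyclic simulation is insensitive to the additional arithmetic axioms. Since those axioms enter only as initial sequents and the simulation manipulates solely occurrences of the induction rule, this verification is routine, and no new cyclic reasoning about the $\beta$-function or the natural-number characterisation is needed at the formula level (that burden is deferred to the sequent-level equivalence in the subsequent, harder part of the development).
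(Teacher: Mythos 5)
Your proposal is correct and takes essentially the same route as the paper, whose proof is exactly the citation of \Cref{lem:PA-RTC}\eqref{lem:PA-RTC:formula} together with \Cref{easyinclusion}. Your additional check that the explicit-to-cyclic simulation is unaffected by adjoining the arithmetic axioms (since they enter only as leaves) is a detail the paper leaves implicit, but it is the same argument.
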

\begin{proof}
  This follows from \Cref{lem:PA-RTC}\eqref{lem:PA-RTC:formula} and \Cref{easyinclusion}.
  \qed
\end{proof}

To show the second part, we first show that {\ca} is included in ${\cortc\plusArith}$ by giving a construction that directly translates {\ca} proofs into ${\cortc\plusArith}$ proofs. Technically, the signature of {\ca} includes the relation symbol $<$ for strict ordering, and the function symbol $\cdot$ for multiplication. As mentioned above, multiplication (and its axioms) are derivable in {\TC}, and the non-strict ordering on natural numbers $s \leq t$ can be expressed in {\TC} as $(\rtcformula{w}{u}{\suc{w} = u})(s, t)$. Therefore, in the following result, we implicitly assume that all {\ca} terms of the form $s \cdot t$ are translated in {\cortc} as in \cite{AvronTC03,cohen2015middle}, and formulas of the form $s < t$ as $s \neq t \wedge (\rtcformula{w}{u}{\suc{w} = u})(s, t)$.

\begin{lemma}
  \label{lem:CA-in-CRTC+A}
  If ${} \vdash_{\ca} \sequent{\Gamma}{\Delta}$ then ${} \vdash_{\cortc\plusArith} \sequent{\Gamma}{\Delta}$.
\end{lemma}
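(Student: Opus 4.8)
The goal is to show that any {\ca} proof can be translated into a {\cortc\plusArith} proof. Since {\ca} is a cyclic proof system for arithmetic, its proofs are (possibly cyclic) derivation trees satisfying a global trace condition, built from {\pa}-style sequent rules together with some rule(s) that unfold the inductively-defined natural numbers (or, equivalently, a rule exploiting the well-ordering/case-split on numbers) and whose progressing traces track descent along the natural number ordering. The plan is to give a \emph{rule-by-rule} (and, crucially, \emph{cycle-preserving}) translation that maps each {\ca} inference to a derivable rule or admissible derivation in {\cortc\plusArith}, and then argue that the global trace condition is preserved.

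First I would fix the precise presentation of {\ca} being used (from \cite{Simpson2017}), identifying its rules: the standard {\LK}-style logical and structural rules, the arithmetic axioms/rules for $0$, $\fn{s}$, ${+}$, $\cdot$, and $<$, and the distinguished \emph{case-split}/unfolding rule on natural numbers that is the source of progress points in {\ca} traces. For the purely logical and structural rules, the translation is the identity, since {\cortc\plusArith} contains all of {\LK_=}. For the arithmetic-specific symbols not native to {\rtclogic}, I would use the agreed encodings noted just before the statement: $s \cdot t$ translated as in \cite{AvronTC03,cohen2015middle}, and $s < t$ translated as $s \neq t \wedge (\rtcformula{w}{u}{\suc{w} = u})(s, t)$, with $s \leq t$ as $(\rtcformula{w}{u}{\suc{w} = u})(s, t)$. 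I would verify that each {\ca} arithmetic axiom becomes a {\cortc\plusArith}-derivable sequent under this translation; this is routine given axioms (i)--(v) together with the derivability of the multiplication axioms \cite{AvronTC03,cohen2015middle}, and the fact that basic facts about $<$ and $\leq$ (irreflexivity, transitivity, the successor/predecessor case-split) follow from the {\rtcOperator}-characterization of the naturals via axiom \eqref{ax:everything-is-nat} and Rule \eqref{eq:RTC-cyc}.

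The crux is the treatment of {\ca}'s numerical case-split/unfolding rule and, with it, the \emph{trace condition}. The key observation is that the {\ca} induction-by-descent on a natural number term $n$ corresponds exactly to unfolding the formula $(\rtcformula{w}{u}{\suc{w} = u})(0, n)$ — which, by axiom \eqref{ax:everything-is-nat}, holds for every $n$ — using Rule \eqref{eq:RTC-cyc}. Each {\ca} case-split on whether a number is $0$ or a successor maps to the two premises of \eqref{eq:RTC-cyc} (the $0=n$ base case and the successor case $\suc{z}$-unfolding), and a {\ca} progress point (a strict decrease in the tracked number) maps precisely to a \emph{progressing} trace pair of the {\rtcOperator}-formula across \eqref{eq:RTC-cyc}, since the degree of $(\rtcformula{w}{u}{\suc{w} = u})(0, n)$ is exactly the numeral $n$. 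Thus I would set up a translation that, for each {\ca} trace that progresses at a point, produces a {\cortc} trace on the corresponding {\rtcOperator}-formula that progresses at the image of that point, and that otherwise carries the formula unchanged. Because the translation is defined locally on rule instances and maps companions to companions and buds to buds, it preserves the cyclic (regular) graph structure, so the image is a genuine {\cortc\plusArith} pre-proof.

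The main obstacle I anticipate is the faithful preservation of the \emph{global} trace condition — not merely local progress. I would argue as follows: given any infinite path in the translated pre-proof, it is the image of an infinite path in the original {\ca} proof (the translation does not create new infinite paths, as each rule expands to a finite derivation fragment with the same cyclic back-edges). By the {\ca} global trace condition, that original path carries an infinitely progressing {\ca} trace on some number term; its image is an infinitely progressing {\cortc} trace on the corresponding $(\rtcformula{w}{u}{\suc{w} = u})(0, \cdot)$ formula, with progress points landing on the \eqref{eq:RTC-cyc} unfoldings. Hence every infinite path in the translation is followed by an infinitely progressing trace, so the translation is a valid {\cortc\plusArith} proof. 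The delicate points to get right are (a) matching the {\ca} notion of trace/progress with the {\rtcOperator}-degree-based notion underlying \cref{lem:DescendingCountermodels}, ensuring the encoded $<$ and $\leq$ behave so that a decrease in the {\ca}-tracked term is genuinely witnessed as a progressing \eqref{eq:RTC-cyc} step, and (b) ensuring that the auxiliary logical manipulations introduced by the encodings of $<$ and $\cdot$ do not break traces — which holds because on such side-derivations the tracked {\rtcOperator}-formula is simply carried through unchanged by non-progressing trace pairs.
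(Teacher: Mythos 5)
Your high-level strategy---a local, cycle-preserving translation of each {\ca} rule instance, followed by an argument that {\ca} traces are simulated by {\cortc} traces on formulas of the form $(\rtcformula{w}{u}{\suc{w} = u})(0, t)$---is the same as the paper's. However, there are two genuine gaps in how you realise it. First, you never account for how the traced {\rtcOperator} formulas get \emph{into} the sequents. A {\cortc} trace must consist of {\rtcOperator} formulas actually occurring in the antecedents of the sequents along the path; the translated {\ca} sequents (with $<$ and $\cdot$ encoded) contain no such formulas, and axiom \eqref{ax:everything-is-nat} only makes them \emph{derivable}, not \emph{present}. The paper addresses this with the $\ast$-translation: it proves the stronger statement ${} \vdash_{\cortc\plusArith} \sequent{\Gamma^{\ast}}{\Delta}$, where $\Gamma^{\ast}$ enriches $\Gamma$ with $\pred{N}{t} \equiv (\rtcformula{w}{u}{\suc{w} = u})(0, t)$ for every free term $t$, and only at the end cuts these added formulas away using axiom \eqref{ax:everything-is-nat}. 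Without this enrichment there is literally nothing to trace in the translated proof.

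Second, your identification of {\ca} progress points with {\ca}'s numerical case-split rule, and hence with single applications of Rule \eqref{eq:RTC-cyc}, misconstrues the {\ca} trace condition. In Simpson's system, traces are sequences of \emph{terms}, and a progress point is any rule instance at which the antecedent contains a formula $t' < t$ (the trace jumps from $t$ to $t'$); this can happen at \emph{any} rule, and $t'$ may be smaller than $t$ by an arbitrary, syntactically unknown amount. Consequently a single application of \eqref{eq:RTC-cyc}, which peels off exactly one step of the closure, cannot by itself yield a progressing trace from $\pred{N}{t}$ to $\pred{N}{t'}$. This is precisely where the bulk of the paper's work lies: the derivation schema of \cref{fig:CA-in-CRTC+A:TraceAwareDerivation} unfolds $\pred{N}{t}$ via \eqref{eq:RTC-cyc}, performs an inner case distinction $t' = z$ versus $t' < z$, and closes the $t' < z$ branch with an \emph{internal cycle} (together with derived equality rules, needed because trace pairs across ordinary rules require syntactic identity of the traced formula). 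One must then further verify that these auxiliary internal cycles, traversed finitely or infinitely often, do not break the global trace condition. Your proposal declares exactly this step to hold ``precisely,'' but it is neither a single rule application nor automatic, and without the gadget and the accompanying trace bookkeeping the simulation argument does not go through.
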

\begin{proof}
  We define a translation $[\cdot]^{\ast}$ on sets of formulas $\Gamma$ as the smallest set of formulas containing $\Gamma$ and $(\rtcformula{w}{u}{\suc{w} = u})(0, t)$ for each term $t$ that appears as a subterm of a formula in $\Gamma$ with no quantified variables (we call such a term \emph{free}). Notice it suffices to prove ${} \vdash_{\cortc\plusArith} \sequent{\Gamma^{\ast}}{\Delta}$ if ${} \vdash_{\ca} \sequent{\Gamma}{\Delta}$, since then we may cut all the added formulas $(\rtcformula{w}{u}{\suc{w} = u})(0, t)$ in $\Gamma^{\ast}$ using instances of axiom \eqref{ax:everything-is-nat}.
  
  We call $\sequent{\Gamma^{\ast}}{\Delta}$ the $\ast$-translation of the sequent $\sequent{\Gamma}{\Delta}$, and call the inference rule obtained by applying this translation to the conclusion and each premise the $\ast$-translation of the rule. We begin by showing that the $\ast$-translations of the axioms and proof rules of {\ca} are derivable in {\cortc\plusArith}. For the standard rules of {\LK} and the substitution rule this is trivial, and can be done is such a way that, for the conclusion $\sequent{\Gamma}{\Delta}$ and each premise $\sequent{\Gamma'}{\Delta'}$, there is a (non-progressing) trace from each $(\rtcformula{w}{u}{\suc{w} = u})(0, t) \in \Gamma^{\ast}$ to $(\rtcformula{w}{u}{\suc{w} = u})(0, t') \in (\Gamma')^{\ast}$ when $t'$ is a predecessor of $t$ (cf.~\cite[Def.~1]{Simpson2017}). It remains to show that we can derive the $\ast$-translations of the following set of axioms:
  \begin{flalign*}
    \hspace{1.5cm}
      &
    \begin{aligned}
      t < u, u < v &\Rightarrow t < v \\
      t < u, u < t &\Rightarrow \\
      t < u, u < \suc{t} &\Rightarrow \\
      t < 0 &\Rightarrow \\
      t < u &\Rightarrow \suc{t} < \suc{u} \\
      &\Rightarrow t < \suc{t}
    \end{aligned}
      &
    \begin{aligned}
      &\Rightarrow t < u, t = u, u < t \\
      &\Rightarrow t + 0 = t \\
      &\Rightarrow t + \suc{u} = \suc{(t + u)} \\
      &\Rightarrow t \cdot 0 = 0 \\
      &\Rightarrow t \cdot \suc{u} = (t \cdot u) + t
    \end{aligned}
      &
    \hspace{1.5cm}
  \end{flalign*}
  which is straightforward. It is also similarly straightforward to show that the $\ast$-translation of the following inference rule:
  \begin{align*}
    \begin{gathered}
    \infer
      [({\text{$x$ is fresh}})]
      {\sequent{\Gamma, 0 < t}{\Delta}}
      {\sequent{\Gamma, t = \suc{x}}{\Delta}}
    \end{gathered}
  \end{align*}
  is derivable in {\cortc\plusArith} such that there is a (non-progressing) trace from each $(\rtcformula{w}{u}{\suc{w} = u})(0, t) \in \Gamma^{\ast}$ in the conclusion to the corresponding formula in the premise of the derivation.
  \rrnote{Liron, I'm not sure how much/little of these derivations you want to include here.}%
  \lcnote{I can do one or two as examples, but I think this is straightforward. }%
  We shall call these derivations of the $\ast$-translations of the {\ca} axioms ad rules the \emph{simple} derivations.  
  
  We next show that for each (non-axiomatic) rule of {\ca}, we can derive the $\ast$-translation of the rule in {\cortc\plusArith} with the additional property that for every \emph{progressing} trace from a term $t$ in the conclusion to a term $t'$ in a premise, there is also a progressing trace from the formula $(\rtcformula{w}{u}{u = \suc{w}})(0, t)$ in the conclusion of the derived rule to the formula $(\rtcformula{w}{u}{u = \suc{w}})(0, t')$ in the corresponding premise.
  
  \begin{figure*}[t]
    \resizebox{\textwidth}{!}{\begin{minipage}{1.15\textwidth}
    \begin{gather*}
      \kern-1cm
      \infer[{\eqref{eq:RTC-cyc}}]
        {\sequent{\Gamma, t' < t, \pred{N}{t'}, \pred{N}{t}}{\Delta}}
        {\infer[(\text{WL/WR})]
          {\sequent{\Gamma, t' < 0, \pred{N}{t'}, \pred{N}{0}}{\Delta}}
          {\infer*
            {\sequent{t' < 0, \pred{N}{t'}, \pred{N}{0}}{}}
            {\dag}}
          &
         {\hbox to 6cm {
          \infer[{({=}\text{L}')}]
           {\sequent{\Gamma, t' < t, \pred{N}{t'}, \pred{N}{t}, \pred{N}{z}, t = \suc{z}}{\Delta}}
           {\multiput(50,0)(0,5){5}{.}
            \raisebox{1cm}{
            \hbox to 5cm {
            \kern-5cm
            \infer[(\text{Cut})]
             {\sequent{\Gamma, t' < t, t' < \suc{z}, \pred{N}{t'}, \pred{N}{t}, \pred{N}{\suc{z}}, \pred{N}{z}}{\Delta}}
             {\multiput(50,0)(0,5){34}{.}
              \raisebox{6cm}{
              \hbox to 2cm {
              \kern -2.3cm
              \infer[(\text{$\vee$R})]
               {\sequent{\pred{N}{t'}, \pred{N}{\suc{z}}, t' < \suc{z}}{t' = z \vee t' < z}}
               {\infer[(\text{Cut})]
                 {\sequent{\pred{N}{t'}, \pred{N}{\suc{z}}, t' < \suc{z}}{t' = z, t' < z}}
                 {\infer*{\sequent{}{z < t', t' = z, t' < z}}{\ddag}
                   &
                  \infer*{\sequent{\pred{N}{t'}, \pred{N}{\suc{z}}, z < t', t' < \suc{z}}{}}{\maltese}}}}}
               &
              \infer[(\text{$\vee$L})]
                {\sequent{\Gamma, t' < t, t' = z \vee t' < z, \pred{N}{t'}, \pred{N}{t}, \pred{N}{z}}{\Delta}}
                {\infer[{(\text{WL})}]
                  {\sequent{\Gamma, t' < t, t' = z, \pred{N}{t'}, \pred{N}{t}, \pred{N}{z}}{\Delta}}
                  {{\ast} \quad \sequent{\Gamma, t' < t, \pred{N}{t'}, \pred{N}{t}}{\Delta}}
                  &
                 \multiput(3,0)(0,5){8}{.}
                 \raisebox{1.5cm}{
                 \hbox to 3cm {
                 \kern-6cm\infer[{\eqref{eq:RTC-cyc}}]
                   {\sequent{\Gamma, t' < t, t' < z, \pred{N}{t'}, \pred{N}{t}, \pred{N}{z}}{\Delta}}
                   {\multiput(10,0)(0,5){10}{.}
                    \raisebox{1.75cm}{
                    \hbox to 0.5cm {
                    \kern-1cm
                    \infer[(\text{WL/WR})]
                     {\sequent{\Gamma, t' < t, t' < 0, \pred{N}{t'}, \pred{N}{t}, \pred{N}{0}}{\Delta}}
                     {\infer*
                       {\sequent{t' < 0, \pred{N}{t'}, \pred{N}{0}}{}}
                       {\dag}}}}
                     &
                    \infer[{({=}\text{L}')}]
                     {\sequent{\Gamma, t' < t, t' < z, \pred{N}{t'}, \pred{N}{t}, \pred{N}{z}, \pred{N}{z'}, z = \suc{z'}}{\Delta}}
                     {\infer[(\text{Subst})]
                       {\sequent{\Gamma, t' < t, t' < \suc{z'}, \pred{N}{t'}, \pred{N}{t}, \pred{N}{\suc{z'}}, \pred{N}{z'}}{\Delta}}
                       {\sequent{\Gamma, t' < t, t' < \suc{z}, \pred{N}{t'}, \pred{N}{t}, \pred{N}{\suc{z}}, \pred{N}{z}}{\Delta}
                        \tikz \coordinate (bud) at (0,0);}}}}}}}}}}}}}
      \begin{tikzpicture}[overlay]
        \draw[arrow] (bud) ++(-1.5em,1em) -- ++(0,1em) -- ++(2.1cm,0) -- ++(0,-5cm) |- ++(-4.5cm,0) -| ++(0,1em) [->];
      \end{tikzpicture}
    \end{gather*}
    \end{minipage}}
    \caption{A derivation schema simulating a {\ca} trace progression point in {\cortc\plusArith}.}
    \label{fig:CA-in-CRTC+A:TraceAwareDerivation}
  \end{figure*}  
  
  Consider the open derivation schema shown in \Cref{fig:CA-in-CRTC+A:TraceAwareDerivation}. Here, $\pred{N}{t}$ abbreviates the formula $(\rtcformula{w}{u}{u = \suc{w}})(0, t)$, and $t' < t$ abbreviates the translation given above, i.e.~$s \neq t \wedge (\rtcformula{w}{u}{\suc{w} = u})(s, t)$. 
  The symbols $\dag$, $\ddag$ and $\maltese$ denote the (simple) derivations of the $\ast$-translations of the appropriate axioms.
  We also write (${=}\text{L}'$) in \Cref{fig:CA-in-CRTC+A:TraceAwareDerivation} to refer to instances of the following general schema for a derived equality rule, in which $\Sigma(t) = \{ \varphi_{1}(t), \ldots, \varphi_{n}(t) \}$ and $\Sigma(t)$ may or may not occur in the right-hand premise of the (Cut) rule instance.
  \begin{gather*}
    \scalebox{0.8}{
    \begin{prooftree}
      \prooftree
        \prooftree
          \prooftree
            \prooftree
              \justifies
              \sequent{\varphi_{1}(t)}{\varphi_{1}(t)}
              \using {\text{\smaller{(Ax)}}}
            \endprooftree
            \justifies
            \sequent{\Sigma(t)}{\varphi_{1}(t)}
            \using {\text{\smaller{(WL)}}}
          \endprooftree
            \cdots
          \prooftree
            \prooftree
              \justifies
              \sequent{\varphi_{n}(t)}{\varphi_{n}(t)}
              \using {\text{\smaller{(Ax)}}}
            \endprooftree
            \justifies
            \sequent{\Sigma(t)}{\varphi_{n}(t)}
            \using {\text{\smaller{(WL)}}}
          \endprooftree
          \justifies
          \sequent{\Sigma(t)}{\wedge_{i \leq n} \, \varphi_{i}(t)}
          \using {\text{\smaller{($\wedge$R)}}}
        \endprooftree
        \justifies
        \sequent{\Sigma(t), t = u}{\wedge_{i \leq n} \, \varphi_{i}(u)}
        \using {\text{\smaller{(${=}\text{L}_1$)}}}
      \endprooftree
      \prooftree
        \sequent{\Gamma, [\Sigma(t)], \Sigma(u)}{\Delta}
        \justifies
        \sequent{\Gamma, [\Sigma(t)], {\wedge_{i \leq n} \, \varphi_{i}(u)}}{\Delta}
        \using {\text{\smaller{($\wedge$L)}}}
      \endprooftree
      \justifies
      \sequent{\Gamma, \Sigma(t), t = u}{\Delta}
      \using {\text{\smaller{(Cut)}}}
    \end{prooftree}
    }
  \end{gather*}
  Notice that, crucially, in the derivation in \Cref{fig:CA-in-CRTC+A:TraceAwareDerivation} there is a \emph{progressing} trace from $\pred{N}{t}$ in the conclusion to $\pred{N}{t'}$ in the premise (marked with a $\ast$). There is also a non-progressing trace from each {\rtcOperator} formula in $\Gamma$ in the conclusion to its occurrence in $\Gamma$ in the premise.
  
  For each {\ca} inference rule concluding $\sequent{\Gamma}{\Delta}$, we build a \emph{trace-aware} derivation of its $\ast$-translation as follows. First, we take the simple derivation of the inference rule. Notice that this gives a non-progressing trace from each $\pred{N}{t} \in \Gamma^{\ast}$ to $\pred{N}{t'}$ in a premise where $t'$ is a precursor of $t$ in the {\ca} inference rule.
  Now, for every premise $\sequent{\Gamma_{j}}{\Delta_{j}}$ of the rule, let $\{ t'_{1} < t_{1}, \ldots, t'_{n} < t_{n} \}$ be the set of all such formulas in $\Gamma_{j}$. Notice that $t'_{i} < t_{i}, \pred{N}{t'_{i}}, \pred{N}{t_{i}} \subseteq (\Gamma_{j})^{\ast}$ for each $i$. To each corresponding premise in the simple derivation we apply, in turn, $n$ instances of the derivation schema in \Cref{fig:CA-in-CRTC+A:TraceAwareDerivation}, one for each formula $t_{i} < t'_{i}$. Notice that this combined derivation satisfies the following for each premise $\sequent{\Gamma_{j}}{\Delta_{j}}$ and free terms $t \in \Gamma$ and $t_{i} \in \Gamma_{j}$:
  \begin{enumerate}[nosep]
    \item
    there is a non-progressing trace from $\pred{N}{t} \in \Gamma^{\ast}$ in the conclusion to $\pred{N}{t_{i}} \in (\Gamma_{j})^{\ast}$ in the premise if $t_{i}$ is a precursor of $t$; and
    \item
    there is a progressing trace from $\pred{N}{t} \in \Gamma^{\ast}$ in the conclusion to $\pred{N}{t'_{i}} \in (\Gamma_{j})^{\ast}$ in the premise if $t_{i}$ is a precursor of $t$ and $t'_{i} < t_{i} \in \Gamma_{j}$.
  \end{enumerate}
  Note that this means the notion of trace in the {\ca} inference rules is exactly mirrored by a {\cortc} trace in the trace-aware derivation.
  Moreover, the properties above hold of \emph{all} paths from the conclusion to a premise in the trace-aware derivation, in particular those that travel around the internal cycles any finite number of times.
  
  From this construction it follows that we can transform a {\ca} pre-proof, via the trace-aware local $\ast$-translation of each rule, into a {\cortc\plusArith} pre-proof with the same global structure. 
  It remains to show that each such {\cortc\plusArith} pre-proof resulting from a {\ca} proof is also a {\cortc\plusArith} proof. That is, it satisfies the {\cortc} global trace condition. Consider an arbitrary infinite path in the {\cortc\plusArith} pre-proof. There are two cases to consider:
  \begin{itemize}[nosep]
    \item
    The infinite path ends up traversing an infinite path local to the (trace-aware) $\ast$-translation of an inference rule or {\ca} axiom; in this case notice that each such infinite path has an infinitely progressing trace.
    \item
    The infinite path corresponds to an infinite path in the {\ca} proof (possibly interspersed with finite traversals of the cycles local to the trace-aware \nohyphens{$\ast$-translation} of each rule instance). Since there is an infinitely progressing trace following the path in the {\ca} proof, by the properties above there is also a corresponding infinitely progressing trace following the path in the {\cortc\plusArith} pre-proof.
    \qed
  \end{itemize}
\end{proof}

This leads immediately to the `if' direction.

\begin{corollary}
  \label{cor:CA-Beta-in-CRTC+A}
  If ${} \vdash_{\ca} \sequent{\Gamma^{\beta}}{\Delta^{\beta}}$ then ${} \vdash_{\cortc\plusArith} \sequent{\Gamma}{\Delta}$.
\end{corollary}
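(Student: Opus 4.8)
The plan is to obtain the result by combining the two immediately preceding lemmas, with a short bookkeeping argument using cuts; essentially all of the work has already been done in \Cref{lem:CA-in-CRTC+A}, so there is no genuinely new construction to carry out. First I would apply \Cref{lem:CA-in-CRTC+A} directly to the hypothesis: since $\vdash_{\ca} \sequent{\Gamma^{\beta}}{\Delta^{\beta}}$, that lemma yields a {\cortc\plusArith} proof of $\sequent{\Gamma^{\beta}}{\Delta^{\beta}}$. It then remains only to convert this into a proof of the untranslated sequent $\sequent{\Gamma}{\Delta}$.

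To do so I would invoke \Cref{lem:CRTC+A:formulas}, which gives, for every formula $\chi$, a {\cortc\plusArith} proof of $\chi \Leftrightarrow \chi^{\beta}$. By standard $\LK_{=}$ manipulation of the two implicational halves of each biconditional, these yield proofs of the sequents $\sequent{\chi}{\chi^{\beta}}$ and $\sequent{\chi^{\beta}}{\chi}$. I would then eliminate the $\beta$-translations one formula at a time by cutting: for each $\psi \in \Delta$, a cut of $\sequent{\Gamma^{\beta}}{\ldots, \psi^{\beta}}$ against $\sequent{\psi^{\beta}}{\psi}$ replaces the succedent occurrence of $\psi^{\beta}$ by $\psi$; dually, for each $\varphi \in \Gamma$, a cut of $\sequent{\varphi}{\varphi^{\beta}}$ against $\sequent{\ldots, \varphi^{\beta}}{\Delta}$ replaces the antecedent occurrence of $\varphi^{\beta}$ by $\varphi$. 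Since $\Gamma$ and $\Delta$ are finite, finitely many such cuts suffice, and the end sequent is exactly $\sequent{\Gamma}{\Delta}$.

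Finally I would check that the assembled derivation is a genuine {\cortc\plusArith} proof. It is built from finitely many {\cortc\plusArith} proofs---the translated proof of $\sequent{\Gamma^{\beta}}{\Delta^{\beta}}$ and the finitely many biconditional proofs---glued together by applications of the primitive (Cut) rule; hence it remains a finite, regular derivation, and every infinite path must eventually descend into, and stay within, one of the component proofs, so it is followed by an infinitely progressing trace and the global trace condition is preserved. Because (Cut) is a primitive rule of the system, no appeal to cut-admissibility is required. I do not anticipate a real obstacle here: the only points deserving care are unpacking $\Leftrightarrow$ into its two directions and applying each cut on the correct side and in the correct orientation, both of which are routine, the genuine difficulty having already been absorbed into \Cref{lem:CA-in-CRTC+A}.
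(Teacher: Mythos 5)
Your proposal is correct and follows essentially the same route as the paper's own proof: apply \Cref{lem:CA-in-CRTC+A} to obtain a {\cortc\plusArith} proof of $\sequent{\Gamma^{\beta}}{\Delta^{\beta}}$, then use the derivations of $\sequent{\varphi}{\varphi^{\beta}}$ (for $\varphi \in \Gamma$) and $\sequent{\varphi^{\beta}}{\varphi}$ (for $\varphi \in \Delta$) supplied by \Cref{lem:CRTC+A:formulas}, discharging the translations with finitely many cuts. Your additional check that the glued derivation still satisfies the global trace condition is a detail the paper leaves implicit, but it is accurate and does not change the argument.
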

\begin{proof}
  We first use \Cref{lem:CA-in-CRTC+A} to derive $\sequent{\Gamma^{\beta}}{\Delta^{\beta}}$ in $\cortc\plusArith$, and then combine this with derivations in $\cortc\plusArith$ of $\sequent{\varphi}{\varphi^{\beta}}$ (resp.~$\sequent{\varphi^{\beta}}{\varphi}$) for each $\varphi \in \Gamma$ (resp.~$\varphi \in \Delta$), which exist by \Cref{lem:CRTC+A:formulas}, with applications of cuts to derive $\sequent{\Gamma}{\Delta}$.
  \qed
\end{proof}

In \cite{Berardi2017b}, to show the equivalence of the explicit and cyclic systems for {\lkid}, a construction was given which translates {\lkid} predicates $P(\vec{t})$ into predicates $P'(\vec{t}, n)$ with equivalent inductive definitions and an extra parameter $n$ comprising a `stage' variable. The equivalence is derived by using the cycles in a proof to construct an explicit induction hypothesis over these stage variables. Here, for {\TC}, instead of directly constructing an induction hypothesis for the explicit system, we show that from a $\cortc\plusArith$ proof we can construct an analogous proof in {\ca} which preserves cycles, and then use the existing equivalence results between {\ca}, {\pa} and {\rtc}. Our construction is similar to the one given in \cite{Berardi2017b}, in that we use a variant of the $\beta$-function which introduces a free variable $n$ (similar to a stage variable), which we are able to trace in the cyclic {\ca} proof. This results in the `only if' direction of the result.

\begin{lemma}
  \label{lem:CRTC+A-Beta-in-CA}
  If ${} \vdash_{\cortc\plusArith} \sequent{\Gamma}{\Delta}$ then ${} \vdash_{\ca} \sequent{\Gamma^{\beta}}{\Delta^{\beta}}$.
\end{lemma}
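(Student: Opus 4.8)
The plan is to adapt the stage-variable technique of \cite{Berardi2017b} to the uniform {\TC} setting, turning the single unfolding rule \eqref{eq:RTC-cyc} into a case analysis on a stage variable inside {\ca} and thereby converting every progressing {\cortc} trace into a progressing {\ca} trace. The central device is a \emph{staged} refinement $(\cdot)^{\beta}_{n}$ of the $\beta$-translation, which records the length of the witnessing chain of the outermost {\rtcOperator} formula in a fresh variable $n$: I would set $((\rtcformula{x}{y}{\varphi})(s, t))^{\beta}_{n}$ to be
\begin{multline*}
  (n = 0 \wedge s = t) \vee (\exists c \suchthat B(c, 0, s) \wedge B(c, n, t) \wedge {} \\
    \forall u < n \holdsthat \exists v, w \suchthat B(c, u, v) \wedge B(c, \suc{u}, w) \wedge {\varphi^{\beta} \subst{\replace{v}{x}, \replace{w}{y}}})
\end{multline*}
and treat the remaining connectives homomorphically, so that $(\cdot)^{\beta}_{n}$ differs from $(\cdot)^{\beta}$ only in exposing this stage variable. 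First I would verify, by elementary reasoning about the $\beta$-function in {\PA}---and hence in {\ca}, which proves every {\PA}-theorem \cite{Simpson2017}---that every {\rtcOperator} formula $\psi$ satisfies $\psi^{\beta} \Leftrightarrow \exists n \suchthat \psi^{\beta}_{n}$, that the $n = 0$ instance collapses to the reflexive case, and that the $n = \suc{m}$ instance yields a predecessor $z$ carrying a chain of length $m$ for the immediate ancestor.

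Next I would translate the given {\cortc\plusArith} proof $\mathcal{P}$ rule by rule into a {\ca} derivation of the same graph shape, replacing each left-hand {\rtcOperator} formula lying on a trace by its staged form for a dedicated stage variable, and every other {\rtcOperator} formula by its plain $\beta$-translation. The shared {\LK}, substitution, and {\PA}-axiom rules translate directly, and the $\beta$-images of the arithmetic axioms, including axiom \eqref{ax:everything-is-nat}, are {\ca}-provable since in {\ca} every element is a natural number; all of these carry non-progressing traces on the unchanged stage variables. The crucial case is the unfolding rule \eqref{eq:RTC-cyc}, whose $\beta$-image I would derive by a case split on the stage variable $n$ of the principal staged formula: the $n = 0$ branch discharges against the $\beta$-image of the left premise $\sequent{\Gamma, s = t}{\Delta}$, while the $n = \suc{m}$ branch uses the $\beta$-function to expose a predecessor witnessing a chain of length $m$ for the immediate ancestor $(\rtcformula{x}{y}{\varphi})(s, z)$, whose staged form then carries the strictly smaller stage variable $m$. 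As $m < \suc{m}$, this is precisely a {\ca} progress point (cf.~\cite[Def.~1]{Simpson2017}), and I would organise the derivation so that each progressing {\cortc} trace through \eqref{eq:RTC-cyc} is mirrored by a progressing {\ca} trace on the associated stage variable, and each non-progressing {\cortc} trace by a non-progressing {\ca} trace.

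Assembling these local simulations yields a {\ca} pre-proof of the same cyclic shape as $\mathcal{P}$, whose root is the $\beta$-translation of $\sequent{\Gamma}{\Delta}$ with the traced left-hand {\rtcOperator} formulas rendered in staged form. By the trace correspondence, every infinite path of this pre-proof inherits an infinitely progressing {\ca} trace from the infinitely progressing {\cortc} trace guaranteed along the matching path of $\mathcal{P}$; hence the {\ca} global trace condition is met and the pre-proof is a genuine {\ca} proof. A concluding round of cuts against the {\PA}-provable (hence {\ca}-provable) equivalences $\psi^{\beta} \Leftrightarrow \exists n \suchthat \psi^{\beta}_{n}$, with $\exists$L binding the now-superfluous stage variables, then eliminates the staged forms and delivers $\vdash_{\ca} \sequent{\Gamma^{\beta}}{\Delta^{\beta}}$.

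The main obstacle is to make the descent faithful and uniform. I must guarantee that unfolding an {\rtcOperator} formula in {\cortc} corresponds to a strict decrease of exactly the stage variable that the {\ca} trace machinery follows---so that the {\cortc} degree measure, namely the length of a minimal witnessing chain, and the traced {\ca} stage variable coincide---even when $\varphi$ contains nested {\rtcOperator} subformulas, each demanding its own stage variable, and when the internal quantification over codes $c$ in the $\beta$-translation must be manipulated to extract the predecessor matching the decrement. Lining up the two global trace conditions in this way is the technical heart of the argument; by contrast, the shared rule translations and the concluding existential closure are routine given the equivalences already available among {\ca}, {\pa}, and {\rtc}.
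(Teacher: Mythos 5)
Your proposal is correct and takes essentially the same route as the paper's own proof: the paper likewise introduces a parameterised (``staged'') variant $\bar{\beta}[n]$ of the $\beta$-translation that exposes the chain length of each antecedent {\rtcOperator} formula as a fresh free variable, simulates each rule locally so that Rule \eqref{eq:RTC-cyc} becomes a case split producing a {\ca} progress point $m < n$ on the stage variable, preserves the cyclic graph structure and traces, and finally recovers the plain $\beta$-translation by existential introduction and cuts. The remaining differences are cosmetic (the paper leaves the stage unconstrained in the reflexive disjunct rather than forcing $n = 0$, stages all antecedent {\rtcOperator} formulas rather than only the traced ones, and spells out the bud-companion detail that a substitution instance is needed to realign stage parameters when closing cycles).
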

\begin{proof}
  Similarly to the proof of \cref{lem:CA-in-CRTC+A} above, we define a local translation on proof rules that preserves {\cortc\plusArith} traces as {\ca} traces. For this, we use a \emph{parameterised} variant $\bar{\beta}[n]$ of the $\beta$-translation, which introduces its parameter as a free variable in the translation. It is defined in the same way as the $\beta$-translation, except that it translates formulas of the form $(\rtcformula{x}{y}{\varphi})(s, t)$ as follows:
  \begin{multline*}
    s = t \vee \exists z, c \suchthat n = \suc{z} \wedge B(c,0,s) \wedge B(c,\suc{z},t) \wedge {}
      \\
    \forall u \leq z \holdsthat \exists v, w \suchthat B(c,u,v) \wedge B(c,\suc{u},w) \wedge {\varphi^{\beta} \subst{\scalebox{0.865}{\replace{v}{x}, \replace{w}{y}}}}
  \end{multline*}
  Notice the use of the original $\beta$-translation for the body of the {\rtcOperator} formula $\varphi$.
  We extend the $\bar{\beta}$-translation to sets of formulas, sequents, and inference rules as follows:
  \begin{itemize}[wide,itemsep=0.5em]
    \item
    for a set of formulas $\Gamma$, we define $\Gamma^{\bar{\beta}}$ as the set of $\bar{\beta}$-translations of the formulas in $\Gamma$ such that each translation of an $\rtcOperator$ sub-formula introduces a fresh free variable $z$; that is, each distinct {\rtcOperator} sub-formula is translated using a distinct variable parameter;
    \item
    for sequents, we define $(\sequent{\Gamma}{\Delta})^{\bar{\beta}} = \sequent{\Gamma^{\bar{\beta}}}{\Delta^\beta}$ such that the free variable parameters used by the $\bar{\beta}$-translation of the antecedent $\Gamma$ are distinct from the free variables in the succedent $\Delta$;
    \item
    for an inference rule with premises $\sequent{\Gamma_1}{\Delta_1}, \ldots, \sequent{\Gamma_n}{\Delta_n}$ and conclusion $\sequent{\Gamma}{\Delta}$, we define its $\bar{\beta}$-translation as the inference rule with premises $(\sequent{\Gamma_1}{\Delta_1})^{\bar{\beta}}, \ldots, (\sequent{\Gamma_n}{\Delta_n})^{\bar{\beta}}$ and conclusion $(\sequent{\Gamma}{\Delta})^{\bar{\beta}}$ such that multiple occurrences of the same {\rtcOperator} sub-formula across the original premises and conclusion are translated using the same free variable parameter in each of the translated premises and conclusion.
  \end{itemize}

  We show that if ${} \vdash_{\cortc\plusArith} \sequent{\Gamma}{\Delta}$ then ${} \vdash_{\ca} (\sequent{\Gamma}{\Delta})^{\bar{\beta}}$.
  We first prove that the $\bar{\beta}$-translation of each {\cortc\plusArith} inference rule can be derived in {\ca} in such a way that there is a {\ca} trace (progressing or non-progressing, as appropriate) simulating each {\cortc\plusArith} trace present in the original rule, which we do by tracing the free variable parameters in the {\ca} rule. We show how this is done for Rule \eqref{eq:RTC-cyc}; the other rules are straightforward, and do not contain progressing traces.
 
  \begin{figure*}[!t]
    \subfloat[One step from $s$ to $t$.]{
    \label{fig:CRTC+A-Beta-in-CA:TraceAwareDerivation:OneStep}
    \resizebox{\textwidth}{!}{\begin{minipage}{1.15\textwidth}
      \begin{gather*}
        \kern-1.5cm
        \infer[(\text{$\exists$R/$\wedge$R})]
          {\sequent{z = 0, \Sigma(z)}{\exists z, m \suchthat m < n \wedge (s = z \vee A(m, s, z)) \wedge {\varphi^{\beta} \subst{\scalebox{0.865}{\replace{z}{x},\replace{t}{y}}}}}}
          {\infer[(\text{$=$L})]
            {\sequent{z = 0, \Sigma(z)}{0 < n}}
            {\infer[(\text{{\sf{PA}}-Ax})]{\sequent{z = 0, \Sigma(z)}{0 < \suc{0}}}{}}
              &
           \multiput(10,0)(0,5){8}{.}
           \raisebox{1.5cm}{
           \hbox to 3.5cm {
           \kern -5cm
             \infer[(\text{$\vee$R})]
              {\sequent{z = 0, \Sigma(z)}{s = s \vee A(0, s, s)}}
              {\infer[(\text{$=$R})]{\sequent{z = 0, \Sigma(z)}{s = s}}{}}
           }}
              &
           \multiput(-40,0)(0,5){16}{.}
           \raisebox{3cm}{
           \hbox to 2.5cm {
           \kern-7cm
             \infer[(\text{WL/$=$L})]
               {\sequent{z = 0, \Sigma(z)}{\varphi^{\beta} \subst{\scalebox{0.865}{\replace{z}{x},\replace{t}{y}}}}}
               {\infer[(\text{$\leq_{0}$/$\exists$L})]
                 {\sequent{B(c, 0, s), B(c, \suc{0}, t), \forall u \leq 0 \holdsthat \vartheta(u)}{\varphi^{\beta} \subst{\scalebox{0.865}{\replace{z}{x},\replace{t}{y}}}}}
                 {\infer[(B)]
                   {\sequent{B(c, 0, s), B(c, \suc{0}, t), B(c, 0, v), B(c, \suc{0}, w), {\varphi^{\beta} \subst{\scalebox{0.865}{\replace{v}{x},\replace{w}{y}}}}}{\varphi^{\beta} \subst{\scalebox{0.865}{\replace{z}{x},\replace{t}{y}}}}}
                   {\infer[(\text{$=$L})]
                     {\sequent{s = v, t = w, \varphi^{\beta} \subst{\scalebox{0.865}{\replace{v}{x},\replace{w}{y}}}}{\varphi^{\beta} \subst{\scalebox{0.865}{\replace{s}{x},\replace{t}{y}}}}}
                     {\infer[(\text{Ax})]
                       {\sequent
                         {\varphi^{\beta} \subst{\scalebox{0.865}{\replace{s}{x},\replace{t}{y}}}}
                         {\varphi^{\beta} \subst{\scalebox{0.865}{\replace{s}{x},\replace{t}{y}}}}
                       }{}}
                   }
                 }
               }
             }}
          }
      \end{gather*}
    \end{minipage}}}
      \\[1em]
    \subfloat[Multi-step from $s$ to $t$.]{
    \label{fig:CRTC+A-Beta-in-CA:TraceAwareDerivation:MultiStep}
    \resizebox{\textwidth}{!}{\begin{minipage}{1.35\textwidth}
      \begin{gather*}
        \kern-0.5cm
         \infer[(\text{$\exists$L/$=$L})]
          {\sequent{\exists z' \suchthat z = \suc{z'}, \Sigma(z)}{\psi}}
          {\infer[(\leq_{\suc{}})]
             {\sequent{n = \suc{\suc{z'}}, B(c, 0, s), B(c, \suc{\suc{z'}}, t), \forall u \leq \suc{z'} \holdsthat \vartheta(u)}{\psi}}
               {\infer[(\text{$\exists$L/$\wedge$L})]
                 {\sequent{n = \suc{\suc{z'}}, B(c, 0, s), B(c, \suc{\suc{z'}}, t), \vartheta(\suc{z'}), \forall u \leq z' \holdsthat \vartheta(u)}{\psi}}
                 {\infer[(\text{$\exists$R/$\wedge$R})]
                   {\sequent{\Pi(z')}{\exists z, m \suchthat m < n \wedge (s = z \vee A(m, s, z)) \wedge {\varphi^{\beta} \subst{\scalebox{0.865}{\replace{z}{x},\replace{t}{y}}}}}}
                   {\infer[(\text{$=$L})]
                     {\sequent{\Pi(z')}{\suc{z'} < n}}
                     {\infer[(\text{{\sf{PA}}-Ax})]{\sequent{\Pi(z')}{\suc{z'} < \suc{\suc{z'}}}}{}}
                       &
                      \multiput(10,0)(0,7){8}{.}
                      \raisebox{2cm}{
                      \hbox to 1cm {
                      \kern -7cm
                        \infer[(\text{$\vee$R})]
                          {\sequent{\Pi(z')}{s = v \vee A(\suc{z'}, s, v)}}
                          {\infer[(\text{$\exists$R/$\wedge$R})]
                            {\sequent{\Pi(z')}{\exists z, c \suchthat \suc{z'} = \suc{z} \wedge B(c,0,s) \wedge B(c,\suc{z},v) \wedge \forall u \leq z \holdsthat \vartheta(u)}}
                            {\infer[(\text{$=$R})]{\sequent{\Pi(z')}{\suc{z'} = \suc{z'}}}{}
                              &
                             \infer[(\text{Ax})]{\sequent{\Pi(z')}{B(c, 0, s)}}{}
                              &
                             \infer[(\text{Ax})]{\sequent{\Pi(z')}{B(c, \suc{z'}, v)}}{}
                              &
                             \infer[(\text{Ax})]{\sequent{\Pi(z')}{\forall u \leq z' \holdsthat \vartheta(u)}}{}}}
                       }}
                         &
                       \infer[(B/WL)]
                         {\sequent{\Pi(z')}{\varphi^{\beta} \subst{\scalebox{0.865}{\replace{v}{x},\replace{t}{y}}}}}
                         {\infer[(\text{$=$L})]
                           {\sequent{t = w, \varphi^{\beta} \subst{\scalebox{0.865}{\replace{v}{x},\replace{w}{y}}}}{\varphi^{\beta} \subst{\scalebox{0.865}{\replace{v}{x},\replace{t}{y}}}}}
                           {\infer[(\text{Ax})]
                             {\sequent{\varphi^{\beta} \subst{\scalebox{0.865}{\replace{v}{x},\replace{t}{y}}}}{\varphi^{\beta} \subst{\scalebox{0.865}{\replace{v}{x},\replace{t}{y}}}}}{}}}}}}
          }
      \end{gather*}
    \end{minipage}}}
      \\[1em]
    \resizebox{\textwidth}{!}{\begin{minipage}{1.15\textwidth}
      \begin{gather*}
        \infer[(\text{$\exists$L})]
          {\sequent{A(n, s, t)}{\exists z, m \suchthat m < n \wedge (s = z \vee A(m, s, z)) \wedge {\varphi^{\beta} \subst{\scalebox{0.865}{\replace{z}{x},\replace{t}{y}}}}}}
          {\infer[(\text{Cut})]
            {\sequent{\Sigma(z)}{\exists z, m \suchthat m < n \wedge (s = z \vee A(m, s, z)) \wedge {\varphi^{\beta} \subst{\scalebox{0.865}{\replace{z}{x},\replace{t}{y}}}}}}
            {\infer*{\sequent{}{z = 0 \vee \exists z' \suchthat z = \suc{z'}}}{\dag}
              &
             \infer[(\text{$\vee$L})]
              {\sequent{z = 0 \vee \exists z' \suchthat z = \suc{z'}, \Sigma(z)}{\psi}}
              {\infer*{\sequent{z = 0, \Sigma(z)}{\psi}}{{\subref*{fig:CRTC+A-Beta-in-CA:TraceAwareDerivation:OneStep}}}
                &
               \infer*{\sequent{\exists z' \suchthat z = \suc{z'}, \Sigma(z)}{\psi}}{{\subref*{fig:CRTC+A-Beta-in-CA:TraceAwareDerivation:MultiStep}}}
              }
            }
          }
      \end{gather*}
    \end{minipage}}
    \caption{The core subderivation of the simulation of Rule \eqref{eq:RTC-cyc} in {\ca}.}
    \label{fig:CRTC+A-Beta-in-CA:TraceAwareDerivation:Subderivation}  
  \end{figure*}

  Take an instance of Rule \eqref{eq:RTC-cyc} with contexts $\Gamma$ and $\Delta$, and active formulas $(\rtcformula{x}{y}{\varphi})(s, t)$ and $(\rtcformula{x}{y}{\varphi})(s, z)$ in the conclusion and right-hand premise, respectively. For terms $r$, $s$, and $t$, let:
  \begin{itemize}[nosep]
    \item
    $\vartheta(r)$ abbreviate $\exists v, w \suchthat B(c,r,v) \wedge B(c,\suc{r},w) \wedge {\varphi^{\beta} \subst{\scalebox{0.865}{\replace{v}{x},\replace{w}{y}}}}$; and 
    \item
    $A(r, s, t)$ abbreviate $\exists z, c \suchthat r = \suc{z} \wedge B(c,0,s) \wedge B(c,\suc{z},t) \wedge \forall u \leq z \holdsthat \vartheta(u)$.
  \end{itemize}
  Additionally, let $\Sigma(r)$ and $\Pi(r)$ abbreviate the follow sequences of formulas:
  \begin{gather*}
    n = \suc{r}, B(c, 0, s), B(c, \suc{r}, t), \forall u \leq r \holdsthat \vartheta(u)
      \\
    n = \suc{\suc{r}}, B(c, 0, s), B(c, \suc{\suc{r}}, t), B(c,\suc{r},v), B(c,\suc{\suc{r}},w), {\varphi^{\beta} \subst{\scalebox{0.865}{\replace{v}{x},\replace{w}{y}}}}, \forall u \leq r \holdsthat \vartheta(u)
  \end{gather*}
  Moreover, note the following.
  \begin{enumerate}[nosep, wide, label={\roman*)}]
    \item
    We can easily derive $\sequent{}{z = 0 \vee \exists z' \suchthat z = \suc{z'}}$ using standard first-order rules and the axioms of {\ca}; we refer to this derivation using $\dag$.
    \item
    Our use of the notation $\forall u \leq t \holdsthat \gamma$ technically abbreviates the {\ca} formula $\forall u \holdsthat (u = t \vee u < t) \rightarrow \gamma$, and so we may straightforwardly derive both 
    $\sequent{\forall u \leq 0 \holdsthat \gamma(u)}{\gamma(0)}$ and $\sequent{\forall u \leq \suc{t} \holdsthat \gamma(u)}{\gamma(\suc{t}) \wedge \forall u \leq t \holdsthat \gamma(u)}$;
    for brevity, we refer to an instance of the (Cut) rule that applies these sequents using the labels $(\leq_{0})$ and $(\leq_{\suc{}})$, respectively.
    \item
    Recall that, since the formula $B$ captures a $\beta$-function, we may also derive $\sequent{B(r, s, t), B(r, s, u)}{t = u}$; we abbreviate instances of (Cut) that apply an instance of this sequent using the label ($B$).
  \end{enumerate}
  
  Using these elements, \cref{fig:CRTC+A-Beta-in-CA:TraceAwareDerivation:Subderivation} shows a derivation of the following sequent, in which we have abbreviated the antecedent formula by $\psi$: 
  \begin{gather*}
    \sequent{A(n, s, t)}{\exists z, m \suchthat m < n \wedge (s = z \vee A(m, s, z)) \wedge {\varphi^{\beta} \subst{\scalebox{0.865}{\replace{z}{x},\replace{t}{y}}}}}
  \end{gather*}
  Then, using \cref{fig:CRTC+A-Beta-in-CA:TraceAwareDerivation:Subderivation} as a subderivation, we derive the $\bar{\beta}$-translation of Rule \eqref{eq:RTC-cyc} in {\ca} as shown in \cref{fig:CRTC+A-Beta-in-CA:TraceAwareDerivation}. Note that for any sequence of formulas $\Sigma$, we can straightforwardly derive $\sequent{\Gamma, \Sigma^{\beta}}{\Delta}$ from $\sequent{\Gamma, \Sigma^{\bar{\beta}}}{\Delta}$ by first introducing existential quantifiers for the free variable parameters in $\Sigma^{\bar{\beta}}$ and then eliminating the terms $n < \suc{z}$ with cuts. We abbreviate such a derivation using the label $(\bar{\beta})$. Note that this admits non-progressing traces for all the free variable parameters in $\Gamma$.
  The crucial feature of this derivation is that there is a {\ca} trace from the free variable parameter $n$ in the conclusion to $m$ in the right-hand premise, which progresses at the sequent containing the boxed formula $m < n$. Also, since the context $\Gamma^{\bar{\beta}}$ is preserved along the paths to both the left and right premises, all non-progressing traces are simulated as well.
  
  \begin{figure*}[!t]
    \resizebox{\textwidth}{!}{\begin{minipage}{1.18\textwidth}
      \begin{gather*}
        \infer[{(\text{$\vee$L})}]
          {\sequent{\Gamma^{\bar{\beta}}, s = t \vee A(n, s, t)}{\Delta^\beta}}
          {\infer[(\text{$=$L})]
            {\sequent{\Gamma^{\bar{\beta}}, s = t}{\Delta^\beta}}
            {\sequent
              {(\Gamma')^{\bar{\beta}} \subst{\replace{s}{u}, \replace{t}{w}}}
              {(\Delta')^\beta \subst{\replace{s}{u}, \replace{t}{w}}}}
              &
           \multiput(60,0)(0,9){5}{.}
           \raisebox{1.5cm}{
           \hbox to 8cm {
           \kern -2.5cm
           \infer[(\text{Cut})]
            {\sequent{\Gamma^{\bar{\beta}}, A(n, s, t)}{\Delta^\beta}}
            {\multiput(14,0)(0,10){8}{.}
             \raisebox{8em}{
             \hbox to 2em {
               \kern-12em
               \infer*{\sequent{A(n,s,t)}{\exists z, m \suchthat m < n \wedge (s = z \vee A(m, s, z)) \wedge {\varphi^{\beta} \subst{\scalebox{0.865}{\replace{z}{x},\replace{t}{y}}}}}}{\text{\small \cref{fig:CRTC+A-Beta-in-CA:TraceAwareDerivation:Subderivation}}}
             }}
              &
             \infer[(\text{$\exists$L/$\wedge$L})]
               {\sequent{\Gamma^{\bar{\beta}}, \exists z, m \suchthat m < n \wedge (s = z \vee A(m, s, z)) \wedge {\varphi^{\beta} \subst{\scalebox{0.865}{\replace{z}{x},\replace{t}{y}}}}}{\Delta^\beta}}
               {\infer[(\text{WL})]
                 {\sequent{\Gamma^{\bar{\beta}}, \fbox{$m < n$}, (s = z \vee A(m, s, z)), {\varphi^{\beta} \subst{\scalebox{0.865}{\replace{z}{x},\replace{t}{y}}}}}{\Delta^\beta}}
                 {\infer[({\bar{\beta}})]
                    {\sequent{\Gamma^{\bar{\beta}}, (s = z \vee A(m, s, z)), {\varphi^{\beta} \subst{\scalebox{0.865}{\replace{z}{x},\replace{t}{y}}}}}{\Delta^\beta}}
                    {\sequent{\Gamma^{\bar{\beta}}, (s = z \vee A(m, s, z)), {\varphi^{\bar{\beta}} \subst{\scalebox{0.865}{\replace{z}{x},\replace{t}{y}}}}}{\Delta^\beta}}
                 }
               }
            }}}
          }
      \end{gather*}
    \end{minipage}}
    \caption{A derivation schema simulating Rule \eqref{eq:RTC-cyc} in {\ca}.}
    \label{fig:CRTC+A-Beta-in-CA:TraceAwareDerivation}
  \end{figure*}

  Now, using the derivations of the local $\bar{\beta}$-translations of the inference rules, from a {\cortc\plusArith} pre-proof, we can build a {\ca} pre-proof with the same global structure. For each bud in the resulting {\ca} pre-proof, we first apply an instance of the substitution rule that substitutes each free variable parameter of its companion with the free variable parameter of its corresponding $\bar{\beta}$-translation instance. Notice that this is possible, since the parameter variable is unique for the $\bar{\beta}$-translation of each $\rtcOperator$ sub-formula. We can then form a cycle in the {\ca} pre-proof. 
  
  Since the {\cortc} traces for each rule are simulated by the {\ca} derived rules, for each trace following a (finite or infinite) path in the {\cortc\plusArith} pre-proof there is a trace following the corresponding path in the {\ca} pre-proof containing a progression point for each progression point in the {\cortc\plusArith} trace. From this it follows that if the {\cortc\plusArith} pre-proof satisfies the ({\ortc}) global trace condition, then its translation satifies the {\ca} global trace condition.
  Finally, we derive $\sequent{\Gamma^{\beta}}{\Delta^{\beta}}$ from $\sequent{\Gamma^{\bar{\beta}}}{\Delta^{\beta}}$ as described above.
  \qed
\end{proof}

\begin{restatable}{theorem}{CRTCEquivCA}
  \label{lem:CA-CRTC}
 $\vdash_{\cortc\plusArith} \sequent{\Gamma}{\Delta}$ iff ${} \vdash_{\ca} \sequent{\Gamma^{\beta}}{\Delta^{\beta}}$.   
\end{restatable}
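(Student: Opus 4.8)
The plan is to prove the biconditional by simply combining the two directional results already established earlier in this subsection. The forward direction, namely that $\vdash_{\cortc\plusArith} \sequent{\Gamma}{\Delta}$ implies $\vdash_{\ca} \sequent{\Gamma^{\beta}}{\Delta^{\beta}}$, is exactly \Cref{lem:CRTC+A-Beta-in-CA}. The reverse direction, that $\vdash_{\ca} \sequent{\Gamma^{\beta}}{\Delta^{\beta}}$ implies $\vdash_{\cortc\plusArith} \sequent{\Gamma}{\Delta}$, is exactly \Cref{cor:CA-Beta-in-CRTC+A}. Thus the theorem follows immediately, with no further construction required: I would simply invoke the two results in sequence, one for each direction of the `iff'.

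The real content, of course, lies in those two prior results, and it is worth recalling where the difficulty resides so that the reader understands why the theorem is not circular. For the reverse direction, the key was \Cref{lem:CA-in-CRTC+A}, which gives a trace-aware $\ast$-translation of each {\ca} rule into {\cortc\plusArith}, mirroring {\ca} traces (progressing and non-progressing) by traces on the formulas $\pred{N}{t} \equiv (\rtcformula{w}{u}{u = \suc{w}})(0, t)$; this, combined with the formula-level equivalence $\vdash_{\cortc\plusArith} \varphi \Leftrightarrow \varphi^{\beta}$ from \Cref{lem:CRTC+A:formulas} and cuts, yields \Cref{cor:CA-Beta-in-CRTC+A}. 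For the forward direction, the work was in \Cref{lem:CRTC+A-Beta-in-CA}, where the parameterised $\bar{\beta}[n]$-translation introduces a free `stage' variable that can be traced through the {\ca} derivation, with the simulation of Rule \eqref{eq:RTC-cyc} (\cref{fig:CRTC+A-Beta-in-CA:TraceAwareDerivation:Subderivation,fig:CRTC+A-Beta-in-CA:TraceAwareDerivation}) supplying the crucial progression point at the boxed formula $m < n$.

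Since both halves are in hand, the theorem is a one-line corollary, so I would expect the main obstacle to have been entirely discharged in the preceding lemmas rather than here. If anything, the only subtlety to flag is that these translations are mutually consistent at the level of the $\beta$-encoding: the forward direction produces the $\bar{\beta}$-translated sequent and then derives $\sequent{\Gamma^{\beta}}{\Delta^{\beta}}$ from $\sequent{\Gamma^{\bar{\beta}}}{\Delta^{\beta}}$, matching precisely the $\beta$-translated endpoints that the reverse direction consumes, so no mismatch of translations arises when the two implications are composed into the equivalence.

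\begin{proof}
  The left-to-right implication is \Cref{lem:CRTC+A-Beta-in-CA}, and the right-to-left implication is \Cref{cor:CA-Beta-in-CRTC+A}. Together these yield the stated equivalence.
  \qed
\end{proof}
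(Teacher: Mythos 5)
Your proof is correct and matches the paper's own proof exactly: the paper also establishes the theorem by simply citing \Cref{lem:CRTC+A-Beta-in-CA} for the forward direction and \Cref{cor:CA-Beta-in-CRTC+A} for the reverse. Your additional remarks about where the real work lies (and that the two translations compose consistently at the $\beta$-level) are accurate but not needed for the proof itself.
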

\begin{proof}
  By \Cref{cor:CA-Beta-in-CRTC+A,lem:CRTC+A-Beta-in-CA}.
  \qed
\end{proof}

These results allow us to show an equivalence between the finitary and cyclic systems for {\TC} with arithmetic.

\begin{restatable}{theorem}{ArithmeticEquivalence}
  \label{thm:ArithmeticEquiv}
  $\rtc\plusArith$ and $\cortc\plusArith$ are equivalent. 
\end{restatable}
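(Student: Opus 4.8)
The plan is to obtain the equivalence as a straightforward chaining of the two $\beta$-translation equivalences already established, together with the known equivalence between {\ca} and {\pa}. Concretely, I would show that for every sequent $\sequent{\Gamma}{\Delta}$ the following four provability statements are pairwise equivalent: $\vdash_{\rtc\plusArith} \sequent{\Gamma}{\Delta}$, $\vdash_{\pa} \sequent{\Gamma^{\beta}}{\Delta^{\beta}}$, $\vdash_{\ca} \sequent{\Gamma^{\beta}}{\Delta^{\beta}}$, and $\vdash_{\cortc\plusArith} \sequent{\Gamma}{\Delta}$. Reading this chain from the first statement to the last yields exactly the two-way inclusion between the finitary and cyclic arithmetical systems that the theorem asserts.

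First, I would invoke \Cref{lem:PA-RTC}\eqref{lem:PA-RTC:sequent} to pass between $\rtc\plusArith$ and {\pa} via the $\beta$-translation, giving the equivalence of the first two statements. Next, the equivalence of the second and third statements is precisely the known result that the cyclic arithmetic system {\ca} and the explicit Gentzen-style system {\pa} prove the same sequents \cite{Simpson2017}, applied to the already-translated sequent $\sequent{\Gamma^{\beta}}{\Delta^{\beta}}$. Finally, \Cref{lem:CA-CRTC} relates {\ca} and $\cortc\plusArith$ through the same $\beta$-translation, yielding the equivalence of the third and fourth statements. Composing the three biconditionals gives that $\vdash_{\rtc\plusArith} \sequent{\Gamma}{\Delta}$ holds if and only if $\vdash_{\cortc\plusArith} \sequent{\Gamma}{\Delta}$ does.

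I expect no substantive obstacle at this stage, since essentially all of the genuine work has already been carried out in the constituent lemmas. In fact the forward inclusion $\rtc\plusArith \subseteq \cortc\plusArith$ is independently immediate from \Cref{easyinclusion}: adding the common arithmetic axioms to both systems preserves the inclusion $\cortc \supseteq \rtc$, and the {\PA} axioms together with the {\rtcOperator}-characterisation of the naturals are added identically to each system. Hence the real content of the theorem lies in the converse $\cortc\plusArith \subseteq \rtc\plusArith$, which is exactly what the detour through {\ca} and {\pa} supplies. The one external ingredient to cite carefully is the equivalence of {\ca} and {\pa}; one must check it is applied to a sequent in the pure language of arithmetic, which is guaranteed because $\Gamma^{\beta}$ and $\Delta^{\beta}$ are by construction {\PA}-formulas. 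The only remaining subtlety is bookkeeping: the intermediate sequent $\sequent{\Gamma^{\beta}}{\Delta^{\beta}}$ must be literally the same on both sides of the join, which holds because \Cref{lem:PA-RTC} and \Cref{lem:CA-CRTC} employ the identical $\beta$-translation.
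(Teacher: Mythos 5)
Your proposal is correct and follows essentially the same route as the paper: the forward inclusion via \Cref{easyinclusion}, and the converse by chaining \Cref{lem:CA-CRTC}, the {\ca}--{\pa} equivalence of \cite{Simpson2017}, and \Cref{lem:PA-RTC}\eqref{lem:PA-RTC:sequent} through the common $\beta$-translated sequent $\sequent{\Gamma^{\beta}}{\Delta^{\beta}}$. The only cosmetic difference is that you present the argument as a chain of biconditionals (so the forward direction follows twice over), whereas the paper uses \Cref{easyinclusion} for one direction and the chain only for the other.
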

\begin{proof}
  The fact that $\rtc\plusArith \subseteq \cortc\plusArith$ follows immediately from \Cref{easyinclusion}. For the converse, suppose $\sequent{\Gamma}{\Delta}$ is provable in $\cortc\plusArith$. By \Cref{lem:CA-CRTC} we get that $ \vdash_{\ca} \sequent{\Gamma^{\beta}}{\Delta^{\beta}}$. Using the equivalence between {\ca} and {\pa}, we obtain $ \vdash_{\pa} \sequent{\Gamma^{\beta}}{\Delta^{\beta}}$.\lcnote{can we use Simpson's result for languages extending PA's? NOT SURE} Then we conclude using \Cref{lem:PA-RTC}\eqref{lem:PA-RTC:sequent}.
  \qed
\end{proof}


Note that the result above can easily be extended to show that adding the same set of additional axioms to both $\rtc\plusArith$ and $\cortc\plusArith$ results in equivalent systems.
Also note that in the systems with pairs, to embed arithmetics there is no need to explicitly include addition and its axioms. Thus, by only including the signature $\{ 0,\fn{s} \}$ and the corresponding axioms for it we can obtain that $\rtcPairs\plusArith$ and $\cortcPairs\plusArith$ are equivalent. 

In \cite{Berardi2017b}, the equivalence result of \cite{Simpson2017} was improved to show it holds for any set of inductive predicates containing the natural number predicate $\pred{N}$. 
On the one hand, our result goes beyond that of \cite{Berardi2017b} as it shows the equivalence for systems with a richer notion of inductive definition, due to the expressiveness of {\TC}. On the other hand, {\TC} does not support restricting the set of inductive predicates, i.e.~the {\rtcOperator} operator may operate on any formula in the language. To obtain a finer result which corresponds to that of \cite{Berardi2017b} we need to further explore the transformations between proofs in the two systems. This is left for future work. 

\subsubsection{The General Case}
\label{hydra}

As mentioned, the general equivalence conjecture between $\lkid$ and $\colkid$ was refuted in \cite{Berardi2017}, by providing a concrete example of a statement which is provable in the cyclic system but not in the explicit one. The statement (called 2-Hydra) involves a predicate encoding a binary version of the `hydra' induction scheme for natural numbers given in \cite{KirbyParis82}, and expresses that every pair of natural numbers is related by the predicate.\footnote{In fact, the falsifying Henkin model constructed in \cite{Berardi2017} also satisfies the `$0$-axiom' ($\forall x . 0 \neq \suc{x}$), and the `$\suc{}$-axiom' ($\forall x, y . \suc{x} = \suc{y} \rightarrow x = y$) stipulating injectivity of the successor function, and so the actual counter-example to equivalence is the sequent: $\sequent{\text{$(0,\suc{})$-axioms}}{\text{2-Hydra}}$.} 
However, a careful examination of this counter-example reveals that it only refutes a strong form of the conjecture, according to which both systems are based on the same set of productions. In fact, already in \cite{Berardi2017} it is shown that if the explicit system is extended by another inductive predicate, namely one expressing the $\leq$ relation, then the 2-Hydra counter-example becomes provable.
Therefore, the less strict formulation of the question, namely whether for any proof in $\colkid_\phi$ there is a proof in $\lkid_{\phi'}$ for some $\phi' \supseteq \phi$, has not yet been resolved.
Notice that in {\TC} the equivalence question is of this weaker variety, since the $\rtcOperator$ operator `generates' all inductive definitions at once. That is, there is no \emph{a priori} restriction on the inductive predicates one is allowed to use. Indeed, the 2-Hydra counter-example from \cite{Berardi2017} can be expressed in $\rtclogic$ and proved in $\cortc$. However, this does not produce a counter-example for {\TC} since it is also provable in $\rtc$, due to the fact that $s \leq t$ is definable via the {\rtcOperator} formula $(\rtcformula{w}{u}{\suc{w} = u})(s, t)$.

Despite our best efforts, we have not yet managed to settle this question, which appears to be harder to resolve in the {\TC} setting.
One possible approach to solving it is the semantical one, i.e.~exploiting the fact that the explicit system is known to be sound w.r.t. Henkin semantics. This is what was done in \cite{Berardi2017}. Thus, to show strict inclusion one could construct an alternative statement that is provable in $\cortc$ whilst also demonstrating a Henkin model for {\TC} that is not a model of the statement. However, constructing a {\TC} Henkin model appears to be non-trivial, due to its rich inductive power. In particular, it is not at all clear whether the structure that underpins the {\lkid} counter-model for 2-Hydra admits a Henkin model for TC. Alternatively, to prove equivalence, one could show that $\cortc$ is also sound w.r.t. Henkin semantics. Here, again, proving this does not seem to be straightforward.

In our setting, there is also the question of the inclusion of {\cortc} in {\ncortc}, which amounts to the question of whether overlapping cycles can be eliminated.
Moreover, we can ask if {\ncortc} is included in {\rtc}, independently of whether this also holds for {\cortc}. Again, the semantic approach described above may prove fruitful in answering these questions.

\begin{figure}[t]
  \begin{center}
  \vspace{-3em}
  \resizebox{\textwidth}{!}{
  \begin{tikzpicture}[
      roundnode/.style={circle, draw=violet!60, fill=violet!5, very thick, minimum size=10mm}, 
      squarednode/.style={rectangle, draw=violet!60, fill=violet!5, very thick, minimum size=8mm}, 
      semnode/.style={rectangle, draw=orange!60, fill=orange!5, very thick, minimum size=8mm}, 
      lkidnode/.style={rectangle, draw=green!60, fill=green!5, very thick, minimum size=8mm},
      ,>=stealth']
    
    \node[semnode]     at ( 5, 2) (standard)      {\parbox{2.75cm}{\centering{standard validity}}}; 
    \node[semnode]     at (-9, 2) (adstandard)    {\parbox{2.75cm}{\centering{admissible standard validity}}}; 
    \node[semnode]     at ( 5,-4) (henkin)        {\parbox{2.75cm}{\centering{Henkin validity}}};
    \node[semnode]     at (-9,-4) (adhenkin)      {\parbox{2.75cm}{\centering{admissible{\\}Henkin validity}}};
    

    \node[squarednode] at ( 0, 2) (ortc)          {\parbox{2.75cm}{\centering{(cut-free)\\\ortc}}}; 
    \node[squarednode] at (-4, 2) (ortc-p)        {\parbox{2.75cm}{\centering{(cut-free)\\\ortcPairs}}}; 
    \node[squarednode] at (-4, 0) (cortc-p)       {\parbox{2.75cm}{\centering{\cortcPairs}}};
    \node[squarednode] at ( 0, 0) (cortc)         {\parbox{2.75cm}{\centering{\cortc}}};
    \node[squarednode] at (-4,-2) (ncortc-p)      {\parbox{2.75cm}{\centering{\ncortcPairs}}};
    \node[squarednode] at ( 0,-2) (ncortc)        {\parbox{2.75cm}{\centering{\ncortc}}};
    \node[squarednode] at ( 0,-4) (rtc)           {\parbox{2.75cm}{\centering{\rtc}}};
    \node[squarednode] at (-4,-4) (rtc-p)         {\parbox{2.75cm}{\centering{\rtcPairs}}};
    
    \node[squarednode] at (-8, 0) (arith-cortc-p) {\parbox{2.75cm}{\centering{\cortcPairs\plusArith}}};
    \node[squarednode] at ( 4, 0) (arith-cortc)   {\parbox{2.75cm}{\centering{\cortc\plusArith}}};
    \node[squarednode] at (-8,-2) (arith-rtc-p)   {\parbox{2.75cm}{\centering{\rtcPairs\plusArith}}};
    \node[squarednode] at ( 4,-2) (arith-rtc)     {\parbox{2.75cm}{\centering{\rtc\plusArith}}};

    \draw
      ([xshift=-2em] henkin.north east) 
        edge[bend right,->] 
       ([xshift=-2em] standard.south east);
    \draw
      ([xshift=2em] adhenkin.north west) 
        edge[bend left,->]
      ([xshift=2em] adstandard.south west);
    \draw
      (standard.north)
        edge[bend right,->]
      (adstandard.north);
    \draw
      (henkin.south)
        edge[bend left,->]
      (adhenkin.south);
    
    \draw
      ([yshift=-0.5em] ortc.east) 
        edge[->] node[below, midway] {\small{Thm.~\ref{sound}}}
      ([yshift=-0.5em] standard.west);
    \draw
      ([yshift=0.5em] standard.west)
        edge[->] node[above] {\small{Thm.~\ref{thm:ORTC:Cut-free-complete}}}
      ([yshift=0.5em] ortc.east);
    \draw
      ([yshift=0.5em] rtc.south east)
        edge[<->] node[below,midway] {\small{Thm.~\ref{rtc_comp}}}
      ([yshift=0.5em] henkin.south west);
    \draw[<->]
      (ortc-p) 
        edge node[above] {\small{Thm. \ref{rtcpairs_comp}}}
      (adstandard);
    \draw
      ([yshift=0.5em] rtc-p.south west) 
        edge[<->] node[below] {\small{Thm. \ref{rtcpairs_comp}}}
      ([yshift= 0.75em] adhenkin.south east);


    \draw[->]
      (cortc) -- (cortc-p) 
        node[below,midway] {\small{$\subseteq$}};
    \draw[->]
      (ortc) -- (ortc-p)
        node[below,midway] {\small{$\subseteq$}};
    \draw[->]
      (ncortc) -- (ncortc-p)
        node[below,midway] {\small{$\subseteq$}};
    \draw[->]
      (rtc) -- (rtc-p)
        node[below,midway] {\small{$\subseteq$}};
    \draw[->]
      (cortc) -- (ortc)
        node[right,midway] {\small{Cor.~\ref{thm:ORTC:Cut-elim}}};
    \draw[->]
      (cortc-p) -- (ortc-p)
        node[left,midway] {\small{Cor.~\ref{thm:ORTC:Cut-elim}}};
    \draw[->]
      (ncortc) -- (cortc)
        node[right,midway] {\small $\subseteq$};
    \draw[->]
      (ncortc-p) -- (cortc-p)
        node[left,midway] {\small $\subseteq$};
    \draw[->]
      (rtc) -- (ncortc)
        node[right,midway] {\small{Thm.~\ref{normal}}};
    \draw[->]
      (rtc-p) -- (ncortc-p)
        node[left,midway] {\small{Thm.~\ref{normal}}};
      
    \draw[dashed, color=teal]
      (cortc-p.south west)
        edge[bend right,->] node[left] {\small{?}}
      (ncortc-p.north west);
    \draw[dashed, color=teal]
      (cortc.south east)
        edge[bend left,->] node[right] {\small{?}}
      (ncortc.north east);
    \draw[dashed, color=teal]
      (ncortc-p.south west)
        edge[bend right,->] node[left] {\small{?}}
      (rtc-p.north west);
    \draw[dashed, color=teal]
      (ncortc.south east)
        edge[bend left,->] node[right] {\small{?}}
      (rtc.north east);
      
    \draw[<->]
      (arith-rtc) -- (arith-cortc)
        node[right,midway] {\small{Thm.~\ref{thm:ArithmeticEquiv}}};
    \draw[,<->]
      (arith-rtc-p.north) -- (arith-cortc-p.south)
        node[left,midway] {\small{Thm.~\ref{thm:ArithmeticEquiv} (ext)}};
    \draw[->]
      (cortc-p) -- (arith-cortc-p)
        node[above,midway] {\small{$\subseteq$}};
    \draw[->]
      (cortc) -- (arith-cortc)
        node[above,midway] {\small{$\subseteq$}};
    \draw
      (rtc.east) 
        edge[bend right,->] node[yshift=0.5em ,above] {\small{$\subseteq$}}
      (arith-rtc.south);
    \draw
      (rtc-p.west)
        edge[bend left,->] node[yshift=0.5em ,above] {\small{$\subseteq$}}
      (arith-rtc-p.south);
  \end{tikzpicture}}
  \vspace{-3em}
  \end{center}
  \caption{Diagrammatic Summary of our Results.}
  \label{fig:proof-system-relationships}
\end{figure}

\section{Conclusions and Future Work}
\label{sec:FutureWork}

We developed a natural infinitary proof system for transitive closure logic which is cut-free complete for the standard semantics and subsumes the explicit system. We further explored its restriction to cyclic proofs which provides the basis for an effective system for automating inductive reasoning. In particular, we syntactically identified a subset of cyclic proofs that is Henkin-complete.
A summary of the proof systems we have studied in this paper, and their interrelationships, is shown in \Cref{fig:proof-system-relationships}. Where an edge between systems is labelled with an inclusion $\subseteq$, this signifies that a proof in the source system is already a proof in the destination system.

As mentioned in the introduction, as well as throughout the paper, this research was motivated by other work on systems of inductive definitions, particularly the {\lkid} framework of \cite{brotherston2010sequent}, its infinitary counterpart {\olkid}, and its cyclic subsystem {\colkid}. In terms of the expressive power of the underlying logic, {\TC} (assuming pairs) subsumes the inductive machinery underlying {\lkid}. This is because for any inductive predicate $P$ of {\lkid}, there is an {\rtclogic} formula $\psi$ such that for every standard admissible structure $M$ for {\rtclogic}, $P$ has the same interpretation as $\psi$ under $M$. This is due to Thm.~3 in \cite{AvronTC03} and the fact that the interpretation of $P$ must necessarily be a recursively enumerable set. As for the converse inclusion, for any positive $\rtclogic$ formula there is a production of a corresponding $\lkid$ inductive definition. 
However, the {\rtcOperator} operator can also be applied on complex formulas (whereas {\lkid} productions only consider atomic predicates). This indicates that {\TC} might be more expressive. It was noted in \cite[p.~1180]{brotherston2010sequent} that complex formulas may be handled by stratifying the theory of {\lkid}, similar to \cite{Martin-Lof71}, but the issue of relative expressiveness of the resulting theory is not addressed. While we strongly believe it is the case that {\TC} is strictly more expressive than the logic of {\lkid}, proving so is left for future work. Also left for future research is establishing the comparative status of the corresponding formal proof systems.

In addition to the open question of the (in)equivalence of {\rtc} and {\cortc} in the general case, discussed in \Cref{cortc-rtc}, several other questions and directions for further study naturally arise from the work of this paper.
An obvious one would be to implement our cyclic proof system in order to investigate the practicalities of using {\TC} logic to support automated inductive reasoning.
More theoretically it is already clear that {\TC} logic, as a framework, diverges from existing systems for inductive reasoning (e.g.~{\lkid}) in interesting, non-trivial ways. The uniformity provided by the transitive closure operator may offer a way to better study the relationship between implicit and explicit induction, e.g.~in the form of cuts required in each system, or the relative complexity of proofs that each system admits.
Moreover, it seems likely that \emph{co}inductive reasoning can also be incorporated into the formal system. Determining whether, and to what extent, these are indeed the case is left for future work.

\bibliographystyle{splncs03}
\bibliography{cyclic_tc-TR}

\end{document}